\documentclass{article}
\usepackage[letterpaper,head=13pt, marginparwidth=6pc,heightrounded]{geometry}
\usepackage{booktabs}
\usepackage[ruled]{algorithm2e}
\usepackage{amsthm}
\usepackage{amsmath}
\usepackage{thmtools, thm-restate}
\usepackage[bookmarksnumbered,unicode]{hyperref}
\usepackage{tikz}
\usetikzlibrary{positioning}

\usepackage{graphicx}
\usepackage{amsfonts}

\SetAlFnt{\small}
\SetAlCapFnt{\small}
\SetAlCapNameFnt{\small}
\SetAlCapHSkip{0pt}
\IncMargin{-\parindent}

\newtheorem{theorem}{Theorem}
\newtheorem{definition}{Definition}
\newtheorem{remark}{Remark}
\newtheorem{lemma}{Lemma}

\newtheorem{example}{Example}

\newtheorem{take-away}{Take Away}
\newtheorem{informal-theorem}{Informal Theorem}
\newtheorem{claim}{Claim}
\newcommand{\lw}{\text{LW}}

\newcommand{\x}{\mathbf{x}}

\usepackage{natbib}
\setcitestyle{authoryear}
\title{Revenue Guarantees in Autobidding Platforms}
\author{Ioannis Caragiannis, Anders Bo Ipsen, Stratis Skoulakis \\ {\small Department of Computer Science, Aarhus University}}
\date{February 16, 2026}

\begin{document}

\begin{titlepage}
  \maketitle
  \vspace{-1cm}
  \begin{abstract}
    Motivated by autobidding systems in online advertising, we study revenue maximization in markets with divisible goods and budget-constrained buyers with linear valuations. Our aim is to compute a single price for each good and an allocation that maximizes total revenue. We show that the First-Price Pacing Equilibrium (FPPE) guarantees at least half of the optimal revenue, even when compared to the maximal revenue of buyer-specific prices. This guarantee is particularly striking in light of our hardness result: we prove that revenue maximization under individual rationality and single-price-per-good constraints is APX-hard.

    We further extend our analysis in two directions: first, we introduce an online analogue of FPPE and show that it achieves a constant-factor revenue guarantee, specifically a $1/4$-approximation; second, we consider buyers with concave valuation functions, characterizing an FPPE-type outcome as the solution to an Eisenberg–Gale-style convex program and showing that the revenue approximation degrades gracefully with the degree of nonlinearity of the valuations.
  \end{abstract}
  \setcounter{tocdepth}{1}
  \tableofcontents

\end{titlepage}

\section{Introduction}

The bidding process in online advertising markets has become increasingly complex~\cite{survey}.
On modern platforms, advertisers typically participate in a very large number of auctions
simultaneously, each corresponding to a different keyword.
An advertiser may be interested in multiple keywords over a given time window, may
value these keywords differently, and is almost always subject to a global budget
constraint that its limits total spending over the duration of a campaign.
The advertiser’s objective is to allocate this budget across keywords and time so as
to maximize utility subject to these constraints~\cite{castiglioni22a,castiglioni2024online,LPS24a}.

Historically, advertising platforms sold keywords through independent auctions—often
first-price or second-price—run separately for each keyword and each impression.
As the number of auctions and constraints grew, however, this approach became
impractical for advertisers to manage directly.
To address this complexity, platforms introduced \emph{autobidding} systems, in which
advertisers specify high-level campaign goals and the platform bids on their behalf
in individual auctions~\cite{D21,LS23,PM23,LPS24a,G23a,DGPV24,deng2024efficiency}.
Typical high-level inputs provided by advertisers include: \emph{What is the starting and the end date of the campaign? What is the total budget? What is the maximum amount that the advertiser is willing to pay for an impression of a specific keyword?}
\smallskip

Given this information, an autobidding system serves as a proxy bidder for the advertiser.
A dominant paradigm for implementing autobidding strategies is through the use of
\emph{pacing multipliers}: the platform assigns each advertiser a multiplier in $[0,1]$ that scales all of the advertisers' values over the different keywords~\cite{CKPSSMW19,LKLSSMZ25,LK24}.
These multipliers are chosen so that the total spending of the each advertiser respects their overall budget.
\smallskip
\smallskip

\noindent
\textbf{From Autobidding to Divisible Goods.}
In their seminal work,
\cite{CKPSSMW19} reduce the problem of designing an autobidding mechanism to the problem of selling \textit{multiple divisible goods} to \textit{multiple budget-constrained buyers} via introducing the notion of \textit{First-Price Pacing Equilibrium}~(FPPE) (also considered in \cite{BCIJEM07}).

In particular, FPPE models each keyword as a \emph{divisible good}, an assumption well-justified by the high volume of impressions that allows supply to be fractionally partitioned among advertisers. Each advertiser reports: $i)$ its total budget and $ii)$ the maximum value it is willing to pay for an impression of each keyword. The FPPE then determines a fractional allocation of impressions and a market-clearing price for each keyword. Crucially, the FPPE ensures the following utility-maximization property for each advertiser:
\smallskip
\smallskip

\noindent \textit{The allocation assigned to each advertiser is the utility-maximizing allocation with respect to the keyword prices subject to the budget declared by the advertiser}.
\smallskip
\smallskip

\noindent \cite{CKPSSMW19} perform empirical evaluations using real-world data and conclude that advertisers have a very small incentive to misreport their actual valuations and budget. More recently, \cite{YKLS24} establish that the Price of Anarchy (PoA) with respect to the Liquid Social Welfare is at most $4$ for any mixed Nash Equilibrium. These results motivate the following natural question:

\begin{center}
  \emph{How does the revenue of FPPE compare to the optimal revenue achievable under individual rationality and budget feasibility constraints?}
\end{center}

\noindent
\subsection{Our Contributions and Results.}
Taking a step back from the FPPE, consider the setting of a profit-seeking seller offering multiple divisible goods to budget-constrained \textit{non-strategic} buyers. To maximize revenue, the seller would calculate the allocation and payment vectors that maximize the attained revenue subject to individual rationality and budget spending constraints. This optimization can be efficiently computed by solving an suitable linear program.
\begin{example}[Optimal revenue with variable unit prices]\label{ex:1}
  Consider a single divisible good and two buyers.
  Buyer~1 has valuation $v_{11} = \$10$ and budget $B_1 = \$6$, while buyer~2 has valuation
  $v_{21} = \$4$ and budget $B_2 = \$4$.
  The revenue-maximizing allocation (respecting budgets and individual rationality)
  assigns $60\%$ of the good to buyer~1 for a payment of \(\$6\), and $40\%$ of the good to
  buyer~2 and for a payment of \(\$1.6\).
  The total revenue is \(\$7.6\).
\end{example}

\noindent This example highlights a significant drawback of the revenue-maximizing mechanism:
\smallskip

\quad \quad \textit{It assigns different unit prices to different buyers for the exact same good!}
\smallskip

\noindent Notice in Example~\ref{ex:1}, the unit price for buyer $1$ is $\$10$, whereas the unit price for buyer~2 is only  $\$4$. Different unit prices across buyers based on their \textit{declared} valuations and budgets is highly undesirable and, in many settings, may even lead to legal or regulatory issues.

Consequently, motivated by transparency, fairness, and implementability, it is reasonable to assume that in most realistic settings the seller is restricted to \emph{fixed unit prices}, meaning that each good must be sold at the same per-unit price to all buyers.

\begin{example}[Optimal revenue with fixed unit prices]\label{ex:2}
  In the previous example, maximizing revenue under a fixed unit price results in setting
  the unit price at \(\$6\) and allocating the entire good to buyer~1. The maximum revenue achievable with fixed unit price is therefore $\$6 < \$ 7.6$.
\end{example}

\noindent Although the problem of maximizing revenue from divisible items among budget-constrained buyers using \textit{fixed unit prices} is a fundamental problem, particularly given its applications to autobidding systems, it has not, to the best of our knowledge, been considered  previously.

Our first contribution is to establish that this problem is APX-hard - meaning that a constant-factor approximation is the best we can hope for. Our proof leverages a reduction from the 3D-2-Matching problem. This inapproximability result contrasts sharply with revenue maximization under variable unit prices, which can be solved efficiently via linear programming.

In light of this negative result, FPPE emerges as a natural heuristic for revenue
maximization under fixed unit prices. This is because FPPE always yields a fixed unit-price per good while it can be efficiently computed via solving an suitable convex program. Our main contribution is establishing that FPPE always achieves at least $50\%$ of the optimal revenue under variable unit prices!

\begin{take-away}
  FPPE achieves at least $50\%$ of the optimal revenue of the variable-price setting while restricted to fixed unit prices. Given that optimizing revenue under fixed prices is APX-hard, FPPE serves as the best polynomial-time approximation algorithm that we currently know.
\end{take-away}

Surprisingly enough, $50\%$ of the revenue is an unavoidable price that the seller has to pay for using \textit{fixed} instead of \textit{variable} unit prices. We demonstrate this tightness by constructing instances where the optimal revenue under fixed unit prices is exactly $50\%$ of the optimal revenue under variable unit prices.

We additionally extend our revenue guarantees of FPPE to concave valuation functions. While the notion of FPPE is tailored to linear valuation functions, we show that its natural generalizations to concave valuation functions admit revenue guarantees that depend solely on the form of the valuation functions.
\smallskip
\smallskip

\noindent \textbf{Extension to the Online Case.} In many digital advertising platforms, advertisers specify budgets, and \textit{start} and \textit{end} dates for their campaigns in addition to their valuations. The overall declared budget refers to their total spending over the whole campaign. See for example Fig.~$1$ in \cite{CKPSSMW19}\footnote{beginning of page~3, \textcolor{blue}{https://arxiv.org/pdf/1811.07166}} exhibiting the interface for creating a Facebook ad\footnote{\textcolor{blue}{\url{https://www.facebook.com/business/ads}}}. Motivated by this, we introduce and study an \textit{online version} of the problem where buyers arrive online and are active during specific time-intervals.

We model time as a sequence of $T$ discrete days. On each day $t$, the seller must fractionally allocate the goods among the currently active buyers. Each buyer $i$ specifies an interval $[s_i,t_i]$ at which it is active. Importantly, on day $t$, the seller is aware only of buyers with $s_i \leq t$; buyers whose intervals begin in the future remain unknown.

First-Price Pacing Equilibrium (FPPE), originally defined for static settings, extends naturally to the online setting: on each day $t$, the platform computes an FPPE using the buyers with active campaigns and their remaining budgets, and allocates that day's impressions accordingly. The second main contribution of our work is establishing that the above online algorithm is $1/4$-competitive with respect to the optimal revenue achievable with variable unit prices.

\begin{take-away}
  In case of online arrivals, computing an FPPE at each time-step based on the active buyers (with respect to their remaining budgets) leads to fixed unit-prices at each time-step and constitutes a $1/4$-competitive algorithm.
\end{take-away}

\subsection{Related Work}
Apart from the already mentioned works, our work also relates to \cite{BCIJEM07} as the equilibrium state of budget adjusted dynamics. \cite{DP14} consider the allocation rule of FPPE in the single-item case and derive an $2$-approximation truthful mechanism on the Liquid Welfare. \cite{chen21} show that the problem of finding a pacing equilibrium for a second-price auction is PPAD complete. \cite{CK22} examine pacing equilibria and prove that computing the revenue-maximizing or social welfare maximizing equilibrium is NP-hard.

A growing body of work has focused on the efficiency properties of various autobidding settings, typically through price-of-anarchy guarantees. \cite{LPS24a} provide an optimal bidding strategy in the case of multi-slot auction and show of PoA of $2$ with respect to values in case the autobidders come to an equilibrium. \cite{AMP23} show that neither FPA and SPA are incentive compatible with respect to autobidding. \cite{BDJMS21} shows that reserve prices can improve both revenue and social welfare in case of value maximizing buyers with return on investment constraints. \cite{BYJVS21} characterize optimal mechanisms for selling a good to one of multiple buyers with return on spending constraints. \cite{D24} study the price of anarchy for generalized second-price auctions in case of value maximizers. \cite{deng2024efficiency} study the price of anarchy when buyers are either utility of value maximizers. \cite{CKKLST25} study PoA bounds on the Liquid Social Welfare of budget-constrained bidders in simultaneous first-price auctions.
See also \cite{survey} for a survey on the autobidding literature.

Another line explores online learning and budget management strategies in the context of repeated auctions as well as their convergence properties. \cite{BKMV17} explore various budget management mechanisms and examine their effect on the sellers revenue and the buyers' utility while \cite{balseiro20a} use dual mirror descent to design budget management strategies. \cite{castiglioni2024online,castiglioni22a} develop algorithms online learning algorithms under budget spending and ROI constraints. \cite{BY19} study pacing strategies and show that in large markets they form an approximate Nash Equilibrium. \cite{FT25} provide liquid welfare guarantees in case of no-regret algorithms in repeated auctions with budgets. \cite{LPSSZ24} examine the emergence of complex dynamics in autobidding systems while \cite{LS23} provide conditions under which the iterated best-response converges in certain autobidding systems converges to equilibrium.

Our work also relates to the works exploring selling multiple goods. \cite{AC04} consider clock auctions to sell divisible goods. \cite{PK16} study show that the pay-as-bid auction in case of single divisible good admits a unique Bayes Pure Nash Equilibrium.

\section{Preliminaries and Paper Organization}\label{s:prelim}
Consider a seller willing to sell $m$ perfectly divisible goods to $n$ budget-constrained buyers. Each buyer $i\in [n]$ has an overall budget $B_i \geq 0$ and valuations $v_{ij} \geq 0$ for each good $j\in [m]$. Since the goods are divisible, the seller can assign to buyer $i \in [n]$ a fraction $x_{ij} \in [0,1]$ of good $j \in [m]$. In this case buyer $i \in [n]$ receives a value of $v_{ij} \cdot x_{ij}$. All buyers have \textit{individual rationality} meaning that $b_{ij} \leq v_{ij} \cdot x_{ij}$ where $b_{ij} > 0$ is the amount of money buyer $i \in [n]$ pays for good $i \in [n]$.

A seller that wants to maximize its revenue needs to choose the buyers' allocations/payments vectors $(\mathbf{x},\mathbf{b})$ as the solution of the linear program~(LP) below:
\begin{align*}
  \text{(RMVUP)} \qquad
  \max \quad
  & \sum_{i=1}^n \sum_{j=1}^m b_{ij}
  &                                  &                                                                             \\[6pt]
  \text{s.t.} \quad
  & \sum_{i=1}^n x_{ij} \le 1
  &                                  & \forall j \in [m]                                                           \\[6pt]
  & \sum_{j=1}^m b_{ij} \le B_i
  &                                  & \forall i \in [n] \quad \text{(budget constraint)}                          \\[6pt]
  & b_{ij} \le v_{ij} \cdot x_{ij}
  &                                  & \forall i \in [n],\ \forall j \in [m] \quad \text{(individual rationality)} \\[6pt]
  & x_{ij} \ge 0,\quad b_{ij} \ge 0
  &                                  & \forall i \in [n],\ \forall j \in [m]
\end{align*}

The objective value $\sum_{i\in [n]}\sum_{j \in [m]} b_{ij}$ denotes the seller's revenue. At the same time, the LP above guarantees that both individual rationality and budget feasibility for every buyer. We refer to the latter LP as \textit{Revenue Maximization with Variable Unit Prices} (RMVUP) since it can lead to \textit{different unit prices} across buyers.

The unit-price of an buyer $i \in [n]$ for good $j \in [m]$ is defined as,
\[ p_{ij} := \frac{b_{ij}}{x_{ij}},\]
namely the amount of money $b_{ij}$ buyer $i$ paid for the obtaining $x_{ij} \in [0,1]$ percent of good $j$. The solution $(\mathbf{x},\mathbf{b})$ of RMVUP leads to different unit prices across buyers, $p_{ij} \neq p_{i'j}$ (see Example~\ref{ex:1}). Charging different buyers with different unit-prices is a highly undesirable situation that raises fairness, transparency and even legal issues.

\subsection{Revenue Maximization via Fixed Unit Prices}\label{ss:RMFUP}
Due the reasons outlined above, in most real-world situations a seller will try to maximize its revenue while keeping a \textit{fixed price} $p_j > 0$ for each good $j \in [m]$. Therefore the seller needs to consider the more restrictive optimization problem in which each good $j \in [m]$ is sold at a single price $p_j \geq 0$, independent of the buyer. The latter can be done via solving the following optimization problem to which we refer as \textit{Revenue Maximization with Fixed Unit Prices} (RMFUP).

\begin{align*}
  \text{(RMFUP)} \qquad
  \max \quad
  & \sum_{i=1}^n \sum_{j=1}^m b_{ij}
  &                                                 &                                                                             \\[6pt]
  \text{s.t.} \quad
  & \sum_{i=1}^n x_{ij} \le 1
  &                                                 & \forall j \in [m]                                                           \\[6pt]
  & \sum_{j=1}^m b_{ij} \le B_i
  &                                                 & \forall i \in [n] \quad \text{(budget constraint)}                          \\[6pt]
  & b_{ij} \le v_{ij} \cdot x_{ij}
  &                                                 & \forall i \in [n],\ \forall j \in [m] \quad \text{(individual rationality)} \\[6pt]
  & \textcolor{blue}{ b_{ij} = p_j \cdot x_{ij}}
  &                                                 & \forall i \in [n],\ \forall j \in [m]                                       \\[6pt]
  & x_{ij} \ge 0,\quad b_{ij} \ge 0,\quad p_j \ge 0
  &                                                 & \forall i \in [n],\ \forall j \in [m].
\end{align*}

In RMFUP the seller needs to decide on the price $p_j$ of each good $j \in [n]$ while the constraint $b_{ij} = p_j \cdot x_{ij}$ guarantees that each buyer pays proportionally to the received quantity $x_{ij}$. Since the latter constraint is bilinear, the resulting RMFUP is a non-convex program.

\subsection{First-Price Pacing Equilibrium}
The notion of \textit{First-Price Pacing Equilibrium} (FPPE) was proposed in \cite{CKPSSMW19} as a mechanism to allocate and price divisible goods (such as impressions of keywords). In particular, an FPPE computes allocation/price vectors $(\mathbf{x},\mathbf{p})$ as well as set of \textit{pacing multipliers} $\mathbf{\alpha} := (\alpha_1,\ldots,\alpha_n) \in [0,1]$ such that the following properties hold,

\begin{enumerate}
        \smallskip
        \smallskip

  \item $\sum_{i=1}^n x_{ij} \leq 1$.
        \smallskip
        \smallskip

  \item $\sum_{j =1}^m p_j \cdot x_{ij} \leq B_i$.

        \smallskip
        \smallskip

  \item If $x_{ij} > 0$ then $p_j = a_i \cdot v_{ij}$

        \smallskip
        \smallskip

  \item $p_j := \max_{i \in [n]} a_i \cdot v_{ij}$ for each good $j \in [m]$.
        \smallskip
        \smallskip

  \item If $p_j > 0$ then $\sum_{i = 1}^n x_{ij} = 1$.

        \smallskip
        \smallskip

  \item If $\sum_{j =1}^m p_j \cdot x_{ij} < B_i$ then $a_i = 1$~~~~(no unnecessary pacing).

        \smallskip
        \smallskip
\end{enumerate}

We first remark that, due to Properties~$(1)$--$(3)$, an FPPE\footnote{We define a payments vector \( \mathbb{b} \) as \( b_i := \sum_{j=1}^m x_{ij} \cdot p_j \) and include it and the pacing multipliers in the definition of an FPPE so they are more easily referred to.} \((\mathbf{x},\mathbf{b},\mathbf{p},\mathbf{\alpha})\) satisfies the constraints of RMFUP. The pacing multipliers \(\alpha_i \in [0,1]\) uniformly scale all values \(v_{ij}\) of buyer \(i\) and can be interpreted as the bids that buyers submit in a system of simultaneous first-price auctions, where buyer \(i\)’s bid for good \(j \in [m]\) is \(\alpha_i \cdot v_{ij}\). Property~$4$ ensures that the price of each good equals the highest received bid, $p_j = \max_{i \in [n]} \alpha_i \cdot v_{ij}$. Property~$3$ guarantees that each good is allocated (not necessarily equally) among the buyers submitting this highest bid, and Property~$5$ ensures that any good with a positive price is fully allocated. Under this interpretation, each buyer \(i \in [n]\) selects its multiplier \(\alpha_i\) to maximize its total value subject to its budget constraint \(B_i\), and Property~$6$ guarantees that buyers who do not exhaust their budgets do not pace their values, i.e., \(\alpha_i = 1\). This defines a formal one-shot game with strategy space given by the multipliers \(\{\alpha_i\}_{i \in [n]}\), and an FPPE can be viewed as a Nash equilibrium of this game, which is precisely the perspective originally used to define FPPE~\cite{BCIJEM07}.

Apart from its intuitive connection with simultaneous first-price auctions, FPPE should be primarily viewed as a \textit{mechanism} that allocates divisible goods via fixed unit prices. \cite{CKPSSMW19} established the following crucial properties:

\begin{itemize}
        \smallskip
        \smallskip

  \item First-Price Pacing Equilibrium is always unique and can be efficiently computed via solving an appropriate convex program.
        \smallskip
        \smallskip

  \item First-Price Pacing Equilibrium ensures \textit{utility-maximizing allocation} for each buyer $i \in [n]$. In particular, given the computed prices $\mathbf{p}:= (p_1,\ldots,p_m)$
        \[\mathbf{x}_i := \mathrm{argmax}_{\mathbf{x}_i} \left \{ \sum_{j \in [m]} (v_{ij} - p_j)\cdot x_{ij} ~:~\sum_{j\in [m]}p_j \cdot x_{ij} \leq B_i\right \} \]
\end{itemize}
\noindent The utility maximization property above incentivizes buyers to report their true valuations and budgets since the utility maximization guarantee is reported with respect to the declared data.

\subsection{Results and Paper Organization.} In Section~\ref{s:inappr}, we establish that RMFUP is an APX-hard problem. This implies that a seller seeking to maximize revenue under fixed unit prices cannot compute revenue-maximizing prices in polynomial time, and that even a constant-factor approximation is the best achievable guarantee.

In light of this negative result, FPPE acquires an additional important role—beyond its utility-maximization properties—as an approximation mechanism for RMFUP. In Section~\ref{s:revenue_linear}, we show that FPPE achieves at least \(50\%\) of the revenue of RMVUP (which upper-bounds RMFUP). Moreover, we prove that this \(50\%\) factor is a tight worst-case approximation bound for any mechanism based on fixed unit prices with respect to RMVUP.

Motivated by the fact that advertising campaigns typically last for multiple days and overlap in time, in Section~\ref{s:online} we introduce an online version of the problem in which buyers arrive over time and specify start and end dates for their campaigns. We show that FPPE induces a \(\frac{1}{4}\)-competitive online algorithm when compared against the offline RMVUP benchmark.

Finally, in Section~\ref{s:concave}, we extend the notion of FPPE to the more general setting of concave valuation functions and establish analogous revenue guarantees.

\section{Inapproximability of Revenue Maximization with Fixed Unit Prices}\label{s:inappr}
In the section we establish that RMFUP is an APX-hard problem. In particular we show that unless P=NP, it cannot be approximated with ratio better than $854/855$. Our reduction uses the 3D-2-Matching that is known to be APX-hard~\cite{CC06}.
\begin{definition}
  In $3\mathrm{D}$-$\mathrm{Matching}$ (3DM) we are given three sets of elements $E_1,E_2,E_3$ and a set of triplets $S \subseteq E_1 \times E_2 \times E_3$ where each triplet $(a,b,c) \in S$ is composed by three elements $a\in E_1,b\in E_2,c \in E_3$. The goal is to select the maximum number of triplets such as no element belongs to more than one triplet. $3\mathrm{D}$-$2$-$\mathrm{Matching}$ (3D-2-M) is the special case of $3\mathrm{D}$-$\mathrm{Matching}$ where each element $e E = \in E_1 \cup E_2 \cup E_3$ belongs in exactly $2$ triplets.
\end{definition}
\begin{theorem}[\cite{CC06}]\label{t:prev}
  There is no polynomial-time approximation algorithm for $\mathrm{3D}$-2-$\mathrm{M}$ with approximation ratio $\rho \geq 94/95$ unless $\mathrm{P = NP}$.
\end{theorem}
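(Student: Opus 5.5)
Theorem~\ref{t:prev} is quoted from \cite{CC06} and the paper simply invokes it. If I were to prove it myself, I would follow the standard route for bounded-occurrence problems: a \emph{gap-preserving local reduction} from a constraint satisfaction problem with a known NP-hard gap. The source I would use is the Berman--Karpinski hybrid problem, a system of linear equations mod~2 in which every variable occurs exactly three times. For this problem it is NP-hard to distinguish instances with $\ge (1-\varepsilon)$ of the equations satisfiable from instances with $\le (1/2+\varepsilon)$ satisfiable, relative to the equation count, with explicit constants coming from H\aa stad's theorem and the Berman--Karpinski amplifiers. The plan is to map such a system $\Phi$ to an instance of $3\mathrm{D}$-$2$-$\mathrm{M}$ whose maximum matching size is an affine function $c\cdot|\Phi| + \mathrm{OPT}(\Phi)$. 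Here $c$ is a fixed constant and $\mathrm{OPT}(\Phi)$ is the maximum number of simultaneously satisfiable equations. Comparing the two gap endpoints then yields the ratio $94/95$.

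\textbf{Step 1: variable gadgets.} For each variable I would build a ring (an even cycle) of triplets in which consecutive triplets share one element. Such a ring has exactly two maximum matchings, the even and the odd triplets, which encode the two truth values. Each ring exposes ``port'' elements (one per occurrence of the variable) that are left free under one of the two matchings. Because every variable occurs exactly three times, the rings have constant size.

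\textbf{Step 2: equation gadgets.} For each equation I would attach a constant-size gadget to the ports of its variables. If the equation is satisfied by the induced assignment, the gadget admits one more triplet in its local matching than if it is violated; the gadget's contribution is therefore $c_0+[\text{equation satisfied}]$. I would then pad with dummy triplets so that every element lies in exactly two triplets. Padding must be done with care so that it adds only a constant per gadget and creates no alternative matchings.

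\textbf{Step 3: soundness and arithmetic.} I would prove soundness: any matching can be converted, without decreasing its size, into one that is ``consistent'' on every ring, i.e.\ uses one of the two canonical matchings. This gives the affine relation exactly. Plugging the Berman--Karpinski gap into $c\cdot|\Phi|+\mathrm{OPT}(\Phi)$ then gives the inapproximability ratio.

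The main obstacle is Steps~2--3 together. The gadgets must simultaneously keep every element in exactly two triplets and make the consistency (``no cheating'') argument go through. The constant $c$ must also stay small, since the final ratio $94/95$ is governed by $c$ and a loose gadget only gives a weaker constant. My first attempt would be to design the equation gadget by local case analysis on the $2^3$ port patterns, then verify consistency by an exchange argument along each ring. If needed, I would fall back on the amplifier structure so that inconsistent rings are always locally suboptimal.
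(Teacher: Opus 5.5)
In the paper this theorem is not proved at all; it is imported from \cite{CC06}, so your opening sentence already does what the authors do. Read as a self-contained argument, however, your sketch has genuine gaps, starting with the source problem. The $(1-\varepsilon)$ versus $(1/2+\varepsilon)$ gap is H\aa stad's theorem for Max-E3-Lin-2 \emph{without} an occurrence bound. The Berman--Karpinski amplifiers obtain the three-occurrence property by keeping the absolute gap of $\nu$ equations but diluting it over many new equations. Their hybrid problem has $60\nu$ two-variable and $2\nu$ three-variable parity equations, each variable occurring exactly three times. For it, it is NP-hard to decide whether at least $(62-\varepsilon)\nu$ or at most $(61+\varepsilon)\nu$ equations are simultaneously satisfiable. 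A $(1-\varepsilon,\,1/2+\varepsilon)$ gap for three-occurrence systems is not available. Indeed, bounded-occurrence CSPs admit polynomial-time algorithms that beat the random assignment by a constant factor (H\aa stad), so such a gap cannot be NP-hard unless P$=$NP. Your Step~2 also treats every equation as a three-variable constraint ($2^3$ port patterns), whereas $60\nu$ of the $62\nu$ equations have two variables and need their own gadget.

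The more serious issue is that the theorem is a statement about one specific constant, and that constant comes entirely from the part you leave open. If your reduction gives $\mathrm{OPT}_{3\mathrm{DM}} = K\nu + \mathrm{OPT}(\Phi)$, the correct gap yields inapproximability within $(K+61)/(K+62)$, so $94/95$ requires exactly $K=33$. You would need to exhibit variable and equation gadgets whose constant contributions total $33\nu$, and prove the exchange (``no cheating'') argument for those particular gadgets. Until then, Step~3's claim that the gap endpoints give $94/95$ is an assertion rather than a derivation. The gadgets must also satisfy a constraint your plan never mentions: each triplet needs exactly one element from each of $E_1,E_2,E_3$. Equivalently, viewing triplets as vertices and elements as edges, $3$D-$2$-M is maximum independent set in a cubic multigraph with a proper $3$-edge-colouring given by the three classes. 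Your dummy padding has to preserve this and the exact-two-occurrence property without shifting the optimum. The overall strategy, a local gap-preserving reduction from the Berman--Karpinski hybrid problem, is the standard one for bounds of this type. But what you have is a plan, not a proof of the stated bound. The exact value matters here, since the paper's $854/855$ in Theorem~\ref{t:inappr} is computed directly from $94/95$.
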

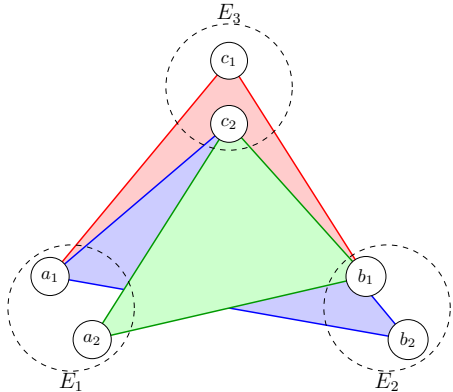
\begin{figure}[t]
  \centering
  \scalebox{0.7}{
	\begin{tikzpicture}[
      setlabel/.style={draw=none, font=\large},
      triplet1/.style={draw=red, fill=red!20, thick},
      triplet2/.style={draw=blue, fill=blue!20, thick},
      triplet3/.style={draw=green!60!black, fill=green!20, thick},
      nodeStyle/.style={circle, draw, fill=white, minimum size=7mm}
      ]

      \coordinate (E1) at (0,0);
      \coordinate (E2) at (6,0);
      \coordinate (E3) at (3,4);

      \coordinate (a1) at (-0.4,0.6);
      \coordinate (a2) at (0.4,-0.6);

      \coordinate (b1) at (5.6,0.6);
      \coordinate (b2) at (6.4,-0.6);

      \coordinate (c1) at (3,4.7);
      \coordinate (c2) at (3,3.5);

      \draw[triplet1] (a1) -- (b1) -- (c1) -- (a1);

      \draw[triplet2] (a1) -- (b2) -- (c2) -- (a1);

      \draw[triplet3] (a2) -- (b1) -- (c2) -- (a2);

      \node[nodeStyle] at (a1) {$a_1$};
      \node[nodeStyle] at (a2) {$a_2$};

      \node[nodeStyle] at (b1) {$b_1$};
      \node[nodeStyle] at (b2) {$b_2$};

      \node[nodeStyle] at (c1) {$c_1$};
      \node[nodeStyle] at (c2) {$c_2$};

      \node[setlabel] at (0,-1.4) {$E_1$};
      \node[setlabel] at (6,-1.4) {$E_2$};
      \node[setlabel] at (3,5.6) {$E_3$};

      \draw[dashed] (0,0) circle (1.2);
      \draw[dashed] (6,0) circle (1.2);
      \draw[dashed] (3,4.2) circle (1.2);

	\end{tikzpicture}}
  \caption{\label{fig:matching-example}
    A \textsc{3D-Matching} instance with
    \(E_1=\{a_1,a_2\}\), \(E_2=\{b_1,b_2\}\), and \(E_3=\{c_1,c_2\}\)
    arranged at the vertices of an abstract triangle.
    Each filled colored triangle represents a triplet in
    \(S=\{(a_1,b_1,c_1),(a_1,b_2,c_2),(a_2,b_1,c_2)\}\).
    All three edges of every triplet are explicitly drawn.
    A valid \textsc{3D-Matching} corresponds to selecting vertex-disjoint triangles;
    here, the maximum matching has size~1.
  }
\end{figure}

\subsection{Constructing the RMFUP instance}
In this section, we present how to construct an instance of RMFUP given a 3D-2-M instance. In particular, given the set of elements $E_1,E_2,E_3$ and triplets $S$ of a $3\mathrm{D}$-2-$\mathrm{M}$ instance, we construct the following RMFUP instance with $n := |E_1| + |E_2| + |E_3| + m$ buyers and $m:= |S|$ goods as follows:
\begin{itemize}
  \item For each element $e\in E_1\cup E_2 \cup E_3$, consider a buyer $e$ with budget $B_e = 1/3$.
        \smallskip

  \item For each triplet $s = (e_1,e_2,e_3)\in S$, consider a good $s \in [m]$ and let the $v_{e_1s} = v_{e_2s} = v_{e_3s} = 1$.
        \smallskip

  \item Each buyer $e$ has valuation $v_{es} = 0$ if $e \notin s$.
        \smallskip

  \item Each good/triplet $s \in [m]$, admits a special buyer $s$ with $v_{ss} = 2/3$ and $B_{s} = 2/3$. At the same time, $v_{ss'} = 0$ for all goods $s' \neq s$.
\end{itemize}
\noindent In Figure~\ref{fig:example} we illustrate the RMFUP instance for the \textsc{3D-2-M} instance shown of Figure~\ref{fig:matching-example}.

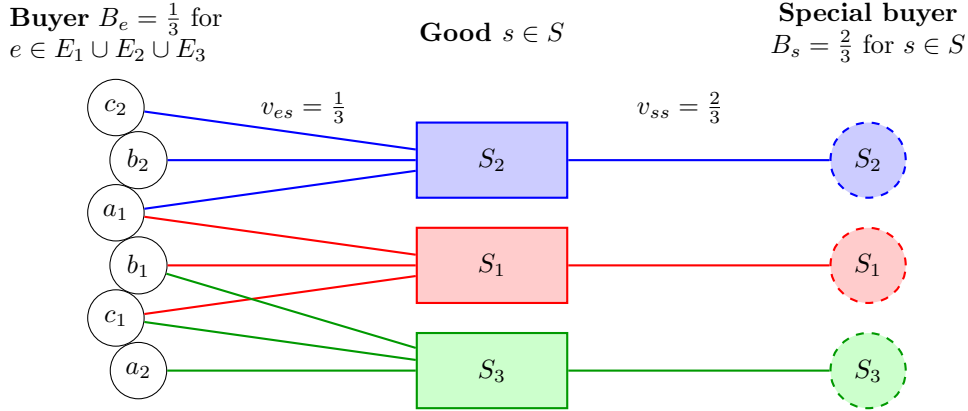
\begin{figure}[t]
  \begin{tikzpicture}[
    buyer/.style={circle, draw, minimum size=0.75cm},
    specialbuyer/.style={circle, draw,dashed, minimum size=1cm, },
    item/.style={rectangle, draw, minimum width=2.0cm, minimum height=1cm},
    ]
    \def\buyerX{0}
    \def\itemX{5}
    \def\specialBuyerX{10}

    \node[align=left] at (\buyerX, 4.5) {\textbf{Buyer} $B_e = \frac{1}{3}$ for \\ $e \in E_1 \cup E_2 \cup E_3$ };
    \node at (\itemX, 4.5) {\textbf{Good $s \in S$}};
    \node[align=center] at (\specialBuyerX, 4.5) {\textbf{Special buyer} \\  $B_s = \frac{2}{3}$ for $s \in S$};

    \node at (7.5, 3.5) {$v_{s s} = \frac{2}{3}$};
    \node at (2.5, 3.5) {$v_{e s} = \frac{1}{3}$};

    \node[buyer] (A1) at (\buyerX, 2.1) {$a_1$};
    \node[buyer] (B1) at (\buyerX+0.3, 1.4) {$b_1$};
    \node[buyer] (C1) at (\buyerX, 0.7) {$c_1$};
    \node[buyer] (A2) at (\buyerX+0.3, 0.0) {$a_2$};
    \node[buyer] (B2) at (\buyerX+0.3, 2.8) {$b_2$};
    \node[buyer] (C2) at (\buyerX, 3.5) {$c_2$};

    \node[item,fill=red!20  ,thick,draw=red  ] (S1) at (\itemX,1.4) {$S_1$};
    \node[item,fill=blue!20 ,thick,draw=blue ] (S2) at (\itemX,2.8) {$S_2$};
    \node[item,fill=green!20,thick,draw=green!60!black] (S3) at (\itemX,0.0) {$S_3$};

    \node[specialbuyer,fill=red!20  ,thick,draw=red  ] (aS1) at (\specialBuyerX,1.4) {$S_1$};
    \node[specialbuyer,fill=blue!20 ,thick,draw=blue ] (aS2) at (\specialBuyerX,2.8) {$S_2$};
    \node[specialbuyer,fill=green!20,thick,draw=green!60!black] (aS3) at (\specialBuyerX,0.0) {$S_3$};

    \draw[thick,fill=red!20  ,thick,draw=red  ] (A1) -- (S1);
    \draw[thick,fill=red!20  ,thick,draw=red  ] (B1) -- (S1);
    \draw[thick,fill=red!20  ,thick,draw=red  ] (C1) -- (S1);

    \draw[thick,fill=blue!20 ,thick,draw=blue ] (A1) -- (S2);
    \draw[thick,fill=blue!20 ,thick,draw=blue ] (B2) -- (S2);
    \draw[thick,fill=blue!20 ,thick,draw=blue ] (C2) -- (S2);

    \draw[thick,fill=green!20,thick,draw=green!60!black] (A2) -- (S3);
    \draw[thick,fill=green!20,thick,draw=green!60!black] (B1) -- (S3);
    \draw[thick,fill=green!20,thick,draw=green!60!black] (C1) -- (S3);

    \draw[thick,fill=red!20  ,thick,draw=red  ] (aS1) -- (S1);
    \draw[thick,fill=blue!20 ,thick,draw=blue ] (aS2) -- (S2);
    \draw[thick,fill=green!20,thick,draw=green!60!black] (aS3) -- (S3);
  \end{tikzpicture}
  \caption{Visualization of the conversion of example from Figure~\ref{fig:matching-example}. Recall that it is defined by \(E_1=\{a_1,a_2\}\), \(E_2=\{b_1,b_2\}\), and \(E_3=\{c_1,c_2\}\) and triplets \(S=\{(a_1,b_1,c_1),(a_1,b_2,c_2),(a_2,b_1,c_2)\}\). Notice that element buyers \(e\) have budget \(B_e = \frac{1}{3}\) and special buyers \(s\) have budget \(B_s = \frac{2}{3}\). The colors of the special buyers and goods correspond to the color of the triplet from Figure~\ref{fig:matching-example}.}
  \label{fig:example}\label{f:example}
\end{figure}

Up next, we show how an approximate solution for an approximate solution for RMFUP can be transformed in polynomial-time to an approximate solution for 3D-2-M. We remark that a feasible solution for RMFUP satisfies the constraints outlined in Subsection~\ref{ss:RMFUP} and is completely determined by the payment and price vector $(\mathbf{b},\mathbf{p})$ (the allocation of good $s$ to buyer $i$, can be simply determined as $x_{is} = b_{is}/p_s$).

\smallskip
Before converting a solution $(\mathbf{b},\mathbf{p})$ of RMFUP to a solution of 3D-2-M, we first \textit{round} $(\mathbf{b},\mathbf{p})$ to another solution $(\mathbf{\hat{b}},\mathbf{\hat{p}})$ of RMFUP so that the properties outlined in Lemma~\ref{l:1} are satisfied.
\begin{lemma}\label{l:1}
  There exists a polynomial-time algorithm (Algorithm~\ref{alg:round}, see Appendix~\ref{app:red}) that transforms a solution $(\mathbf{b},\mathbf{p})$ to a solution $(\mathbf{\hat{b}},\mathbf{\hat{p}})$ with the following properties:
  \begin{enumerate}
    \item $p_s = 2/3$ or $p_s = 1$ for all goods $s \in [m]$.
          \smallskip

    \item If $p_s = 2/3$ then $s$ is completely sold to the special buyer $s$, $x_{ss} = 1$.
          \smallskip

    \item Let $s = (e_1,e_2,e_2)$ with price $p_s = 1$, then it is completely sold to $e_1,e_2,e_2$ ($x_{e_1} + x_{e_2} + x_{e_3} = 1)$.
    \item The revenue of the updated solution does not decrease, \[\sum_{i = 1}^n \sum_{j=1}^m \hat{b}_{is} \geq  \sum_{i = 1}^n \sum_{j=1}^m b_{is}.\]
  \end{enumerate}
\end{lemma}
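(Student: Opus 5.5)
We process the goods one at a time and round each price in a way that never lowers revenue and never violates a constraint; the rounded solution is $(\hat{\mathbf b},\hat{\mathbf p})$.

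\textbf{Setup.} Fix a good $s=(e_1,e_2,e_3)$. Only four buyers value it: the special buyer $s$, with value $2/3$, and the element buyers $e_1,e_2,e_3$, with value $1$.

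\textbf{Step 1: discard useless prices.} If $p_s>1$, individual rationality forces $b_{is}=0$ for every buyer, so $s$ contributes no revenue. If $p_s\in(2/3,1]$, only $e_1,e_2,e_3$ can buy $s$. If $p_s\le 2/3$, the special buyer may also buy.

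\textbf{Step 2: the low-price case, $p_s\le 2/3$.} Set $\hat p_s=2/3$ and $\hat x_{ss}=1$, and remove every element buyer from $s$. Since $B_s=2/3$ and the special buyer values no other good, its payment $2/3$ is feasible and individually rational. Good $s$ previously yielded at most $p_s\le 2/3$, so its revenue does not decrease. The element buyers only lose spending, so their budget constraints still hold.

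\textbf{Step 3: the high-price case, $p_s\in(2/3,1]$ or $p_s>1$.} Here I would try to raise the price to $\hat p_s=1$ and sell the whole good to $e_1,e_2,e_3$. Raising the price weakly increases each buyer's payment per unit. The catch is that their budgets are only $1/3$ each, and they may also be spending on other goods. To handle this, I would use a different local move for each possible state of good $s$:
\begin{itemize}
\item If the element buyers of $s$ can jointly absorb payment $1$ after the change, keep them on $s$ at price $1$.
\item Otherwise, compare with simply setting $\hat p_s=2/3$ and giving $s$ to its special buyer. That option is always feasible and yields $2/3$.
\item Take whichever option is better, and free element-buyer budget where needed by dropping their purchases of $s$.
\end{itemize}

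\textbf{Ordering.} Property 4 has to hold globally, not just good by good. I would first round all goods with $p_s\le 2/3$ (Step 2), then handle the high-price goods. For the high-price goods, I would sum revenue charged to element buyers: each element buyer's total payment is at most $1/3$, and each high-price good contributes at most $\min(p_s,\text{element payments})$.

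\textbf{Main obstacle.} The hard case is a good $s$ whose element buyers are each partially spending on $s$ and on other goods at prices in $(2/3,1]$, with total revenue exceeding $2/3$ on some goods. A naive rounding can then break property 3.

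My plan for this case is a potential argument. Because element buyers have equal values $1$, and each belongs to exactly two triplets, I would move each element buyer's spending toward the good where it is more "useful." Concretely:
\begin{enumerate}
\item Repeatedly take a good with price in $(2/3,1)$ and raise its price to $1$ while shifting allocation among its element buyers.
\item When some good's element-buyer revenue falls below $2/3$, switch that good to its special buyer at $2/3$.
\end{enumerate}
Each such switch must be checked to weakly increase total revenue. Termination follows because each step fixes at least one good permanently at a price in $\{2/3,1\}$. Verifying this monotonicity is the step I expect to require the most care.
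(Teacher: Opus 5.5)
Your handling of goods with $p_s\le 2/3$ or $p_s>1$ is fine. However, the proposal stops exactly where the lemma has content: goods priced in $(2/3,1]$ whose element buyers split their budget $1/3$ between their two triplets. For these you only sketch a local scheme: raise the price to $1$ while ``shifting allocation among its element buyers'', and hand a good to its special buyer once its element revenue drops below $2/3$. You explicitly leave the revenue monotonicity of this scheme unverified, and with the good-by-good accounting you propose it does not follow. Shifting an element buyer's spending from $s'$ to $s$ lowers the revenue of $s'$. That good may be one you already ``fixed permanently'' at price $1$, so it stops being fully sold, which breaks property~3 and your termination argument. Nothing in the sketch shows that the gain on $s$ compensates for this loss. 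The first bullet of Step~3 is also circular: at price $1$, full sale forces each of $e_1,e_2,e_3$ to spend its \emph{entire} budget on $s$. Whether they ``can absorb payment $1$'' therefore depends on decisions not yet made for their other triplets.

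The missing idea is to do the rounding with a \emph{per-buyer} rather than per-good revenue accounting, which makes the step one-shot. First move prices to $\{2/3,1\}$, keeping payments and shrinking quantities. Then give every good whose element revenue is at most $2/3$ to its special buyer at price $2/3$; this is good-wise monotone and covers all price-$2/3$ goods. Afterwards, element buyers pay only for price-$1$ goods. Now let each element buyer with positive spending pay its whole $1/3$ to one arbitrary good it currently pays for, and nothing elsewhere. Every element buyer's payment weakly increases and special buyers are untouched, so total revenue does not drop. Nothing is oversold, because only three element buyers value a good and $3\cdot\frac13=1$ unit at price $1$. Finally, a price-$1$ good with fewer than three such payers has revenue in $\{0,\frac13,\frac23\}$, so it can be handed to its special buyer at $2/3$ without loss. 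Every remaining price-$1$ good is then fully sold to its three elements. This is Phases I--IV of Algorithm~\ref{alg:round}, and no potential function or iteration is needed.

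Your ``Ordering'' idea of summing element payments can also be completed directly. Let $H$ be the set of goods priced in $(2/3,1]$ and $k$ the number of elements lying in two triplets of $H$. Element budgets then bound the revenue by $\frac23 m+\frac{|H|-k}{3}$. Deleting one triplet per shared element leaves at least $|H|-k$ disjoint triplets to price at $1$, which matches that bound.
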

Due to space constraints, Algorithm~\ref{alg:round} and the proof of Lemma~\ref{l:1} are deferred to Appendix~\ref{app:red}. The main idea is the following. Since buyers’ valuations take values only in \(\{0, \tfrac{2}{3}, 1\}\), we can increase the price of any good with \(p_s < \tfrac{2}{3}\) to \(\hat{p}_s = \tfrac{2}{3}\), and any good with \(p_s \in (\tfrac{2}{3}, 1]\) to \(\hat{p}_s = 1\), while appropriately reducing the allocated quantity. These transformations do not violate either individual rationality or budget constraints. Moreover, when \(\hat{p}_s = \tfrac{2}{3}\), the good can be entirely allocated to the special buyer without affecting the total revenue.
\smallskip

\noindent With a slight abuse of notation in the rest of the paper, we assume that $(\mathbf{b},\mathbf{p})$ satisfies the properties of Lemma\ref{l:1}. With this at hand we can easily derive feasible solution for 3D-2-M.
\begin{claim}
  All good/triplets $s \in [m]$ with $p_s = 1$ form a feasible solution for $\mathrm{3DM}$.
\end{claim}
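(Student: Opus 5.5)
We must show that two distinct goods $s \neq s'$ with $p_s = p_{s'} = 1$ cannot share an element $e$. The argument combines the budget of the element buyers with the full-allocation property from Lemma~\ref{l:1}.

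First we pin down who can buy a good priced at $1$. For such a good $s=(e_1,e_2,e_3)$, the special buyer $s$ has $v_{ss} = 2/3 < 1$. Individual rationality gives $b_{ss} = x_{ss}\cdot p_s \le v_{ss}\, x_{ss}$, which forces $x_{ss}=0$. The same argument excludes every element buyer $e \notin s$, since $v_{es}=0$. So only $e_1,e_2,e_3$ can receive $s$, and Property~3 of Lemma~\ref{l:1} says they receive all of it: $x_{e_1s}+x_{e_2s}+x_{e_3s}=1$. Hence the total payment for $s$ is exactly $1$, and it is paid entirely by $e_1,e_2,e_3$.

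Next we use the budgets. Each element buyer has $B_e = 1/3$. The three buyers $e_1,e_2,e_3$ together have budget $1$ and together pay $1$ for $s$. Therefore each $e_k$ spends exactly $1/3$ on $s$; if one spent less, another would have to spend more than $1/3$, which the budget forbids. So every $e_k$ exhausts its entire budget on good $s$.

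Now suppose two distinct goods $s,s'$ with price $1$ share an element $e$. By the previous step, $e$ pays $1/3$ for $s$ and also $1/3$ for $s'$. Its total spending is then at least $2/3 > 1/3 = B_e$, which violates the budget constraint. Hence the goods priced at $1$ are pairwise element-disjoint, i.e.\ they form a feasible 3D matching.

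The only step that needs care is showing that each element buyer spends exactly $1/3$. This follows from the sum of the three budgets equalling the full price $1$, together with the exclusion of all other buyers by individual rationality.
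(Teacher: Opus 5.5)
Your proof is correct and follows the same route as the paper: Property~3 of Lemma~\ref{l:1} forces the three element buyers of a price-$1$ good to pay its full price $1$. Since each has budget $1/3$, each spends exactly its whole budget on that good, so no element can lie in two price-$1$ triplets. Your derivation of the ``exactly $1/3$ each'' step from the sum of the budgets makes precise what the paper only states parenthetically; the individual-rationality exclusion of the other buyers is harmless but not needed, since Property~3 alone already gives the payment of $1$ from $e_1,e_2,e_3$.
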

\begin{proof}
  Property~3 guarantees that a good/triplet $s \in [m]$ with $p_s = 1$ is completely sold to exactly $3$ buyers/elements. This means that the seller receives exactly $1/3$ from each one of them (their budget is exactly $1/3$). As a result, none of the 3 buyers can contribute to any other good with price $p_s = 1$. Also notice that due to property~$2$ all goods $s \in [m]$ with $p_s = 2/3$ are completely sold to the special buyers. This means that no buyer/element participates in more than one good/triplet.
\end{proof}
\noindent Lemma~\ref{l:2} is the main technical lemma of the section and establishes that the set of goods/triplets with $p_s = 1$ forms an approximate solution for the 3D-2-M instance.
\begin{lemma}\label{l:2}
  Let $(\mathbf{b},\mathbf{p})$ be an $\rho$-approximate solution for RMFUP, then the good/triplets $s \in [m]$ with $p_s = 1$ form an $(9 \rho - 8)$-approximate solution for the $3\mathrm{D}$-2-$\mathrm{M}$ instance.
\end{lemma}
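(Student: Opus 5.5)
Write $k$ for the number of goods with $p_s = 1$ in the rounded solution (by Lemma~\ref{l:1} we may assume the solution is rounded, and rounding does not decrease revenue). Write $\mathrm{OPT}_M$ for the maximum 3D-2-M matching size and $\mathrm{OPT}_R$ for the optimal RMFUP revenue. The plan is to express both revenues exactly in terms of matching sizes, then use the fact that in 3D-2-M instances $\mathrm{OPT}_M$ is a constant fraction of $m$ to turn the multiplicative revenue loss into a multiplicative matching loss.

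\textbf{Step 1 (revenue formula).} By Properties~2 and~3 of Lemma~\ref{l:1}, each good with $p_s=1$ is fully sold at unit price, giving revenue $1$. Each good with $p_s=2/3$ is fully sold to its special buyer, giving revenue $2/3$. Hence the rounded solution has revenue
\[
R = k + \tfrac{2}{3}(m-k) = \tfrac{2m}{3} + \tfrac{k}{3}.
\]

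\textbf{Step 2 (optimal revenue).} I will show $\mathrm{OPT}_R = \tfrac{2m}{3} + \tfrac{\mathrm{OPT}_M}{3}$. For the lower bound, take a maximum matching, price its triplets at $1$ and sell each equally to its three elements (each pays exactly its budget $1/3$), and price every other good at $2/3$ and sell it to its special buyer. For the upper bound, apply Lemma~\ref{l:1} to an optimal solution; the Claim shows the goods with price $1$ form a matching, so Step~1 gives revenue at most $\tfrac{2m}{3}+\tfrac{\mathrm{OPT}_M}{3}$.

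\textbf{Step 3 (relating $m$ to $\mathrm{OPT}_M$).} This is the key structural step. In 3D-2-M each element lies in exactly $2$ triplets, so each triplet intersects at most $3\cdot 1 = 3$ other triplets (each of its three elements lies in one other triplet). Greedily picking triplets and discarding those that intersect them therefore yields a matching of size at least $m/4$, so $m \le 4\,\mathrm{OPT}_M$.

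\textbf{Step 4 (combine).} From $R \ge \rho\,\mathrm{OPT}_R$ we get
\[
\tfrac{2m}{3}+\tfrac{k}{3} \ge \rho\Big(\tfrac{2m}{3}+\tfrac{\mathrm{OPT}_M}{3}\Big),
\]
so
\[
k \ge \rho\,\mathrm{OPT}_M - 2(1-\rho)m \ge \rho\,\mathrm{OPT}_M - 8(1-\rho)\,\mathrm{OPT}_M = (9\rho-8)\,\mathrm{OPT}_M .
\]
This gives exactly the claimed factor $9\rho-8$, which confirms that Step~3's constant $4$ is the intended one.

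The main obstacle is Step~2's upper bound. It relies on Lemma~\ref{l:1} applying to an arbitrary optimal solution and on the Claim, and it requires that no fractional pricing beats $\tfrac{2m}{3}+\tfrac{\mathrm{OPT}_M}{3}$; this holds because rounding never decreases revenue. I also need to check that the greedy bound in Step~3 is valid for the degree-exactly-2 structure.
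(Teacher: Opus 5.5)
Your proof is correct and follows the paper's argument. You use the same revenue formula $\tfrac{2m}{3}+\tfrac{k}{3}$ for the rounded solution, compare against the same RMFUP solution built from a maximum matching, and bound $m \le 4\,\mathrm{OPT}_M$ the same way; your greedy argument is the paper's maximal-independent-set argument on the degree-$\le 3$ conflict graph. Step~4 only uses $\mathrm{OPT}_R \ge \tfrac{2m}{3}+\tfrac{\mathrm{OPT}_M}{3}$, so the upper-bound half of Step~2, which you flagged as the main obstacle, is unnecessary, and the paper omits it.
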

\begin{proof}
  Let $M^\star$ denote the optimal solution of 3D-2-M and consider the following solution $(\mathbf{b^\star},\mathbf{p^\star})$ for RMFUP,
  \[p_s =
    \begin{cases}
      1,   & \text{for each good } s\in M^\star \\
      2/3, & \text{otherwise}
    \end{cases}\]
  If $p^\star_s = 1$ the good is completely sold to the respective $3$ buyers/elements and if $p^\star_s = 2/3$ the good is completely sold to the special buyer. Since in $M^\star$ no element participates in two different triplets, we are ensured that $(\mathbf{b^\star},\mathbf{p^\star})$ is budget feasible.
  Since $(\mathbf{b},\mathbf{p})$ is a $\rho$-approximate optimal solution for RMFUP,
  \[ \sum_{i \in [n]} \sum_{j \in [n]} b_{ij} \geq \rho \cdot  \sum_{i \in [n]} \sum_{j \in [n]} b^\star_{ij} = \rho \cdot \left( \frac{2}{3} \left( m - |M^\star|\right) + |M^\star| \right).\]
  Let $\hat{M}:= \{s \in [m]~:~ p_s = 1\}$ by Properties~(2)~and~(3) of Lemma~\ref{l:1} we get that
  \[\sum_{i \in [n]} \sum_{j \in [n]} b_{ij} =  \left( \frac{2}{3} \left( m - |\hat{M}|\right) + |\hat{M}| \right) \geq \rho \cdot \left( \frac{2}{3} \left( m - |M^\star|\right) + |M^\star| \right)\]
  and thus we conclude that
  \begin{equation}\label{eq:98}
    |\hat{M}| \geq 2m \cdot( \rho - 1) + \rho \cdot M^\star.
  \end{equation}
  \noindent Dividing $\ref{eq:98}$ by $|M^\star|$ we get that
  \begin{equation}\label{eq:198}
    \frac{|\hat{M}|}{|M^\star|} \geq \rho - 2 ( 1 - \rho)\cdot \frac{m}{|M^\star|}.
  \end{equation}
  The latter means that we need to provide a upper bound on $\frac{m}{|M^\star|}$ that in 3DM instances can be arbitrarily large. However for 3D-2-M instances  is upper bounded by $4$.
  \begin{lemma}\label{l:matching}
    For any instance of 3D-2-M we have that $m \leq 4 \cdot |M^\star|$ where $m$ is the number of triplets and $M^\star$ the optimal solution.
  \end{lemma}
  \begin{proof}
    Consider the dependency graph $G(V,E)$ where each triplet $s \in [m]$ corresponds to a node $v_s \in V$ and two nodes $v,v' \in V$ are connected if and only if they share an element. Since each element $e \in E$ appears in at most $2$ triplets, the degree of each node $v_s \in V$ is at most $3$. Now consider a maximal independent set in the dependency graph, denoted as $M$. Clearly $|M| \leq |M^\star|$. Since $M$ is a maximal independent set $N(M) \cup M = V$. As a result, $m = |V| \leq |M| + |N(M)| \leq 4 |M| \leq 4 |M^\star|$.
  \end{proof}
  \noindent Lemma~\ref{l:2} follows by considering $m / |M^\star| \leq 4$ in Equation~\ref{eq:198}.
\end{proof}
\noindent By requiring $9 \rho - 8 < 94/95$ we obtain that $\rho < 854/855$. 
\begin{theorem}\label{t:inappr}
  There is no polynomial-time approximation algorithm for RMFUP with approximation ratio $\rho < 854/855$ unless P = NP.
\end{theorem}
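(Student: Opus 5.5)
The plan is a standard gap-preserving reduction argument by contradiction, assembling Lemma~\ref{l:1}, Lemma~\ref{l:2} and Theorem~\ref{t:prev}. Suppose there is a polynomial-time algorithm $\mathcal{A}$ that, on every RMFUP instance, returns a feasible solution $(\mathbf{b},\mathbf{p})$ whose revenue is at least $\rho$ times the optimum, for some fixed $\rho > 854/855$. (The statement reads ``$\rho < 854/855$''; for maximization the intended meaning is that ratios strictly above $854/855$ are unattainable, and I would prove it in that form.) Using $\mathcal{A}$, I will build a polynomial-time algorithm for 3D-2-M with approximation ratio strictly above $94/95$. This contradicts Theorem~\ref{t:prev} unless P = NP.

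The algorithm has four steps. First, given a 3D-2-M instance $(E_1,E_2,E_3,S)$, construct the RMFUP instance of the previous subsection. It has $|E_1|+|E_2|+|E_3|+|S|$ buyers and $|S|$ goods, so the construction is clearly polynomial. Second, run $\mathcal{A}$ to obtain a $\rho$-approximate $(\mathbf{b},\mathbf{p})$. Third, apply Algorithm~\ref{alg:round} from Lemma~\ref{l:1}. This runs in polynomial time, does not decrease revenue, and so keeps the solution $\rho$-approximate while putting it into the canonical form of Properties~1--3. Fourth, output $\hat{M}=\{s: p_s=1\}$. By the Claim this is a feasible 3D-matching, and by Lemma~\ref{l:2} it satisfies $|\hat{M}| \ge (9\rho-8)|M^\star|$.

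It remains to check the arithmetic. We have $9\rho - 8 > 94/95$ if and only if $\rho > (8+94/95)/9 = 854/855$. So any $\rho>854/855$ yields a $(9\rho-8)$-approximation for 3D-2-M with ratio exceeding $94/95$, which contradicts Theorem~\ref{t:prev}.

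The main point needing care is the inequality in Lemma~\ref{l:2}: it should be rechecked with Lemma~\ref{l:matching} plugged in. From $\tfrac{2}{3}m + \tfrac13|\hat{M}| \ge \rho(\tfrac23 m + \tfrac13|M^\star|)$ we get $|\hat{M}| \ge \rho|M^\star| - 2(1-\rho)m$. With $m\le 4|M^\star|$ this gives $|\hat{M}| \ge (9\rho-8)|M^\star|$, consistent with the stated bound.

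The other point is that the reduction needs $\mathcal{A}$ to output an explicit solution, not just a value. This is why Lemma~\ref{l:1} is stated algorithmically. If one wanted hardness even for approximating the optimal value, I would instead argue via the gap: the RMFUP optimum equals exactly $\tfrac23 m + \tfrac13|M^\star|$. That identity follows from the solution $(\mathbf{b}^\star,\mathbf{p}^\star)$ for the lower bound, and from Lemma~\ref{l:1} applied to an optimal solution together with the Claim for the upper bound.
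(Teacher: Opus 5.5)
Your proposal is correct and follows the paper's argument: build the RMFUP instance, round with Lemma~\ref{l:1}, take $\hat M=\{s:p_s=1\}$, apply Lemma~\ref{l:2} with $m\le 4|M^\star|$, and solve $9\rho-8>94/95 \iff \rho>854/855$. You also read the inequality direction as intended (ratios above $854/855$ are unattainable), and your side remark that the optimum equals exactly $\tfrac23 m+\tfrac13|M^\star|$ correctly extends the hardness to approximating the optimal value.
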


\section{Revenue Guarantees of First-Price Pacing Equilibrium}\label{s:revenue_linear}

In this section, we establish revenue guarantees for the \textit{First-Price Pacing Equilibrium} (FPPE). As discussed earlier, FPPE induces fixed unit prices and can therefore be viewed as an approximation algorithm for RMFUP. Our main result shows that FPPE achieves at least $50\%$ of the optimal revenue, even when compared to mechanisms with variable unit prices.

\begin{theorem}\label{t:revenue}
  $\mathrm{FPPE} \geq \frac{1}{2} \cdot \mathrm{RMVUP}$.\footnote{With a slight abuse of notation, we denote the respective revenues by FPPE and RMVUP.}
\end{theorem}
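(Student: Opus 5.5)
We compare the FPPE outcome $(\mathbf{x},\mathbf{b},\mathbf{p},\alpha)$ with an optimal RMVUP solution $(\mathbf{x}^\star,\mathbf{b}^\star)$ by a charging argument that splits buyers by pacing status. Let $P:=\{i : \alpha_i<1\}$ be the paced buyers and $U:=\{i:\alpha_i=1\}$ the unpaced ones. By Property~6, every $i\in P$ spends exactly $B_i$ in FPPE. Since $\sum_j b^\star_{ij}\le B_i$ holds for every buyer,
\[
\sum_{i\in P}\sum_j b^\star_{ij}\;\le\;\sum_{i\in P}B_i\;=\;\sum_{i\in P}\sum_j p_j x_{ij}\;\le\;\mathrm{FPPE}.
\]
So the revenue RMVUP collects from paced buyers is charged to FPPE at most once.

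For the unpaced buyers we charge through the goods. For $i\in U$, Property~4 with $\alpha_i=1$ gives $v_{ij}\le p_j$ for every $j$. Individual rationality of RMVUP then yields
\[
\sum_{i\in U}\sum_j b^\star_{ij}\le \sum_j\sum_{i\in U} v_{ij}x^\star_{ij}\le \sum_j p_j\sum_i x^\star_{ij}\le \sum_j p_j .
\]
Every good with $p_j>0$ is fully allocated by Property~5, so $\sum_j p_j=\sum_j p_j\sum_i x_{ij}=\mathrm{FPPE}$. Adding the two bounds gives $\mathrm{RMVUP}\le 2\cdot\mathrm{FPPE}$, which is exactly the claim.

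I expect the main obstacle to be verifying that the two charges each use FPPE revenue only once and that the price bound is legitimate. This needs a few checks. First, the good-side bound must use $\sum_i x^\star_{ij}\le 1$ and must not double count with the paced side. It does not, because the two charges are bounded separately, each by all of FPPE, which is why the factor is $2$. Second, the equality $\sum_j p_j=\mathrm{FPPE}$ must hold, and it follows from Properties~4 and~5. Third, we need $v_{ij}\le p_j$ for unpaced buyers even on goods they do not win, and this is exactly Property~4. If any subtlety arises, for instance goods with $p_j=0$, those goods contribute zero on both sides: for unpaced $i$ we get $v_{ij}\le p_j=0$, so such goods give no RMVUP revenue from $U$. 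The remaining work is to write these steps cleanly. Tightness is a separate matter and is not needed here.
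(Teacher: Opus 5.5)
Your proof is correct, and it takes a genuinely different and shorter route than the paper.

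\textbf{The paper's route.} It goes through Liquid Welfare. It first proves the price-of-anarchy style bound $\mathrm{LW}(\mathbf{x}) \ge \frac{1}{2}\mathrm{LW}(\mathbf{o}^\star)$ (Lemma~\ref{t:POA_LW}). That proof splits buyers according to whether their FPPE value reaches their budget, and invokes the utility-maximization property of FPPE for the remaining buyers. It then shows separately that FPPE revenue equals $\mathrm{LW}(\mathbf{x})$ and that $\mathrm{LW}(\mathbf{x}^\star) \ge \mathrm{RMVUP}$.

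\textbf{Your route.} You split by pacing status instead, and use only the defining equilibrium conditions: budget exhaustion of paced buyers (Property~6 together with Property~2), the bound $v_{ij} \le p_j$ for unpaced buyers (Property~4), and market clearing (Property~5). Utility maximization is never needed. Your two charges in fact bound $\min(v_i(o_i),B_i)$ for every feasible allocation $\mathbf{o}$. Together with $\mathrm{FPPE}=\mathrm{LW}(\mathbf{x})$, this gives Lemma~\ref{t:POA_LW} as a byproduct. A small point: $\sum_j p_j = \mathrm{FPPE}$ uses only Property~5 and $p_j \ge 0$, not Property~4.

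\textbf{What the paper's route buys.} Your step $v_{ij}x^\star_{ij} \le p_j x^\star_{ij}$ for unpaced buyers depends on linearity. With concave valuations, $\alpha_i = 1$ only controls the marginal value at $x_{ij}$, so one must fall back on utility maximization. That leaves an extra term equal to the utility of the unpaced buyers. The term vanishes only in the linear case, where unpaced buyers pay exactly their value. This is why the paper's Liquid Welfare template, rather than a pacing-status split, is the one reused in Section~\ref{s:concave}.
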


Theorem~\ref{t:revenue} establishes that, despite relying on fixed unit prices, the revenue generated by FPPE is at least one half of the optimal revenue achievable with variable unit prices. This result directly implies that FPPE constitutes a $\frac{1}{2}$-approximation algorithm for RMFUP, which is an APX-hard problem. Interestingly, we show the revenue guarantee of Theorem~\ref{t:revenue} is optimal with respect to fixed unit-price mechanisms. In particular, we construct instances for which $\mathrm{RMFUP} = \frac{1}{2} \cdot \mathrm{RMVUP}$.

\begin{restatable}{theorem}{RMFUPBound}\label{t:lower_bound1}
  In the worst-case the ratio $\frac{\mathrm{RMFUP}}{\mathrm{RMVUP}} = \frac{1}{2}$.
\end{restatable}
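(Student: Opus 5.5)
The claim splits into two parts: the ratio is never below $\frac{1}{2}$, and there are instances where it equals (or approaches arbitrarily closely) $\frac{1}{2}$.

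\textbf{Lower bound.} This part follows from Theorem~\ref{t:revenue}. By Properties~(1)--(3), an FPPE $(\mathbf{x},\mathbf{b},\mathbf{p},\mathbf{\alpha})$ is a feasible solution of RMFUP. Hence
\[\mathrm{RMFUP} \geq \mathrm{FPPE} \geq \tfrac{1}{2}\,\mathrm{RMVUP}.\]
Since RMFUP is a restriction of RMVUP, we also have $\mathrm{RMFUP}\le \mathrm{RMVUP}$. So it remains only to exhibit bad instances.

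\textbf{Bad instance.} I would use a single good and two buyers, generalizing Example~\ref{ex:1}. Fix a parameter $H>1$.
\begin{itemize}
\item Buyer~1 has $v_{11}=H$ and $B_1=1$.
\item Buyer~2 has $v_{21}=1$ and $B_2=1$.
\end{itemize}

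For RMVUP, set $x_{11}=1/H$ with $b_{11}=1$, and $x_{21}=1-1/H$ with $b_{21}=1-1/H$. Both individual rationality constraints and both budget constraints hold, so $\mathrm{RMVUP}\ge 2-1/H$.

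For RMFUP, let $p$ be the single price and split into two cases.
\begin{itemize}
\item If $p\le 1$, the revenue is at most $p\cdot\sum_i x_{i1}\le p\le 1$, since at most one unit of the good is sold.
\item If $p>1$, buyer~2 cannot buy anything, because individual rationality together with $b_{21}=p\,x_{21}$ forces $x_{21}=0$. Buyer~1 pays at most $B_1=1$.
\end{itemize}
In both cases the revenue is at most $1$, and it equals $1$ at $p=H$. Hence $\mathrm{RMFUP}=1$, and
\[\frac{\mathrm{RMFUP}}{\mathrm{RMVUP}}\le \frac{1}{2-1/H}\;\longrightarrow\;\frac{1}{2}\qquad (H\to\infty).\]
Combined with the lower bound, the worst-case (infimum) ratio is exactly $\frac{1}{2}$.

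\textbf{Main obstacle.} The only delicate step is the case analysis for fixed prices. There, individual rationality combined with the single price must rule out any price that extracts both buyers' budgets: a low price caps revenue at one unit's worth of price, while a high price excludes buyer~2 entirely. Once this is written carefully, the argument is complete.

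If the statement is meant as an exact equality attained by a finite instance, rather than an infimum, I would try to sharpen the construction in either of two ways. One is to add a continuum (or a geometric sequence) of buyers forming an equal-revenue profile. The other is to cap the low-value buyer's allocation so that RMVUP reaches exactly $2$ while every fixed price still yields at most $1$.
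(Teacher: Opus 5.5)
Your proof is correct and is essentially the paper's. The paper uses one buyer with budget $n$ and value $n^2$ plus $n-1$ buyers with budget $1$ and value $n$; after dividing all money by $n$, this is your instance with $H=n$ and the low-value buyer split into $n-1$ copies, with the same case split on $p$ and the same ratio $n/(2n-1)\to\frac{1}{2}$. Your explicit lower-bound direction $\mathrm{RMFUP}\ge\mathrm{FPPE}\ge\frac{1}{2}\,\mathrm{RMVUP}$ is the part the paper leaves implicit in Theorem~\ref{t:revenue}.

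You should drop your closing suggestion of attaining exactly $\frac{1}{2}$ on a finite instance, because it cannot be done. Trace the equality cases in the proof of Lemma~\ref{t:POA_LW}: every positive-price good must go, in $\mathbf{o}^\star$, to buyers that are not budget-bound in the FPPE, so the budget-bound buyers would need their value $B_i$ from zero-price goods. Those goods are worthless to any buyer with $B_i>0$, since such buyers have $\alpha_i>0$. Equality therefore forces $\mathrm{LW}(\mathbf{o}^\star)=0$, hence $\mathrm{RMFUP}>\frac{1}{2}\,\mathrm{RMVUP}$ whenever $\mathrm{RMVUP}>0$. So $\frac{1}{2}$ is only an infimum, which is exactly what both your limit argument and the paper's establish.
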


\noindent
Due to space constraints, the proof of Theorem~\ref{t:lower_bound1} is deferred to Appendix~\ref{app:2}. In the remainder of this section, we prove Theorem~\ref{t:revenue}. With a slight abuse of notation, we denote the revenue generated by FPPE simply by FPPE respectively for RMVUP.

The first step of our approach consists of bounding the Liquid Social Welfare of the FPPE with respect to the optimal Liquid Social Welfare.
\begin{definition}\label{d:lw}
  The \textit{Liquid Social Welfare} of an allocation $\mathbf{x}$, is defined as
  \[ \mathrm{LW}(\mathbf{x}):= \sum_{i = 1}^n \min \left(\sum_{j=1}^m v_{ij}\cdot x_{ij} , B_i \right)\]
\end{definition}
\noindent The Liquid Social Welfare is a very important notion since $\mathrm{LW}(\mathbf{x})$ it serves as an upper bound on the revenue that can be attained through an allocation $\mathbf{x}$ under individual rationality and budget constraints.

In Lemma~\ref{t:POA_LW} we establish the fact that Liquid Social Welfare produced by FPPE is at least $1/2$ of the optimal Liquid Social Welfare.
\begin{restatable}{lemma}{lemPOALW}\label{t:POA_LW}
  Let $(\mathbf{x},\mathbf{b},\mathbf{p}, \mathbf{\alpha})$ denote an FPPE of a market and $\mathbf{o^\star}$ the allocation maximizing Liquid Social Welfare. Then, $\mathrm{LW}(\mathbf{x}) \geq \frac{1}{2}\cdot \mathrm{LW}(\mathbf{o^\star})$.
\end{restatable}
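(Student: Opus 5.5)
We will prove the lemma by a direct charging argument. The idea is to split the buyers of the FPPE according to whether they exhaust their budgets, and to charge the optimal liquid welfare of the unpaced buyers to the FPPE prices.

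Let $(\mathbf{x},\mathbf{b},\mathbf{p},\mathbf{\alpha})$ be the FPPE. Define $S:=\{i : b_i = B_i\}$, the budget-exhausted buyers, and $U := [n]\setminus S$. By Property~6, every $i\in U$ has $\alpha_i = 1$.

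For $i \in S$ we use the trivial bound $\min(\sum_j v_{ij}o^\star_{ij}, B_i) \le B_i$. On the FPPE side, Properties~3 and~4 give $p_j = \alpha_i v_{ij} \le v_{ij}$ whenever $x_{ij}>0$. Hence $B_i = b_i = \sum_j p_j x_{ij} \le \sum_j v_{ij}x_{ij}$, so buyer $i$'s FPPE liquid welfare contribution equals $B_i = b_i$. Therefore
\[ \sum_{i\in S}\min\Big(\sum_j v_{ij}o^\star_{ij},B_i\Big) \le \sum_{i\in S} b_i. \]

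For $i \in U$ we bound $\min(\sum_j v_{ij}o^\star_{ij},B_i) \le \sum_j v_{ij} o^\star_{ij}$. Since $\alpha_i = 1$, Property~4 gives $p_j \ge \alpha_i v_{ij} = v_{ij}$ for every good $j$. Thus
\[ \sum_{i\in U}\sum_j v_{ij}o^\star_{ij} \le \sum_j p_j \sum_{i\in U} o^\star_{ij} \le \sum_j p_j, \]
using feasibility $\sum_i o^\star_{ij}\le 1$. By Property~5, every good with $p_j>0$ is fully sold, so $\sum_j p_j = \sum_j p_j\sum_i x_{ij} = \sum_i b_i$, which is the FPPE revenue.

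Adding the two parts gives $\mathrm{LW}(\mathbf{o^\star}) \le \sum_{i\in S} b_i + \sum_i b_i \le 2\sum_i b_i$.

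It remains to show $\sum_i b_i \le \mathrm{LW}(\mathbf{x})$. For each buyer, $b_i\le B_i$ by Property~2. Also $b_i = \sum_j \alpha_i v_{ij}x_{ij} \le \sum_j v_{ij} x_{ij}$ by Property~3 and $\alpha_i\le 1$. So $b_i \le \min(\sum_j v_{ij}x_{ij}, B_i)$, and summing over $i$ yields $\mathrm{LW}(\mathbf{o^\star}) \le 2\,\mathrm{LW}(\mathbf{x})$. As a byproduct we get the stronger statement $\mathrm{FPPE} \ge \tfrac12 \mathrm{LW}(\mathbf{o^\star})$, which is what Theorem~\ref{t:revenue} will use, since the optimal liquid welfare upper-bounds RMVUP.

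The step that needs the most care is bounding the unpaced buyers' optimal value by $\sum_j p_j$ without double counting. This works because the optimal allocation of each good across $U$ sums to at most one and each such buyer's value for good $j$ is dominated by $p_j$. We must also check that goods with $p_j=0$ contribute nothing: if $p_j = 0$, then $v_{ij}\le p_j = 0$ for every $i\in U$. So no fullness assumption is needed for those goods, and the argument goes through.
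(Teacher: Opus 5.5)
Your proof is correct. It uses the same skeleton as the paper's proof:
\begin{itemize}
\item split off the budget-exhausted buyers;
\item charge the remaining buyers' optimal value to the prices;
\item use Property~5 to turn $\sum_j p_j$ into revenue.
\end{itemize}
In an FPPE your set $S=\{i:b_i=B_i\}$ is exactly the set of buyers with $v_i(x_i)\ge B_i$, which is the set $V$ the paper's argument actually needs. This holds because $b_i=\alpha_i v_i(x_i)$, and Property~6 forces $\alpha_i=1$ whenever $b_i<B_i$. Your final bound $\sum_{i\in S}b_i+\sum_i b_i$ also coincides with the paper's intermediate bound $\mathrm{LW}(\mathbf{x})+\sum_{i\in V}B_i$.

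The real difference is the key step. The paper invokes the utility-maximization property of FPPE from \cite{CKPSSMW19}. To apply it, the paper needs the truncation $v_i(o^\star_i)\le B_i$ and a case split guaranteeing that $\mathbf{o}^\star_i$ is affordable at $\mathbf{p}$. You instead read off $p_j\ge v_{ij}$ for unpaced buyers directly from Properties~4 and~6. For linear valuations the two agree, because unpaced buyers have zero surplus, so the paper's $v_i(x_i)+\sum_j p_j(o^\star_{ij}-x_{ij})$ is exactly your $\sum_j p_j o^\star_{ij}$.

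Your route is self-contained and gives $\mathrm{FPPE}\ge\frac12\mathrm{LW}(\mathbf{o}^\star)\ge\frac12\mathrm{RMVUP}$ outright. Theorem~\ref{t:revenue} therefore no longer needs the identity $\mathrm{FPPE}=\mathrm{LW}(\mathbf{x})$; the lemma itself only uses the easy direction $\mathrm{FPPE}\le\mathrm{LW}(\mathbf{x})$.

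The paper's utility-maximization formulation is the one that extends to concave valuations (Lemma~\ref{t:POA_LW_concave}). There, unpaced buyers keep positive surplus and only the tangent bound $v_{ij}(o)\le v_{ij}(x_{ij})+v'_{ij}(x_{ij})(o-x_{ij})$ is available. Your pointwise bound $v_{ij}\,o^\star_{ij}\le p_j\,o^\star_{ij}$ has no analogue in that setting.
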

\begin{proof}
  To simplify notation let $v_i(x_i) := \sum_{j\in [n]} v_{ij} \cdot x_{ij}$. Without loss of generality, we assume that $v_i(o^\star_i)\leq B_i$ for every buyer $i\in [n]$ since otherwise we can simply decrease an $o^\star_{ij} > 0$. Let $V\subseteq [n]$ denote the subset of buyers satisfying $B_i\geq v_i(x_i)$. Then by Definition~\ref{d:lw} we have that
  \begin{align}\label{eq:LW-xb}
    \lw(\x) & = \sum_{i\in [n]}{\min\left\{B_i,v_i(x_i)\right\}}=\sum_{i\in V}{B_i}+\sum_{i\in [n]\setminus V}{v_i(x_i)}
  \end{align}
  and
  \begin{align}\label{eq:LW-xstarb}
    \lw(\mathrm{o}^\star) & =\sum_{i\in [n]} \min\{B_i,{v_i(o^\star_i)}\} \leq \sum_{i\in V}{B_i}+\sum_{i\in [n]\setminus V}{v_i(o^\star_i)}
  \end{align}
  \noindent Given a buyer $i\in [n]\setminus V$. We distinguish between two cases. If $v_i(o^\star_i)< \sum_{j\in [m]}{o^\star_{ij}\cdot p_j}$ then
  \begin{align}\label{eq:utility-positiveb}
    v_i(x_i)-\sum_{j\in [m]}{x_{ij}\cdot p_j} & \geq 0 >v_i(o^\star_i)-\sum_{j\in [m]}{o^\star_{ij}\cdot p_j}.
  \end{align}
  where the first inequality comes from the utility maximization property of FPPE. If $v_i(o^\star_i)\geq \sum_{j\in [m]}{o^\star_{ij}\cdot p_j}$ then our assumption about the allocation $\mathrm{o}^\star$ yields
  \begin{equation}\label{eq:100a}
    \sum_{j\in [m]}{o^\star_{ij}\cdot p_j} \leq v_i(o^\star_i) \leq B_i
  \end{equation}

  \noindent The latter means that the allocation $\mathbf{o}^\star$ satisfies the budget constraint for buyer $i \in [n]$ and thus, the utility maximization property of FPPE (see Section~\ref{s:prelim}) implies that
  \begin{align}\label{eq:utility-maximizinga}
    v_i(x_i)-\sum_{j\in [m]}{x_{ij}\cdot p_j} & \geq v_i(o^\star_i)-\sum_{j\in [m]}{o^\star_{ij}\cdot p_j}.
  \end{align}
  As a result, for any buyer $i \in [n]$ we get that
  \begin{align}\label{eq:competitive-equilibrium-conditiona}
    v_i(o^\star_i) & \leq v_i(x_i)+\sum_{j\in [m]}{p_j\cdot (o^\star_{ij}-x_{ij})},
  \end{align}
  Then Equations~(\ref{eq:competitive-equilibrium-conditiona}) and~(\ref{eq:LW-xstarb}) imply that
  \begin{align}\nonumber
    \lw(\mathrm{o}^\star) & \leq \sum_{i\in V}{B_i}+\sum_{i\in [n]\setminus V}{v_i(x_i)}+\sum_{i\in [n]\setminus V}{\sum_{j\in [m]}{p_j\cdot (o^\star_{ij}-x_{ij})}} \\\label{eq:bound-liquid-plus-pricesb}
             & = \lw(\x)+\sum_{i\in [n]\setminus V}{\sum_{j\in [m]}{p_j\cdot (o^\star_{ij}-x_{ij})}}.
  \end{align}
  where the last equality is due to Equation~(\ref{eq:LW-xb}). Finally, we show that
  \begin{align}\label{eq:bound-for-pricesa}
    \sum_{i\in [n]\setminus V}{\sum_{j\in [m]}{p_j\cdot (o^\star_{ij}-x_{ij})}} & \leq \lw(\x).
  \end{align}
  The proof of the lemma then follows by Equation~(\ref{eq:bound-liquid-plus-pricesb}). We have that
  \begin{align*}
    \sum_{i\in [n]\setminus V}{\sum_{j\in [m]}{p_j\cdot (o^\star_{ij}-x_{ij})}} & = \sum_{j\in [m]}{p_j \sum_{i\in [n]\setminus V}{(o^\star_{ij}-x_{ij})}}  \leq \sum_{j\in [m]}{p_j\cdot \left(1- \sum_{i\in [n]\setminus V}{x_{ij}}\right)} \\
                                                      & = \sum_{j\in [m]}{p_j\sum_{i\in V}{x_{ij}}}
                                                        = \sum_{i\in V}{\sum_{j\in [m]}{p_j\cdot x_{ij}}} \leq \sum_{i\in V}{B_i}
                                                        \leq \lw(\x),
  \end{align*}
  The first inequality is due to $\sum_{j =1}^mo^\star_{ij} \leq 1$. The second inequality is due to the budget constraint for buyer $i\in V$, $\sum_{j\in [m]}{p_j\cdot x_{ij}}\leq B_i$. The third inequality is due to Equation~(\ref{eq:LW-xb}). The second equality is due to Property~$(5)$ of FPPE, if $p_j>0$ for good $j\in [m]$ then $\sum_{i\in [n]}{x_{ij}}=1$.
\end{proof}

In the rest of section we consider $(\mathbf{x},\mathbf{b},\mathbf{p}, \mathbf{\alpha})$ to be the FPPE of the market, $(\mathbf{x^\star},\mathbf{b^\star})$ to be RMVUP, and provide the proof of Theorem~\ref{t:revenue}. Lemma~\ref{t:POA_LW} establishes the fact that the Liquid Social Welfare of the FPPE is at least $1/2$ of the RMVUP. This is because
\begin{eqnarray*}
  \mathrm{LW}(x) &\geq& \frac{1}{2} \cdot \mathrm{LW}(x^\star)= \frac{1}{2}\cdot \sum_{i=1}^n \min \left( \sum_{j=1}^m v_{ij}\cdot x^\star_{ij} , B_i \right)\\
        &\geq& \frac{1}{2} \sum_{i=1}^n \sum_{j \in [m]}b^\star_{ij} = \frac{1}{2}\cdot \mathrm{RMVUP}.
\end{eqnarray*}

\noindent The cornerstone idea of our proof is that FPPE has the interesting property that its Liquid Social Welfare equals exactly its revenue. In particular,
\[\mathrm{FPPE}:=\sum_{i=1}^n\sum_{j=1}^m b_{ij} = \mathrm{LW}(x)\]

In order to establish the latter, consider the set of buyers $A := \{\sum_{j \in [m]} b_{ij} < B_i \}$. By Property~(6) of FPPE we are ensured that each buyer $i\in A$ admits pacing multiplier $\alpha_i = 1$. By Property~$(3)$ we know that for any good $j \in [m]$ such that $x_{ij} > 0$, $p_j = \alpha_i \cdot v_{ij}$ meaning that
\[ \sum_{j \in [m]} b_{ij} = \sum_{j\in [m]} v_{ij}\cdot x_{ij}\text{ for all buyers } i \in A\]
Now consider the revenue of the FPPE,
\begin{eqnarray*}
  \mathrm{FPPE} &:=& \sum_{i=1}^n \sum_{j \in [m]}b_{ij} = \sum_{i \in A} \sum_{j \in [m]}p_j \cdot x_{ij}  + \sum_{i \notin A} B_i\\
       &=& \sum_{i \in A} \sum_{j \in [m]}p_{j}\cdot x_{ij} + \sum_{i
           \notin A} B_i = \sum_{i \in [n]} \min\left( \sum_{j \in [m]}v_{ij}\cdot x_{ij} , B_i \right) = \mathrm{LW}(\mathbf{x})
\end{eqnarray*}

\section{First-Price Pacing Equilibrium as an Online Algorithm}\label{s:online}

In online advertising markets, campaigns arrive over time and impressions are generated on a daily basis. This motivates an online formulation of the revenue maximization problem, in which the platform must make allocation and pricing decisions sequentially, without knowledge of future demand. At each day, the platform observes the currently active advertisers and allocates that day’s impressions among them, subject to individual rationality and feasibility and budget constraints.

In the section, we propose the following online problem to capture the setting above:

\begin{itemize}
  \item There exists a finite time horizon $[1,T]$ and $m$ divisible goods corresponding to the possible keywords. At the beginning of each round $ t \in [T]$, each good is completely renewed while it expires at the end of a round.
        \smallskip

  \item There is a sequence of $n$ buyers arriving online and each buyer \(i \in [n]\) is associated with an interval \([s_i, t_i] \subseteq [1,T]\) denoting its \textit{arrival} and \textit{departure time}. Each buyer \(i \in [n]\) has valuations over the goods \(v_{i1}, \dots, v_{im}\) that remain constant across $[s_i ,t_i]$ and a overall budget $B_i \geq 0$ for the whole interval $[s_i,t_i]$. Finally buyer $i \in [n]$ is revealed to the system at time $s_i \in [T]$.
	    \smallskip
\end{itemize}
\smallskip

At each round \(t \geq 1\), an \textit{online algorithm} with \textit{fixed unit prices} needs to select unit prices \(p_j^t \ge 0\) for each good \(j \in [m]\) as well as allocate  the goods to the active buyers $S(t):=\{ i \in [n] \mid t \in [s_i, t_i] \}$,
\[
  \sum_{i \in S(t)} x_{ij}^t \le 1 \qquad \text{for all } j \in [m].
\]
The extracted revenue at round $t \geq 1$ equals $
R_t:= \sum_{i \in S(t)} \sum_{j\in [m]} p_j^t \cdot  x_{ij}^t.$
An \textit{online algorithm} must also satisfy the \textit{individual rationality} and \textit{budget feasibility} constraints. For each round $t \geq 1$
\[
  p_j^t \cdot  x_{ij}^t \leq v_{ij} \cdot  x_{ij}^t\text{   }~~~~~~~~\forall i \in S(t),j \in [m]
\]
and for each buyer $i \in [n]$,
$\sum_{t = 1}^T \sum_{j \in [m]} p_j^t \cdot  x_{ij}^t \leq B_i.$
\smallskip
\smallskip

Given an offline access to the sequence of buyers $i \in [n]$, the optimal revenue with respect to \textit{variable unit prices} by reducing it to an \textit{offline instance} and solving the linear program of RMVUP. In particular, one can treat each good $j \in [m]$ at round $t \geq 1$, as a separate good $j_t$ and for each buyer $i \in [n]$ consider the valuations,
\[v_{ij}^t =
  \begin{cases}
    v_{ij}, & \text{if } t \in [s_t,t_i] \\
    0,      & \text{ otherwise}.
  \end{cases}
\]
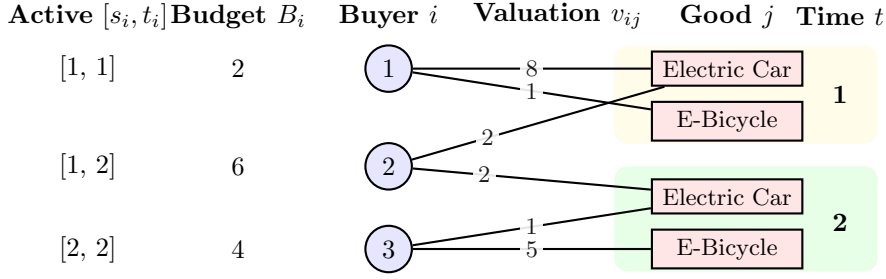
\begin{figure}[t]
  \begin{tikzpicture}[
    buyer/.style={circle, draw, minimum size=\baselineskip+0.3, fill=blue!10, thick},
    item/.style={rectangle, draw, minimum width=2cm, minimum height=\baselineskip+0.3, fill=red!10, thick},
    valuation/.style={font=\small, fill=white, inner sep=2pt, opacity=0.9},
    ]

    \def\TimeWindowX{-4.0}
    \def\BudgetX{-2}
    \def\BuyerX{0}
    \def\ValuationX{2.25}
    \def\ItemX{4.5}
    \def\TimeX{6.0}
    \def\NodeFontSize{\normalsize}

    \fill[yellow!10, rounded corners=5pt] (\ItemX-1.5, 4.6) rectangle (\TimeX+0.5, 3.3);

    \fill[green!10, rounded corners=5pt] (\ItemX-1.5, 3.0) rectangle (\TimeX+0.5, 1.6);

    \node [align=center] at (\TimeWindowX, 5.0) {\textbf{Active} $[s_i,t_i]$};
    \node at (\BudgetX, 5.0) {\textbf{Budget $B_i$}};
    \node at (\BuyerX, 5.0) {\textbf{Buyer $i$}};
    \node at (\ValuationX, 5.0) {\textbf{Valuation $v_{ij}$}};
    \node at (\ItemX, 5.0) {\textbf{Good $j$}};
    \node at (\TimeX, 5.0) {\textbf{Time $t$}};

    \node[anchor=center, font=\NodeFontSize] at (\TimeWindowX, 4.3) {[1, 1]};
    \node[buyer, font=\NodeFontSize] (A1) at (\BuyerX, 4.3) {1};
    \node[anchor=center, font=\NodeFontSize] at (\BudgetX, 4.3) {2};

    \node[anchor=center, font=\NodeFontSize] at (\TimeWindowX, 3.0) {[1, 2]};
    \node[buyer, font=\NodeFontSize] (A2) at (\BuyerX, 3) {2};
    \node[anchor=center, font=\NodeFontSize] at (\BudgetX, 3) {6};

    \node[anchor=center, font=\NodeFontSize] at (\TimeWindowX, 1.9) {[2, 2]};
    \node[buyer, font=\NodeFontSize] (A3) at (\BuyerX, 1.9) {3};
    \node[anchor=center, font=\NodeFontSize] at (\BudgetX, 1.9) {4};

    \node[item, font=\small] (I1) at (\ItemX, 4.3) {Electric Car};
    \node[item, font=\small] (I2) at (\ItemX, 3.6) {E-Bicycle};
    \node at (\TimeX, 3.95) {\textbf{1}};

    \node[item, font=\small] (I1_t2) at (\ItemX, 2.6) {Electric Car};
    \node[item, font=\small] (I2_t2) at (\ItemX, 1.9) {E-Bicycle};
    \node at (\TimeX, 2.25) {\textbf{2}};

    \draw[thick] (A1) -- node[valuation] {8} (I1);
    \draw[thick] (A1) -- node[valuation] {1} (I2);

    \draw[thick] (A2) -- node[valuation, pos=0.3] {2} (I1);
    \draw[thick] (A2) -- node[valuation, pos=0.3] {2} (I1_t2);

    \draw[thick] (A3) -- node[valuation] {1} (I1_t2);
    \draw[thick] (A3) -- node[valuation] {5} (I2_t2);

  \end{tikzpicture}
  \caption{\label{fig:online-fppe} Diagram showing an example of an online FPPE problem. Notice that buyer 1 and 2 are fully revealed at time-step 1 and buyer 3 is only revealed to at time-step 3. }
\end{figure}
With a slight abuse of notation we denote as RMVUP the total revenue that can be extracted by the sequence of buyers with variable unit prices. The performance of online algorithm is captured through the notion of \textit{competitive ratio}, defined as $\sum_{t=1}^T R_t / \mathrm{RMVUP}$.
\smallskip

In Theorem~\ref{t:any-competitive}, we establish an upper bound on the competitive ratio of any online algorithm.
\begin{theorem}\label{t:any-competitive}
  The competitive ratio of any algorithm is at most \( \frac{1}{4} \left(2 + \sqrt{2}\right) \approx 0.85\).
\end{theorem}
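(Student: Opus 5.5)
We will prove Theorem~\ref{t:any-competitive} with an adversarial two-day instance with a single good. The adversary chooses between two continuations after seeing the algorithm's first-day behaviour. We then optimise one parameter so that no first-day decision does well in both continuations. The constant $\frac{1}{4}(2+\sqrt{2})$ comes out when the algorithm's best response is to mix evenly between two extreme first-day actions. So the argument should be phrased for algorithms that may randomise, bounding expected revenue over a distribution of instances (Yao's principle); for deterministic algorithms the same instance even gives the stronger bound $1/\sqrt{2}$.

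\textbf{Instance.} Take $T=2$ and $m=1$, with a parameter $v\in(0,1)$.
\begin{itemize}
\item Buyer $1$ is active on $[1,2]$ with $v_{11}=1$ and $B_1=1$.
\item Buyer $0$ is active on $[1,1]$ with value $v$ and budget $v$.
\end{itemize}
On day~$2$ the adversary plays one of two scenarios. In scenario~A nobody new arrives. In scenario~B a buyer $2$ arrives, active on $[2,2]$ with value $1$ and budget $1$. The offline benchmark RMVUP (via the reduction to the offline LP in the text) is computed as follows.
\begin{itemize}
\item In scenario~A: on day~$1$ give the good to buyer $0$ for $v$, and on day~$2$ charge buyer $1$ its full budget. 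Hence $\mathrm{RMVUP}_A=1+v$, and this is optimal since day~$1$ contributes at most $1$ and buyer $1$ at most $1$ overall.
\item In scenario~B: buyer $1$ pays $1$ on day~$1$ and buyer $2$ pays $1$ on day~$2$. Hence $\mathrm{RMVUP}_B=2$.
\end{itemize}

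\textbf{Classifying first-day actions.} Any day-$1$ action with a fixed price $p$ is summarised by two numbers: the day-$1$ revenue $R_1$, and the amount $y$ of buyer $1$'s budget it consumes. By individual rationality, if $p>v$ only buyer $1$ can be served, so $R_1=y\le 1$. If $p\le v$, then $R_1\le p\le v$ and $y=p\,x_{11}\le R_1$.

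Day~$2$ is the same in both scenarios up to the new arrival, and the algorithm cannot distinguish them on day~$1$. In scenario~A the day-$2$ revenue is at most $\min(1,1-y)$. In scenario~B it is at most $1$, because the single good is worth at most $1$ to everyone. Hence the algorithm's total revenue is at most
\begin{itemize}
\item $R_1+1-y$ in scenario~A, and
\item $R_1+1$ in scenario~B.
\end{itemize}
The two extreme actions are:
\begin{itemize}
\item ``sell to buyer $0$'' ($p=v$, $x_{11}=0$), which gets $(1+v,\,1+v)$ in the two scenarios;
\item ``sell to buyer $1$'' ($p=1$, $x_{11}=1$), which gets $(1,\,2)$.
\end{itemize}
The key step is to show that every feasible pair $(R_1,y)$ yields a revenue vector dominated by a convex combination of these two extremes. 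For $p>v$ we have $A$-revenue $\le 1$ and $B$-revenue $\le 1+y\le 2$, which is dominated by the second extreme. For $p\le v$ we have $B$-revenue $\le 1+v$ and $A$-revenue $\le 1+v$, which is dominated by the first. Intermediate mixtures of prices are handled by the same two bounds.

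\textbf{Optimisation.} Let the algorithm put probability $q$ on the first extreme. Its competitive ratios are
\[
\text{A: } q+\frac{1-q}{1+v}, \qquad \text{B: } \frac{q(1+v)}{2}+(1-q).
\]
The adversary picks the scenario and $v$; the algorithm picks $q$. Setting $1+v=\sqrt{2}$, i.e.\ $v=\sqrt{2}-1$, makes the game symmetric. At $q=1/2$ both ratios equal $\tfrac12\bigl(1+\tfrac{1}{\sqrt2}\bigr)=\tfrac14(2+\sqrt2)$, and any other $q$ makes one scenario worse. Hence a uniform distribution over scenarios~A and~B caps every algorithm at this ratio.

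\textbf{Main obstacle.} The hardest part is the domination argument for arbitrary day-$1$ actions, including fractional allocations and prices strictly between $v$ and $1$. One must check carefully that budget already consumed on day~$1$ is exactly what is lost in scenario~A. I would also verify that the benchmark values $1+v$ and $2$ are indeed optimal in the offline LP.
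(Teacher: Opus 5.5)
Your proof is correct. Your instance is in fact the paper's own: dividing all values and budgets in the paper's construction by $1+\sqrt{2}$ gives a short-lived buyer with value and budget $\sqrt{2}-1$, a long-lived buyer with value and budget $1$, and a possible day-$2$ arrival with value and budget $1$. That is exactly your instance at the optimal choice $v=\sqrt{2}-1$.

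The two routes differ in the analysis. The paper fixes a deterministic algorithm and uses an adaptive adversary: the new buyer arrives iff the fraction $x^1_{21}$ given to the long-lived buyer on day~$1$ is below $1/2$. In your normalization its revenue bounds are $v/2$ from the short-lived buyer plus the long-lived buyer's budget $1$ in one case, and $\frac12(1+v)$ on day~$1$ plus $1$ on day~$2$ in the other. These bounds never use the single-price constraint. So the fractional split of the good plays exactly the role of your mixing probability $q$, and the constant $\frac14(2+\sqrt{2})$ already appears for deterministic algorithms, even ones that charge different buyers different prices.

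You instead use the fixed unit price to show that every day-$1$ action is dominated by one of two pure extremes, which reduces everything to a $2\times 2$ game. This buys two things. You get a bound for randomized algorithms against an oblivious adversary, which the paper's adaptive argument does not cover. You also get the sharper (and correct) bound $1/\sqrt{2}$ for deterministic fixed-price algorithms. The cost is that your domination step genuinely relies on fixed prices. An action selling a fraction $x_0$ to buyer~$0$ at price $v$ and the rest to buyer~$1$ at price $1$ would realize a strict convex combination of your two extremes. The deterministic bound from this instance would then weaken to the paper's $\frac14(2+\sqrt{2})$.

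One minor slip: your justification that $1+v$ is optimal in scenario~A should cite the two budgets $v$ and $1$, not ``day~$1$ contributes at most $1$''. In any case only the lower bounds $\mathrm{RMVUP}_A\ge 1+v$ and $\mathrm{RMVUP}_B\ge 2$, given by your explicit solutions, are needed.
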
 
\noindent The proof of Theorem~\ref{t:any-competitive} is deferred to Appendix~\ref{a:online}.

\subsection{FPPE as an Online Algorithm}\label{s:online-fppe}
The main result of the section consists of establishing that greedily applying FPPE at each round, leads to an $1/4$-competitive algorithm.

\begin{algorithm}[h]
  \caption{Online First-Price Pacing Equilibrium}
  \label{alg:1}
  \DontPrintSemicolon
  \ForEach{round $t \geq 1$}{
    \smallskip

    The set of buyers $A(t) := \{ i \in [n]:s_i = t \}$ arrive 
    \smallskip

    \ForEach {buyer $i \in A(t)$}{
      $B^t_i \gets B_i$
    }
    \smallskip
    Let the set of \textit{active} buyers $S(t) := \{ i \in [n] : t \in [s_i, t_i] \}$ at round $t \in [n]$.
    \smallskip

    Compute an FPPE $(\mathbf{x}^t, \mathbf{b}^t, \boldsymbol{p}^t,\mathbf{\alpha}^t)$ for buyers $i\in S(t)$ with budgets $B_i^t$.
    \smallskip

    Update the budget of each active buyer $ i \in S(t)$,
    \[B_i^{t+1} \gets B_i^t - \sum_{j=1}^m p_j^t \cdot x_{ij}^t\]
  }
\end{algorithm}

Algorithm~\ref{alg:1} simply computes, at each round $t \geq 1$, an FPPE for the set of active buyers $S(t):= \{i \in [n]: t \in [s_i,t_i]\}$ and then updates the budgets according to what each buyer has spent. In Theorem~\ref{t:online} we establish that this straightforward generalization of FPPE in the online setting produces a $1/4$-competitive algorithm.
\begin{theorem}\label{t:online}
  Online FPPE is an $1/4$-competitive algorithm.
\end{theorem}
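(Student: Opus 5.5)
The plan is to charge the offline RMVUP revenue, buyer by buyer, against the revenue collected by Algorithm~\ref{alg:1}. Let $(\mathbf{x}^\star,\mathbf{b}^\star)$ be the offline RMVUP solution on the expanded instance, with goods $j_t$. Write $b^\star_{it}:=\sum_j b^\star_{ij_t}$ and $o^t_{ij}:=x^\star_{ij_t}$. Let $R:=\sum_t R_t$ be the online revenue. Let $\pi_i:=\sum_t\sum_j p^t_j x^t_{ij}$ be buyer $i$'s total online payment, so that $R=\sum_i \pi_i$.

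Split the buyers into two classes. The class $H$ contains the buyers with $\pi_i\ge B_i/2$. The class $L$ contains the buyers with $\pi_i<B_i/2$. For $i\in H$ the bound is immediate: $\sum_t b^\star_{it}\le B_i\le 2\pi_i$.

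For $i\in L$ I first show that buyer $i$ is never budget-binding in any round. If in some round $t$ we had $\alpha^t_i<1$, Property~(6) of FPPE would force buyer $i$ to spend its entire remaining budget $B^t_i$. It would then finish with $\pi_i=B_i$, a contradiction. Hence $\alpha^t_i=1$ in every round, and by Property~(3) buyer $i$ pays exactly its value: $\sum_j p^t_j x^t_{ij}=v_i(x^t_i)$. Moreover $B^t_i\ge B_i-\pi_i> B_i/2$ throughout.

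Now fix a round $t$ and buyer $i\in L$. Scale $o^t_i$ down to $\tilde o^t_i$ so that $v_i(\tilde o^t_i)=\min(v_i(o^t_i),B_i/2)$. Then $\tilde o^t_i$ costs at most $v_i(\tilde o^t_i)\le B_i/2<B^t_i$ at prices $\mathbf{p}^t$ whenever $v_i\ge \mathbf{p}^t\cdot\tilde o^t_i$. Exactly as in the two-case argument of Lemma~\ref{t:POA_LW}, the utility-maximization property of the round-$t$ FPPE then gives
\[ v_i(\tilde o^t_i)\le v_i(x^t_i)+\sum_j p^t_j(\tilde o^t_{ij}-x^t_{ij})\le v_i(x^t_i)+\sum_j p^t_j\tilde o^t_{ij}. \]

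It remains to relate $\sum_t b^\star_{it}$ to $\sum_t v_i(\tilde o^t_i)$, and I expect this step to be the delicate one. In rounds where $v_i(o^t_i)\le B_i/2$ there is no loss, since $b^\star_{it}\le v_i(o^t_i)=v_i(\tilde o^t_i)$. If some round has $v_i(o^t_i)>B_i/2$, then $v_i(\tilde o^t_i)=B_i/2$, and the whole quantity $\sum_t b^\star_{it}\le B_i$ is at most $2v_i(\tilde o^t_i)$. In either case,
\[ \sum_t b^\star_{it}\le 2\sum_t v_i(\tilde o^t_i). \]
Summing over $i\in L$ and $t$ gives
\[ \sum_{i\in L}\sum_t b^\star_{it}\le 2\sum_{i\in L}\pi_i+2\sum_t\sum_j p^t_j\sum_{i\in L}\tilde o^t_{ij}. \]
Since $\sum_i \tilde o^t_{ij}\le 1$, and by Property~(5) every good with $p^t_j>0$ is fully sold, the last double sum is at most $\sum_t\sum_j p^t_j=\sum_t R_t=R$.

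Adding the two classes gives
\[ \mathrm{RMVUP}\le 2\sum_{i\in H}\pi_i+2\sum_{i\in L}\pi_i+2R=4R, \]
which is exactly $1/4$-competitiveness.

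The main obstacle is the per-round application of utility maximization with the shrinking residual budgets $B^t_i$. This is why I truncate the offline bundle at $B_i/2$ rather than at $B^t_i$: the threshold defining $H$ and $L$ guarantees that the truncated bundle is always affordable. The factor-$2$ loss from truncation is precisely the loss that combines with the price term to give $4$.
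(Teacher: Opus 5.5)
Your proof is correct, and it takes a genuinely different route from the paper. The paper never charges RMVUP directly against the online run. Instead it introduces an auxiliary, possibly budget-infeasible, per-round FPPE with budgets $\hat B^t_i=\max(o^t_i,B^t_i)$, and loses a factor $2$ twice. The first loss is in a time-wise comparison (Lemma~\ref{l:5}), which applies Theorem~\ref{t:revenue} as a black box to each round's market with budgets $o^t_i$ and then uses monotonicity of FPPE revenue in budgets. The second is in a buyer-wise comparison (Lemma~\ref{l:main}), which uses $\alpha^t_i=1$ for non-exhausting buyers plus monotonicity of pacing multipliers (Lemma~\ref{l:pacing}). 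You share only the observation that a buyer who does not exhaust its total budget is unpaced in every round. The rest of your argument is a self-contained Liquid-Welfare-style charging argument that uses Properties~(3)--(6) of each round's FPPE, with no auxiliary market and no monotonicity lemmas. This also sidesteps a weak spot in Lemma~\ref{l:main}: its bound $\sum_t\hat b^t_i\le 2B_i$ invokes $\sum_t B^t_i\le B_i$, which does not hold for remaining budgets. That step is repairable by the per-round argument for the rounds before exhaustion. The paper's modular route buys transfer of any offline guarantee round by round. Yours buys more: it is not tight, and tightening it beats the paper's bound. The truncation at $B_i/2$ is unnecessary. Once $\alpha^t_i=1$, Property~(4) gives $p^t_j\ge v_{ij}$ for every good $j$, so $v_i(y)\le\mathbf{p}^t\cdot y$ for every bundle $y$, affordable or not. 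Splitting instead at $L=\{i:\pi_i<B_i\}$ and $H=\{i:\pi_i=B_i\}$ gives
\[
\mathrm{RMVUP}\le\sum_{i\in H}B_i+\sum_t\sum_j p^t_j\sum_{i\in L}x^\star_{ij_t}\le\sum_{i\in H}\pi_i+\sum_t R_t\le 2R.
\]
So Online FPPE is actually $1/2$-competitive. This is the ratio the paper only conjectures, and it is best possible already for $T=1$ by Theorem~\ref{t:lower_bound1}.
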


\begin{remark}
  We remark that Algorithm~\ref{alg:1} set \textit{fixed unit-prices} (w.r.t. to buyers) at any round $t \geq 1$. However the price of a good $j \in [m]$ may differ from round to round, $p_j^t \neq p_j^{t'}$. However, having different unit prices across different time-steps is very reasonable, since the demand of a good can be subject to time, which naturally lead to price variations. For example a keyword for "airline tickets" may increase across summer months or around the Christmas holidays.
\end{remark}
\noindent In the rest of the section we provide the main steps for establishing Theorem~\ref{t:online}. With a slight abuse of notation, we denote with FPPE the overall revenue produced by Algorithm~\ref{alg:1} across all the $T$ rounds.

Our overall goal is to establish that
\[\mathrm{FPPE}\geq \frac{1}{4} \cdot \mathrm{RMVUP}.\]
In the rest of the section we denote with $b_i^t := \sum_{j \in [m]}b_{ij}^t$, the overall spending of buyer $i \in [n]$ at round $t \in [T]$ in FPPE and similarly let $o_i^t := \sum_{j \in [m]}o_{ij}^t$ denote the spending of that agent in the RMVUP\footnote{Specifically \(o_{ij}^t\) is the price of good \(j\) when sold to buyer \(i\).}. Under this notation, our goal takes the following form,
\[\sum_{t=1}^T \sum_{i \in S(t)} b_i^t \geq \frac{1}{4} \cdot \sum_{t=1}^T \sum_{i \in S(t)} o_i^t.\]

In order to establish the claim above, we introduce an \textit{intermediate} solution that is constructed by combining the outputs of Algorithm~\ref{alg:1} with the RMVUP. 

\begin{definition}[intermediate solution]\label{d:hat}
  For each round $t \geq 1$ and buyer $i \in S(t)$, let the budgets $\hat{B}^t_i = \max( o_i^t, B_i^t)$. The intermediate solution $(\hat{x}^t,\hat{b}^t,\hat{p}^t)$ is defined as as the FPPE computed with respect to budgets $\hat{B}^t_i$ for the buyers $i \in S(t)$. Its total revenue equals,
  \[\sum_{t=1}^T \sum_{i\in S(t)}\hat{b}^t_{i},\]
  where $\hat{b}^t_{i}:= \sum_{j \in [m]}\hat{b}^t_{ij}$ is the spending of buyer $i \in [n]$ at round $t \in [T]$ in the intermediate solution.
\end{definition}
\noindent We remark that the intermediate solution is not necessarily budget feasible. In particular, it can be the case that $\sum_{t = s_i}^{t_i} \hat{b}^t_i \geq B_i$ for some buyer $i \in [n]$. However this should not raise any concern since the intermediate solution is only used for the sake of the analysis.

The reason that the intermediate solution is crucial for the analysis is that it permits us establish a \textit{time-wise comparison} with RMVUP,
\begin{equation}\label{eq:17}
  \underbrace{\sum_{i\in S(t)} \hat{b}^t_{i} \geq \frac{1}{2} \sum_{i\in S(t)}o^t_{i}~~~~~~\text{       for each }t \geq 1}_{\text{time-wise comparison}}
\end{equation}
\noindent while at the same time permits us a \textit{buyer-wise comparison} between Algorithm~\ref{alg:1} and the intermediate solution,
\begin{equation}\label{eq:19}
  \underbrace{\sum_{t=s_i}^{t_i} b^t_{i} \geq \frac{1}{2} \sum_{t=s_i}^{t_i} \hat{b}^t_{i}~~~~~~\text{       for each buyer }i \in [n]}_{\text{buyer-wise comparison}}
\end{equation}
\noindent In simpler terms, we establish that the revenue received by intermediate solution at round $t \geq 1$, is at least $50\%$ what RMVUP received at the same round. While we also establish that the overall spending of any buyer $i \in [n]$ in the Algorithm~\ref{alg:1} is at least $50\%$ of its overall spending in the intermediate solution. Combining the latter two claims the proof of Theorem~\ref{t:online} directly follows.

We dedicate the rest of the section to establish the \textit{time-wise} and \textit{buyer-wise} comparisons. Equations ~\ref{eq:17} is formally established in Lemma~\ref{l:5}. The proof of Lemma~\ref{l:5} crucially uses Theorem~\ref{t:revenue}.

\begin{lemma}\label{l:5}
  At each round \(t \geq 1\), the revenue of FPPE is at least $1/2$ the revenue of RMVUP,
  \[\sum_{i\in S(t)} \hat{b}^t_{i} \geq \frac{1}{2} \sum_{i\in S(t)}o^t_{i}.\]
\end{lemma}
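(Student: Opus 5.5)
Fix a round $t$ and look at the static market $\mathcal{M}_t$ formed by the active buyers $S(t)$, the $m$ goods of round $t$, the valuations $v_{ij}$, and the budgets $\hat{B}^t_i=\max(o^t_i,B^t_i)$. By Definition~\ref{d:hat}, $(\hat{x}^t,\hat{b}^t,\hat{p}^t)$ is exactly the FPPE of $\mathcal{M}_t$. Theorem~\ref{t:revenue} therefore gives
\[\sum_{i\in S(t)}\hat b^t_i \;\geq\; \frac12\cdot \mathrm{RMVUP}(\mathcal{M}_t).\]
So the lemma follows once I show $\mathrm{RMVUP}(\mathcal{M}_t)\ge \sum_{i\in S(t)} o^t_i$. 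I will do this by exhibiting a feasible solution of the RMVUP linear program for $\mathcal{M}_t$ whose objective value equals the round-$t$ revenue of the offline optimum.

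The candidate is the restriction of the offline RMVUP solution to round $t$. Write $(o^t_{ij})$ for its payments and $(x^{\star,t}_{ij})$ for its allocation of the copies $j_t$. Recall that the offline instance treats each $j_t$ as a separate good and gives buyer $i$ value $v_{ij}$ for $j_t$ when $t\in[s_i,t_i]$ and $0$ otherwise. Buyers outside $S(t)$ therefore have zero value for round-$t$ goods, so individual rationality forces their round-$t$ payments to be $0$. We may also assume they receive no round-$t$ allocation, since setting it to zero changes no payment. I then check the constraints of RMVUP on $\mathcal{M}_t$ one by one:
\begin{itemize}
\item Supply: $\sum_{i\in S(t)}x^{\star,t}_{ij}\le 1$ is the offline supply constraint for good $j_t$.
\item Individual rationality: $o^t_{ij}\le v_{ij}x^{\star,t}_{ij}$ holds because $t\in[s_i,t_i]$ for every $i\in S(t)$, so the offline valuation of $j_t$ is exactly $v_{ij}$.
\item Budget: $\sum_j o^t_{ij}=o^t_i\le \max(o^t_i,B^t_i)=\hat B^t_i$ holds by construction.
\end{itemize}
The objective value of this solution is $\sum_{i\in S(t)}o^t_i$, which proves the claim.

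Chaining the two inequalities gives the lemma. The only real content is the definition of $\hat B^t_i$: taking the maximum with $o^t_i$ is precisely what makes the offline round-$t$ spending budget-feasible in the per-round market. The main point needing care is that Theorem~\ref{t:revenue} is applied to the market with budgets $\hat B^t_i$, not $B^t_i$; with the algorithm's actual remaining budgets $B^t_i$ the restricted offline solution could violate budgets.
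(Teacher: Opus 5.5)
Your proof is correct, but it gets the bound by a different route from the paper's.

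\textbf{The paper's route.} It first passes to an auxiliary market in which each $i\in S(t)$ has budget exactly $\tilde B_i = o^t_i$. It invokes budget monotonicity of FPPE revenue, citing Lemma~\ref{l:pacing}, to get $\sum_{i\in S(t)}\hat b^t_i \ge \sum_{i\in S(t)}\tilde b_i$. It then applies Theorem~\ref{t:revenue} to the auxiliary market. That market's RMVUP value is exactly $\sum_{i\in S(t)} o^t_i$, because the restricted offline solution saturates every budget.

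\textbf{Your route.} You apply Theorem~\ref{t:revenue} directly to the market with budgets $\hat B^t_i$. The monotonicity step then moves to the benchmark side, where it is trivial: raising budgets only enlarges the feasible region of the RMVUP linear program. So the restricted offline solution stays feasible, and $\mathrm{RMVUP}(\mathcal{M}_t)\ge\sum_{i\in S(t)}o^t_i$.

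\textbf{Comparison.} Your version is more economical. It uses no structural property of FPPE beyond Theorem~\ref{t:revenue}. In particular, it does not need revenue monotonicity of FPPE in budgets. In the paper that property rests on Lemma~\ref{l:pacing}, which is only a sketch and is stated for pacing multipliers rather than revenue. The paper's intermediate fact, that an FPPE with the smaller budgets $o^t_i$ already earns half of $\sum_{i\in S(t)}o^t_i$, is not needed for this lemma. You also correctly identify the one place the construction matters: budget feasibility holds only because $\hat B^t_i=\max(o^t_i,B^t_i)\ge o^t_i$.
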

\begin{proof}
  Let \((\mathbf{x^\star}^t,\mathbf{o^t})\) denote the allocation/payment vector of RMVUP at round \(t \in [T]\). Let \((\mathbf{\tilde{x}},\mathbf{\tilde{p}},\mathbf{\tilde{b}},\mathbf{\tilde{\alpha}})\) denote the first-price pacing equilibrium of a market composed by the buyers $i \in S(t)$ where each buyer $i \in S(t)$ has budget $\tilde{B}_i = o_i^t$. Since $\hat{B}^t_i \geq \tilde{B}_i$ by the \textit{budget monotonicity} property of first-price pacing equilibrium (Lemma~\ref{l:pacing} in Appendix~\ref{a:online}),
  \[\sum_{i\in S(t)} \hat{b}^t_{i} \geq \sum_{i\in S(t)}\tilde{b}_i.\]
  By the optimality condition of RMVUP, \((\mathbf{x^\star}^t,\mathbf{o^t})\) must be the revenue maximization solution of the market with $i \in S(t)$ and $\tilde{B}_i = o_i^t$. However by the definition of \((\mathbf{x^\star}^t,\mathbf{o^t})\) each buyer pays exactly $o_i^t$, Theorem~\ref{t:revenue} implies that
  \[\sum_{i \in S(t)} \tilde{b}_i \geq \frac{1}{2} \sum_{i \in S(t)} o_i^t.\]
\end{proof}
Up next we provide the \textit{buyer-wise} comparison guarantee (Equation~\ref{eq:19}) that is formally established in Lemma~\ref{l:main}. 
\begin{lemma}\label{l:main}
  The total spending of any buyer $i \in [n]$ in Algorithm~\ref{alg:1} is at least $1/2$ of the its total spending in the intermediate solution,
  \[
    \sum_{t=s_i}^{t_i} b_i^t \geq \frac{1}{2} \sum_{t=s_i}^{t_i} \hat{b}_i^t.
  \]
\end{lemma}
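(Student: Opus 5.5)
The plan is to reduce the buyer-wise comparison to a per-round inequality that charges the excess of the intermediate solution to the RMVUP spending $o_i^t$, and then to sum over rounds. Concretely, fix a buyer $i$. For every round $t\in[s_i,t_i]$ I aim to show
\begin{equation*}
\hat b_i^t \;\le\; b_i^t + \bigl(\hat B_i^t - B_i^t\bigr) \;\le\; b_i^t + o_i^t .
\end{equation*}
The second inequality is immediate from $\hat B_i^t=\max(o_i^t,B_i^t)$. Summing over $t$ gives
\begin{equation*}
\sum_t \hat b_i^t \le \sum_t b_i^t + \sum_t o_i^t .
\end{equation*}

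Next I split into two cases according to what happens to buyer $i$ in Algorithm~\ref{alg:1}.
\begin{itemize}
\item \emph{Buyer $i$ exhausts its budget.} If at some round the buyer's remaining budget reaches $0$, then $\sum_t b_i^t=B_i\ge \sum_t o_i^t$ by feasibility of RMVUP. Hence $\sum_t\hat b_i^t\le 2\sum_t b_i^t$, which is the claim.
\item \emph{Buyer $i$ never exhausts its budget.} Then at every active round Property~(6) of the per-round FPPE gives $\alpha_i^t=1$ and $B_i^t>b_i^t$. In this case I would prove the sharper per-round statement $\hat b_i^t\le b_i^t$, which again yields the claim.
\end{itemize}
The argument for the second case goes as follows. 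The intermediate budgets dominate the algorithm's budgets coordinatewise, $\hat B^t\ge B^t$. By the budget-monotonicity property of FPPE (Lemma~\ref{l:pacing}), I expect prices to weakly increase, $\hat p_j^t\ge p_j^t$. Any good $j$ on which buyer $i$ spends in the intermediate FPPE satisfies $v_{ij}\ge \hat\alpha_i v_{ij}=\hat p_j^t\ge p_j^t\ge \alpha_i^t v_{ij}=v_{ij}$, so $\hat p^t_j=p^t_j=v_{ij}$ on every such good. In other words, buyer $i$ can only buy, at unchanged prices, goods it was already tied for in the original FPPE. It remains to argue that buyer $i$ receives no more of these goods than before. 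Here I would use the uniqueness of FPPE together with the fact that the competing buyers on those goods, having larger budgets, are at least as aggressive. In particular, an unconstrained buyer in the original FPPE already receives all of any price-equals-value good that is not taken by budget-exhausted competitors, and the share those competitors take can only grow with their budgets.

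The main obstacle is the per-round inequality $\hat b_i^t\le b_i^t+(\hat B_i^t-B_i^t)$ in the budget-constrained case, since this is needed when $i$ is paced in some rounds but not in others. This inequality is a statement that raising all budgets (buyer $i$'s by $\Delta_i$ and others' arbitrarily) increases $i$'s spending by at most $\Delta_i$. My first attempt would be an interpolation argument: raise the budgets one buyer at a time. When another buyer's budget rises, I would show that $i$'s spending does not increase. If $i$ is budget-constrained its spending equals its budget, and otherwise the price-equals-value argument above applies. When $i$'s own budget rises by $\Delta_i$, its spending rises by at most $\Delta_i$, since spending never exceeds budget and is unchanged if $i$ was already unconstrained. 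Each step would rely on Lemma~\ref{l:pacing} and on the convex-program characterization of FPPE for uniqueness. If this monotonicity in others' budgets fails, I would fall back to proving directly that $\sum_t\hat b_i^t\le B_i+\sum_t o_i^t$ whenever $i$ is budget-constrained in Algorithm~\ref{alg:1}.
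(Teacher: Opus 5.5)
Your bookkeeping is right, and the step you flag as the main obstacle is easy if you split by rounds instead of by buyers. In a round with $b_i^t=B_i^t$, the inequality $\hat b_i^t\le b_i^t+(\hat B_i^t-B_i^t)$ is just $\hat b_i^t\le \hat B_i^t$. In a round with $b_i^t<B_i^t$, Property~(6) gives $\alpha_i^t=1$, so the inequality follows from the sharper claim $\hat b_i^t\le b_i^t$. No interpolation is needed. This per-round charging is also the correct way to treat a buyer who eventually exhausts its budget. The paper's first step instead asserts $\sum_t \hat B_i^t\le 2B_i$ via $\sum_t B_i^t\le B_i$, which is false, since the $B_i^t$ are remaining budgets and all equal $B_i$ if $i$ spends nothing until its last round. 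So the whole lemma reduces to one per-round statement: if $\alpha_i^t=1$, then $\hat b_i^t\le b_i^t$.

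That statement is where your proposal has a genuine gap. You correctly show that every good $i$ buys in the intermediate FPPE keeps price $v_{ij}$. But \emph{$i$ receives no more of these goods} is left unproved, and the justification you sketch is false. FPPE pins down multipliers and prices, not allocations among tied bidders, and a paced competitor whose budget grows can shift spending away from a good it shares with $i$. For example, take goods $j,j'$, a buyer $k$ with $v_{kj}=v_{kj'}=2$, and buyers $i,l$ with large budgets and $v_{ij}=1$, $v_{lj'}=1$ (all other values $0$). For both $B_k=1$ and $B_k=3/2$ the FPPE has $\alpha_k=1/2$, both prices equal $1$, and $k$ must spend exactly $B_k$. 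With $B_k=1$ one FPPE has $x_{kj}=1$, so $i$ spends $0$. With $B_k=3/2$ one FPPE has $x_{kj'}=1$ and $x_{kj}=1/2$, so $i$ spends $1/2$. Thus $k$'s share of $j$ shrank as its budget grew. The same example breaks the \emph{others' budgets rise $\Rightarrow$ $i$'s spending does not increase} step of your interpolation.

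So the claim can only hold for a suitably chosen FPPE at $\hat B^t$, and the missing idea is how to construct one. The paper does it in three moves. First, lower $i$'s budget to exactly $b_i^t$; the round-$t$ FPPE remains an FPPE with $\alpha_i^t=1$. Second, raise the other buyers' budgets to $\hat B_j^t$; by Lemma~\ref{l:pacing} the multipliers weakly increase, so $\alpha_i$ stays $1$, while $i$'s spending is now capped by its budget $b_i^t$. Third, restore $i$'s budget to $\hat B_i^t$; this preserves the equilibrium because no-unnecessary-pacing is vacuous when $\alpha_i=1$. The budget constraint itself does the capping, so no good-by-good comparison of allocations is needed.
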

\begin{proof}
  First notice that the total spending of buyer $i \in [n]$  in the intermediate solution is at most $2 \cdot B_i$,
  \begin{align*}
    \sum_{t=s_i}^{t_i} \hat{b}_i^t \leq
    \sum_{t=s_i}^{t_i} \hat{B}_i^t = \sum_{t=s_i}^{t_i} \max(o_i^t, B_i^t)
    \leq \sum_{t=s_i}^{t_i} \left(o_i^t + B_i^t \right)
    \leq \sum_{t=s_i}^{t_i} o_i^t + \sum_{t=s_i}^{t_i}  B_i^t
    \leq 2 B_i.
  \end{align*}
  As a result, we only need to consider the buyers $i \in [n]$ such that
  \[ \sum_{t=s_i}^{t_i} b_i^t < B_i.\]

  Let us consider such a buyer $i \in [n]$. We will establish the even stronger claim that
  \begin{equation}\label{eq:123}
    b_i^t \geq \hat{b}^t_i \quad \text{ for each } t \in [s_i,t_i].
  \end{equation}
  The cornerstone idea is that since $\sum_{t=s_i}^{t_i} b_i^t < B_i$ then
  \[ b_i^t < B_i^t \quad \text{ for each } t \in [s_i,t_i].\]
  which then by Property~(6) of a first-price pacing equilibrium implies that
  \[ \alpha_i^t = 1 \quad \text{ for each } t \in [s_i,t_i].\]
  We will leverage the latter fact to establish Equation~\ref{eq:123}. To do so, we introduce two more \textit{intermediate First-Price Pacing Equilibria} with budget vectors \( B^{(1)} \) and \( B^{(2)} \).
  \smallskip

  Let \( (\mathbf{x}^t,\mathbf{b}^t,\mathbf{p}^t,\mathbf{\alpha^t}) \) denote the FPPE computed by Algorithm~\ref{alg:1} at time-step \( t \) with the budget vector \( B^t := (B_1^t,\ldots, B_n^t) \). Now consider the budget vector $B^{(1)}$ defined as follows,
  \[B_j^{(1)} = \begin{cases} B_j^t, & j \neq i \\ b_i^t, & j = i \end{cases} .\]
  Notice that in modified budget vector $B^{(1)}$, each buyer $j \in S(t)$ attains its original budget apart from buyer $i \in [n]$ whose budget is exactly its spending in original FPPE.
  \begin{claim}\label{c:1}
    \( (\mathbf{x}^t,\mathbf{b}^t,\mathbf{p}^t,\mathbf{\alpha}^t) \) remains the FPPE even under the modified budgets $B^{(1)}$. Moreover, $\alpha_i^t = 1$.
  \end{claim}
  \begin{proof}
    The only FPPE constraint of concern is the {\em budget feasibility } constraint for buyer \(i\) i.e. \(\sum_{j=1}^m x_{ij} \cdot p_j \leq B^{(1)}_i\), but by definition buyer \(i\) pays exactly \( \sum_{j=1}^m x_{ij} \cdot p_j = b_i^t = B^{(1)}_i\) which is now their budget.
  \end{proof}
  Then consider the effect of increasing the budgets of all other buyers \(j \neq i\) to the budget they have in the intermediate solution i.e. \( B^{(2)}_j = \begin{cases} \hat{B}_j^t, & i\neq j \\ b_i^t, & i=j \end{cases} \), and let \( (\mathbf{x}^{(2)},\mathbf{b}^{(2)},\mathbf{p}^{(2)},\mathbf{\alpha}^{(2)}) \) denote the FPPE for this instance. Notice that by the definition of $\hat{B}^t$ and $B^{(2)}$ the budgets are monotonically increasing \( B^{(2)}_j \geq B_j^{(1)} \). Thus, the pacing multipliers must also weakly increase \( \alpha _j^{(1)} \leq \alpha _j^{(2)} \) (see Lemma~\ref{l:pacing} in Appendix~\ref{a:online}), but recall that \( \alpha_i^{(1)} = \alpha _i^{t} = 1 \) and therefore \( \alpha _i^{(2)} = 1 \).
  We can now compare the FPPE for budgets \( B_j^{(2)} \) to the FPPE of the intermediate solution \( \hat{B} _j^t \).

  \begin{claim}\label{c:2} Since $\alpha_i^{(2)} = 1$ we have that
    \( (\mathbf{x}^{(2)},\mathbf{b}^{(2)},\mathbf{p}^{(2)},\mathbf{\alpha}^{(2)}) \) is also the FPPE for the intermediate solution, budget vector $\hat{B}^t = (\hat{B}_1,\ldots,\hat{B}_n)$.
  \end{claim}
  \begin{proof}
    Since the only difference between \( B^{(2)} \) and \( \hat{B}^t \) relate to buyer \( i \), we can focus on FPPE constraints relating to this buyer. Now, the only constraint that could be violated by an increase in budgets is the {\em no unnecessary pacing} constraint, however since \( \alpha _i^{(2)} = 1 \) this is not an issue. We must therefore conclude that \( ( \mathbf{x}^{(2)}, \mathbf{b}^{2}, \mathbf{p}^{(2)}, \mathbf{\alpha}^{(2)}) \) is also the FPPE for the intermediate solution.
  \end{proof}
  \noindent As result, we have that $\hat{b}_i^t = b^{(2)}_i \leq B_i^{(2)} = b_i^t$.
\end{proof}

\section{Revenue Guarantees for Concave Valuations}
\label{s:concave}

In this section, we extend the revenue guarantees of FPPE in the case of concave valuations, once buyer $i\in [n]$ gets quantity $x_{ij}> 0$ of good $j \in [m]$, its valuation is $v_{ij}(x_{ij})$. Concave valuations capture the fact that as the quantity $x_{ij} > 0$ increases the utility of buyer $i \in [n]$ exhibits diminishing returns. We assume that each valuation function $v_{ij}(\cdot)$ satisfies the following conditions:
\begin{enumerate}
  \item $v_{ij}(x_{ij})$ is non-decreasing.
  \item $v_{ij}(x_{ij})$ is concave\footnote{
        for any $x_{ij},x_{ij},\lambda \in [0,1]$,
        $v_{ij}( \lambda \cdot x_{ij} + (1 -\lambda) \cdot x'_{ij} ) \geq \lambda \cdot v_{ij}(x_{ij}) + (1 -\lambda)\cdot v'_{ij} (x'_{ij} )$} in $[0,1]$.
  \item $v_{ij}(0) = 0$.
\end{enumerate}

We first remark that computing the optimal revenue RMVUP in this case can be done by solving the convex program in which the individual rationality constraint $v_{ij} \cdot x_{ij} \ge b_i$ is replaced by the nonlinear yet convex constraint $v_{ij}(x_{ij}) \ge b_i$.

The notion of FPPE~\cite{CKPSSMW19} is taylored to the case of linear valuations, $v_{ij}(x_{ij}) = v_{ij} \cdot x_{ij}$ and it is not clear how it generalizes to general concave valuations. \cite{CKPSSMW19} establish that FPPE corresponds to the solution of the Eiseberg-Gale (EG) program for quasi-linear Fisher markets with linear valuations~\cite{CDGJMVY17}. As a result, a natural candidate for extending the notion of FPPE in general concave valuations is via extending the convex program of \cite{CDGJMVY17} to quasi-linear Fisher markets with concave valuations.

In order to extend the convex program of \cite{CDGJMVY17} in the case of concave valuation, we simply use the constraint $u_i \leq \sum_{j \in [m]} v_{ij}(x_{ij}) + \delta_i$ to derive the primal problem~(see Figure~\ref{f:EG}). In the primal program, the variable $\delta_i \ge 0$ denotes the \emph{budget excess}, i.e., the amount of money that buyer $i \in [n]$ does not spend.
The dual program can be derived using Fenchel duality; in the dual, $p_j$ represents the per-unit price of good $j$. We remark that as in the linear case, the solution of primal-dual pair of Figure~\ref{f:EG} will result in \textit{fixed unit-prices}. We also remark that the payment of an buyer $i \in [n]$, will be $\sum_{i \in [n]} \sum_{j \in [m]} x_{ij} \cdot p_j$.
\begin{figure}[h]

  \centering
  \setlength{\arraycolsep}{2pt}
  \scalebox{0.8}{
	\begin{minipage}{0.48\textwidth}
      \centerline{\textbf{Primal}}
      \[
        \begin{aligned}
          \max_{u,\delta,x}\quad
          & \sum_{i=1}^n \big(B_i \ln u_i - \delta_i\big)                       \\
          \text{s.t.}\quad
          & u_i \le \sum_{j=1}^m v_{ij}(x_{ij}) + \delta_i
          &                                                & \forall i \in [n], \\
          & \sum_{i=1}^n x_{ij} \le 1
          &                                                & \forall j \in [m], \\
          & u_i \ge 0,\ \delta_i \ge 0,\ x_{ij} \ge 0.
        \end{aligned}
      \]
	\end{minipage}
	\hfill
	\begin{minipage}{0.48\textwidth}
      \centerline{\textbf{Dual}}
      \[
        \begin{aligned}
          \min_{\alpha,p}\quad
          & \sum_{i=1}^n \Big(B_i \ln \frac{B_i}{\alpha_i} - B_i\Big) + \sum_{j=1}^m p_j                      \\
          & + \sum_{i=1}^n \sum_{j=1}^m
            \sup_{x \ge 0}
            \big(\alpha_i v_{ij}(x) - p_j x\big)                                                                 \\
          \text{s.t.}\quad
          & 0 < \alpha_i \le 1
          &                                                                              & \forall i \in [n], \\
          & p_j \ge 0
          &                                                                              & \forall j \in [m].
        \end{aligned}
      \]
	\end{minipage}}
  \caption{}
  \label{f:EG}
\end{figure}

In Theorem~\ref{t:approx} we establish that a solution produced by the EG program of Figure~\ref{f:EG} satisfies both the individual rationality and the budget spending constraints ($p_j \cdot x_{ij} \leq v_{ij}(x_{ij})$ and $\sum_{j \in [m]} p_j \cdot x_{ij} \leq B_i$). Most importantly, we establish that the revenue guarantees of a solution of the EG primal/dual programs.

\begin{theorem}\label{t:rev_concave}
  Let $(\mathbf{x},\mathbf{p})$ the solution to  primal/dual program of Figure~\ref{f:EG}. Both individual rationality and budget feasibility are guaranteed for each buyer $i \in [n]$ and the produced revenue satisfies,
  \[\sum_{ i \in [n]}\sum_{ j \in [m]} p_j \cdot x_{ij} \geq \frac{1}{\rho(\rho + 1)} \cdot \mathrm{RMVUP}\]
  where $\rho := \frac{v'_{ij}(0)}{v'_{ij}(1)}$ for general concave functions and $\rho:= \log\left(1 + \max_{ij} \frac{v'_{ij}(0)}{v'_{ij}(1)}\right)$ for concave functions satisfying the additional assumption that $x_{ij}\cdot v'(x_{ij})$ is a non-decreasing.
\end{theorem}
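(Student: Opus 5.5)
We would follow the same two-step template as in the proof of Theorem~\ref{t:revenue}. First, compare the liquid welfare of the EG outcome with the optimal liquid welfare, which upper bounds RMVUP. Second, compare the EG revenue with the liquid welfare of the EG allocation. The first job is to extract the KKT conditions of the primal/dual pair of Figure~\ref{f:EG}; these are the analogues of the FPPE properties.

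From stationarity in $u_i$ we get $\alpha_i = B_i/u_i$, and stationarity in $\delta_i$ gives $\alpha_i\le 1$, with $\alpha_i=1$ whenever $\delta_i>0$ (no unnecessary pacing). From stationarity in $x_{ij}$ we get $\alpha_i v'_{ij}(x_{ij}) \le p_j$, with equality whenever $x_{ij}>0$. Complementary slackness gives $\sum_i x_{ij}=1$ whenever $p_j>0$. Combining these with $u_i = \sum_j v_{ij}(x_{ij})+\delta_i$ should yield the spending identity $\sum_j p_j x_{ij} = B_i - \delta_i$, possibly after an Euler-type computation. This gives budget feasibility. Individual rationality follows from concavity: $p_j x_{ij} = \alpha_i v'_{ij}(x_{ij})x_{ij} \le v'_{ij}(x_{ij})x_{ij} \le v_{ij}(x_{ij})$.

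Next we prove the utility-maximization property, in the weak form needed for the proof of Lemma~\ref{t:POA_LW}. The linear argument used $v_i(o^\star_i)\le v_i(x_i)+\sum_j p_j(o^\star_{ij}-x_{ij})$. With concave $v_{ij}$ and the price condition $\alpha_i v'_{ij}(x_{ij})\le p_j$, buyers with $\alpha_i=1$ satisfy
\[
v_{ij}(o^\star_{ij}) \le v_{ij}(x_{ij}) + v'_{ij}(x_{ij})(o^\star_{ij}-x_{ij}) \le v_{ij}(x_{ij}) + p_j(o^\star_{ij}-x_{ij}).
\]
We then rerun the proof of Lemma~\ref{t:POA_LW} verbatim and obtain $\mathrm{LW}(\mathbf{x})\ge \tfrac12 \mathrm{LW}(\mathbf{o}^\star)$, up to the factor lost in the revenue step below. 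Where the tangent bound only holds up to the curvature ratio, we carry a factor $\rho$, giving $\mathrm{LW}(\mathbf{o}^\star)\le (\rho+1)\mathrm{LW}(\mathbf{x})$. This is where the $(\rho+1)$ in the statement should come from: the "$+\lw(\x)$" term from the price-sum bound plus a $\rho$-inflated first term.

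The main obstacle is the second step, that revenue is no longer equal to liquid welfare. For unpaced buyers ($\delta_i>0$, $\alpha_i=1$) we only get $p_j x_{ij}=v'_{ij}(x_{ij})x_{ij}$, which is smaller than $v_{ij}(x_{ij})$. For general concave functions we would bound $v_{ij}(x)\le v'_{ij}(0)x\le \rho\, v'_{ij}(1)x\le \rho\, v'_{ij}(x)x$. This gives $\sum_j p_j x_{ij}\ge v_i(x_i)/\rho$, so revenue is at least $\mathrm{LW}(\mathbf{x})/\rho$; budget-exhausted buyers contribute $B_i$ exactly. Multiplying the two losses yields the factor $1/(\rho(\rho+1))$.

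Under the extra assumption that $x v'(x)$ is non-decreasing, we would instead write $v(x)=\int_0^x v'(t)\,dt$. We then split $[0,x]$ into a small initial piece, where $v'\le v'(0)$, and the remainder, where $t v'(t)\le x v'(x)$ implies $v'(t)\le x v'(x)/t$. Integrating gives $v(x)\le x v'(x)\bigl(1+\ln(v'(0)/v'(x))\bigr)$ after optimizing the cut point, which is roughly a $\log(1+\max v'(0)/v'(1))$ factor. The delicate part is getting exactly the logarithmic form stated, and making sure the same $\rho$ also controls the liquid-welfare step.
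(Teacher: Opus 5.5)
Your plan has the same architecture as the paper's proof. The KKT conditions give individual rationality and budget feasibility, then comes a liquid-welfare comparison, and finally revenue $\ge \mathrm{LW}(\mathbf{x})/\rho$ via $v_{ij}(x)\le\rho\,x\,v'_{ij}(x)$, with the same integral splitting in the logarithmic case.

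\textbf{The gap.} The step that fails is the spending identity $\sum_j p_jx_{ij}=B_i-\delta_i$. Euler's relation $x v'(x)=v(x)$ holds only for linear $v$; for concave $v$ with $v(0)=0$ you only have $x v'(x)\le v(x)$. So KKT gives only $\sum_j p_jx_{ij}=\alpha_i\sum_j x_{ij}v'_{ij}(x_{ij})\le\alpha_i v_i(x_i)\le B_i$. That is enough for budget feasibility, but your later claim that paced (``budget-exhausted'') buyers contribute exactly $B_i$ is false. Take one buyer and one good with $v(x)=2x-x^2/2$ and $B=1/2$. The program gives $x=1$, $\delta=0$, $u=3/2$, $\alpha=1/3$ and $p=\alpha v'(1)=1/3$, so this paced buyer spends $1/3<B$.

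The missing ingredient is Property~4(b) of Theorem~\ref{t:approx}. If $\alpha_i<1$, then $\delta_i=0$ and $\alpha_i v_i(x_i)=\alpha_i u_i=B_i$. Applying your curvature bound with the factor $\alpha_i$ gives $\sum_j p_jx_{ij}\ge \alpha_i v_i(x_i)/\rho=B_i/\rho$. Combined with your bound for buyers with $\alpha_i=1$, every buyer spends at least $\frac1\rho\min(B_i,v_i(x_i))$, which is Eq.~(\ref{eq:101}).

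\textbf{The liquid-welfare step.} Your phrase ``carry a factor $\rho$ where the tangent bound only holds up to the curvature ratio'' is not an argument, and the loss is in fact unnecessary. Partition buyers by $\alpha_i$ rather than, as the paper does, by the spending threshold $B_i/\rho$.
\begin{itemize}
\item If $\alpha_i=1$, then $v_i(x_i)\le u_i=B_i$, and your tangent inequality $v_i(o^\star_i)\le v_i(x_i)+\sum_j p_j(o^\star_{ij}-x_{ij})$ holds exactly. No budget feasibility of $\mathbf{o}^\star_i$ is needed.
\item If $\alpha_i<1$, then $v_i(x_i)=u_i=B_i/\alpha_i>B_i$, so the buyer's liquid-welfare term is $B_i$, and by the above the buyer spends at most $B_i$.
\end{itemize}
With this partition the proof of Lemma~\ref{t:POA_LW} runs verbatim and gives $\mathrm{LW}(\mathbf{x})\ge\frac12\mathrm{LW}(\mathbf{o}^\star)$. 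Hence revenue is at least $\frac{1}{2\rho}\mathrm{RMVUP}\ge\frac{1}{\rho(\rho+1)}\mathrm{RMVUP}$ for $\rho\ge 1$. This is sharper than the paper's Lemma~\ref{t:POA_LW_concave}, which pays $\rho+1$ because, for buyers spending more than $B_i/\rho$, it can only bound $B_i$ by $\rho\min(B_i,v_i(x_i))$.

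\textbf{The logarithmic case.} Let $R=\max_{ij}v'_{ij}(0)/v'_{ij}(1)$. Your factor $1+\ln R$ is exactly what the paper's appendix computation produces as well. It is tight for this class: take $v'$ constant on $[0,1/R]$ and proportional to $1/t$ on $[1/R,1]$, which gives $v(1)=(1+\ln R)\,v'(1)$. So do not try to reach the stated $\log(1+R)$ form. Both your argument and the paper's support $\rho=1+\log R$.
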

\noindent Theorem~\ref{t:rev_concave} generalizes the revenue guarantees of Theorem~\ref{t:revenue}. Notice that in case of linear function the ratio $\rho = 1$ and Theorem~\ref{t:rev_concave} recovers Theorem~\ref{t:revenue}. In case of general concave functions the ratio $\rho = \frac{v'_{ij}(0)}{v'_{ij}(1)}$ while in the special case where $x_{ij}\cdot v'(x_{ij})$ is a non-decreasing, the ratio $\rho:= \log\left(1 + \max_{ij} \frac{v'_{ij}(0)}{v'_{ij}(1)}\right)$ which is much smaller. The latter class of valuation functions captures the important class of sum of concave monomials, $v_{ij}(x) = \sum_{k} \lambda_k \cdot ( x - p_k)^{\alpha_k}$ where $\lambda_k,p_k \geq 0$ and $\alpha_k \in [0,1]$.

In the rest of the section, we provide the main steps for establishing Theorem~\ref{t:rev_concave}.
The main technical contribution of section is Theorem~\ref{t:approx} establishing crucial properties for any solution of $(\mathbf{x},\mathbf{p})$ of the primal/dual program of Fig.~\ref{f:EG} such as individual rationality, budget feasibility, and utility maximization.
\begin{restatable}{theorem}{thmApprox}\label{t:approx}
  Let $(\mathbf{x},\mathbf{p})$ denote the solutions of the primal/dual programs in Figure~\ref{f:EG}. Then the following properties hold,
  \begin{enumerate}
    \item $v_{ij}(x) \geq p_j \cdot x_{ij}$
          \smallskip

    \item $\sum_{j=1}^n p_j \cdot x_{ij} \leq B_i$
          \smallskip

    \item If $p_j > 0$ then $\sum_{i=1}^nx_{ij} = 1$
          \smallskip

    \item For each buyer $i \in [n]$ one of the following properties holds
          \smallskip
          \begin{enumerate}
            \item $\mathbf{x}_i \in \mathrm{argmax}_{x_i} \{\sum_{j=1}^m v_{ij}(x_{ij}) - p_j \cdot x_{ij}~:~\sum_{j \in [m]}p_{ij}\cdot x_{ij} \leq B_i\}$ (utility maximization)\\
                  \smallskip
                  or
                  \smallskip
            \item $\sum_{j=1}^n p_j \cdot x_{ij} \geq B_i / \rho$ where $\frac{v'_{ij}(0)}{v'_{ij}(1)}$ for general concave functions and $\rho:= \log\left(1 + \max_{ij} \frac{v'_{ij}(0)}{v'_{ij}(1)}\right)$ for concave functions satisfying $x_{ij}\cdot v'(x_{ij})$ be a non-decreasing function.
          \end{enumerate}
  \end{enumerate}
\end{restatable}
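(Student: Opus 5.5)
The plan is to read all four properties off the KKT conditions of the primal program in Figure~\ref{f:EG}, in the same way FPPE is recovered from the quasi-linear Eisenberg--Gale program in the linear case. The primal maximizes a concave objective over a convex feasible region, since $v_{ij}$ is concave, and Slater's condition holds (take $x_{ij}$ small and $\delta_i>0$). So strong duality holds and KKT multipliers exist. I will assume each $v_{ij}$ is differentiable, as the statement of $\rho$ already does.

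\textbf{Step 1: the KKT system.} Let $\beta_i\ge 0$ be the multiplier of the constraint $u_i\le\sum_j v_{ij}(x_{ij})+\delta_i$, and let $p_j\ge 0$ be the multiplier of the supply constraint $\sum_i x_{ij}\le 1$. Stationarity and complementary slackness give the following.
\begin{itemize}
\item From $u_i$: $B_i/u_i=\beta_i$. In particular $u_i>0$, $\beta_i>0$, and the utility constraint is tight: $u_i=\sum_j v_{ij}(x_{ij})+\delta_i$.
\item From $\delta_i$: $\beta_i\le 1$, with equality whenever $\delta_i>0$.
\item From $x_{ij}$: $\beta_i v'_{ij}(x_{ij})\le p_j$, with equality whenever $x_{ij}>0$.
\item $p_j>0$ implies $\sum_i x_{ij}=1$, which is exactly Property~3.
\end{itemize}
I identify $\alpha_i:=\beta_i$ with the dual variable of Figure~\ref{f:EG}.

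\textbf{Step 2: individual rationality and budget feasibility.} Concavity together with $v_{ij}(0)=0$ gives $x\,v'_{ij}(x)\le v_{ij}(x)$. Hence, whenever $x_{ij}>0$,
\[
p_jx_{ij}=\alpha_i v'_{ij}(x_{ij})\,x_{ij}\le\alpha_i v_{ij}(x_{ij})\le v_{ij}(x_{ij}),
\]
which is Property~1. Summing over $j$,
\[
\sum_j p_jx_{ij}\le \alpha_i\sum_j v_{ij}(x_{ij})=\alpha_i(u_i-\delta_i)\le\alpha_i u_i=B_i,
\]
which is Property~2.

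\textbf{Step 3: Property~4, split on $\delta_i$.}

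If $\delta_i>0$, then $\alpha_i=1$. The conditions become $v'_{ij}(x_{ij})\le p_j$, with equality when $x_{ij}>0$. These are exactly the first-order optimality conditions of the concave, separable, unconstrained problem $\max_{x_i\ge 0}\sum_j\big(v_{ij}(x_{ij})-p_jx_{ij}\big)$. So $\mathbf{x}_i$ maximizes utility even without the budget constraint. By Step~2 it is budget-feasible, so it is also the constrained maximizer, giving case (a).

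If $\delta_i=0$, then $\sum_j v_{ij}(x_{ij})=u_i=B_i/\alpha_i$, and the payment equals $\alpha_i\sum_j x_{ij}v'_{ij}(x_{ij})$. It therefore suffices to prove the per-good inequality
\[
v_{ij}(x)\le\rho\, x\,v'_{ij}(x)\qquad\text{for } x\in(0,1],
\]
because then the payment is at least $\alpha_i\sum_j v_{ij}(x_{ij})/\rho=B_i/\rho$, which is case (b).
\begin{itemize}
\item For general concave $v_{ij}$: $v_{ij}(x)\le x\,v'_{ij}(0)=\frac{v'_{ij}(0)}{v'_{ij}(1)}\,x\,v'_{ij}(1)\le\rho\, x\,v'_{ij}(x)$, using that $v'_{ij}$ is non-increasing. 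I take $\rho$ to be the maximum ratio over all pairs $(i,j)$.
\item When $t\,v'_{ij}(t)$ is non-decreasing: for $t\le x$ we have $v'_{ij}(t)\le\min\big(v'_{ij}(0),\,x\,v'_{ij}(x)/t\big)$. Integrating over $[0,x]$ and splitting at $t^\ast=x\,v'_{ij}(x)/v'_{ij}(0)$ gives
\[
v_{ij}(x)\le x\,v'_{ij}(x)\big(1+\ln(v'_{ij}(0)/v'_{ij}(x))\big)\le x\,v'_{ij}(x)\big(1+\ln(v'_{ij}(0)/v'_{ij}(1))\big).
\]
\end{itemize}

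\textbf{Main obstacle.} The delicate step is matching this last bound to the stated $\rho=\log\big(1+\max v'(0)/v'(1)\big)$. My integral gives $1+\ln r$, where $r=\max v'(0)/v'(1)$, and I will need to sharpen the split point or adjust the logarithm's base or constant to recover the paper's exact form. Beyond that, I must handle two degenerate points rigorously: derivative values at the boundary $x_{ij}=0$ (using one-sided derivatives), and justifying KKT when some $B_i=0$ (discard such buyers).
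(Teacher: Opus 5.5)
Your proposal follows the paper's proof essentially line by line. It uses the same KKT system (your $\beta_i$ is the paper's $\alpha_i$), Property~1 from $x\,v'_{ij}(x)\le v_{ij}(x)$, Property~3 from complementary slackness, and the bound $v_{ij}(x)\le x\,v'_{ij}(0)$ in the general case. When $t\,v'_{ij}(t)$ is non-decreasing, you use the same integral split at $z=x\,v'_{ij}(x)/v'_{ij}(0)$. Your small deviations are all valid:
\begin{itemize}
\item a single chain $\sum_j p_jx_{ij}\le\alpha_i(u_i-\delta_i)\le B_i$ for Property~2;
\item splitting Property~4 on $\delta_i$ rather than on $\alpha_i$;
\item noting that the unconstrained maximizer is budget-feasible and hence is the constrained maximizer. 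This fills a step the paper leaves implicit.
\end{itemize}

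Do not try to resolve your ``main obstacle'' by sharpening the split point or changing the base. The paper's own proof also ends with $v_{ij}(x)\le x\,v'_{ij}(x)\bigl(1+\log\frac{v'_{ij}(0)}{v'_{ij}(1)}\bigr)$ and then uses this factor as $\rho$. So the $\log(1+\max r)$ in the statement should be read as $1+\log(\max r)$.

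Your constant is tight. For $x_0\in(0,1)$ let $v'(t)=r$ on $[0,x_0/r]$, $v'(t)=x_0/t$ on $[x_0/r,x_0]$, and $v'(t)=1$ on $[x_0,1]$. Then $v$ is concave, $t\,v'(t)$ is non-decreasing, $v'(0)/v'(1)=r$, and $v(x_0)=(1+\ln r)\,x_0v'(x_0)$.

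This tightness makes the statement as printed false. Take $x_0=1/2$ and two identical such buyers sharing one good, each with budget $B<v(1/2)$. The program gives $x_{ij}=1/2$, $\delta_i=0$ and $\alpha_i=B/v(1/2)<1$.
\begin{itemize}
\item Case (a) fails: the budget is slack while $v'(1/2)=p/\alpha_i>p$, so the buyer would buy more.
\item Case (b) fails: the payment is exactly $B/(1+\ln r)<B/\ln(1+r)$.
\end{itemize}
Base $2$ fails too, e.g.\ at $r=2$, since $1+\ln 2>\log_2 3$. With the natural log and $r<e-1$, case (b) would even demand a payment above $B$, contradicting Property~2.

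With $\rho=1+\log\max_{ij}\frac{v'_{ij}(0)}{v'_{ij}(1)}$ (and $\rho=\max_{ij}\frac{v'_{ij}(0)}{v'_{ij}(1)}$ in the general case, as you took it), your argument is complete.
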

\noindent Due to space limitation the proof of Theorem~\ref{t:approx} is deferred to the Appendix~\ref{app:concave}.

Using Theorem~\ref{t:approx}, in Lemma~\ref{t:POA_LW_concave} we generalize Lemma~\ref{t:POA_LW}. In the case of concave valuation functions, the Liquid Social Welfare is defined as $\mathrm{LW}(\mathbf{x}):= \sum_{i \in [n]} \min \left(\sum_{j\in [m]} v_{ij}(x_{ij}) , B_i \right)$.

\begin{restatable}{lemma}{LemPOALWConcave}\label{t:POA_LW_concave}
  Let $(\mathbf{x},\mathbf{p})$ denote the solutions of the primal/dual programs in Figure~\ref{f:EG}.  and $\mathbf{o^\star}$ the allocation maximizing Liquid Social Welfare. Then, \[\mathrm{LW}(\mathbf{x}) \geq \frac{1}{\rho + 1}\cdot \mathrm{LW}(\mathbf{o^\star}).\]
\end{restatable}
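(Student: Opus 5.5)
I would mirror the proof of Lemma~\ref{t:POA_LW}, with Theorem~\ref{t:approx} replacing the linear FPPE properties. Write $v_i(x_i) := \sum_{j} v_{ij}(x_{ij})$. As before, we may assume $v_i(o^\star_i) \le B_i$ for every buyer: by monotonicity and continuity we can shrink some $o^\star_{ij}$ without changing $\min(v_i(o^\star_i),B_i)$. Let $V := \{i : v_i(x_i) \ge B_i\}$. Then $\lw(\x) = \sum_{i\in V} B_i + \sum_{i \notin V} v_i(x_i)$, and $\lw(\mathbf{o}^\star) \le \sum_{i \in V} B_i + \sum_{i\notin V} v_i(o^\star_i)$.

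The buyers $i \notin V$ split according to property 4 of Theorem~\ref{t:approx}.

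\emph{Case 4(a), utility maximization.} The argument of Lemma~\ref{t:POA_LW} applies verbatim. If $\sum_j p_j o^\star_{ij} > v_i(o^\star_i)$, the inequality holds because utility at $x_i$ is at least that of the empty bundle, which is $0$. Otherwise $\mathbf{o}^\star_i$ is budget feasible, so it cannot beat $x_i$. Either way,
\[ v_i(o^\star_i) \le v_i(x_i) + \sum_j p_j (o^\star_{ij} - x_{ij}). \]

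\emph{Case 4(b).} Here $\sum_j p_j x_{ij} \ge B_i/\rho$. By property 1 of Theorem~\ref{t:approx} and the assumption on $\mathbf{o}^\star$,
\[ v_i(o^\star_i) \le B_i \le \rho \sum_j p_j x_{ij} \le \rho\, v_i(x_i) \le \rho \min(v_i(x_i),B_i). \]
Call this set of buyers $W$; each such buyer is charged at most $\rho$ times its own contribution to $\lw(\x)$.

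It remains to bound the price terms $\sum_{i \notin V \cup W} \sum_j p_j (o^\star_{ij} - x_{ij})$. As in the linear case, use $\sum_i o^\star_{ij} \le 1$ and property 3 (full allocation when $p_j>0$). This bounds the sum by $\sum_{i \in V \cup W} \sum_j p_j x_{ij}$. That quantity is at most $\sum_{i\in V} B_i + \sum_{i \in W} v_i(x_i)$, using property 2 for $V$ and property 1 together with $W \cap V = \emptyset$ for $W$.

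Summing everything gives
\[ \lw(\mathbf{o}^\star) \le \lw(\x) + \sum_{i \in V} B_i + \rho \sum_{i\in W} v_i(x_i) + \sum_{i \in W} v_i(x_i). \]
The main obstacle is that this naive count loses a factor: it bounds $\lw(\mathbf{o}^\star)$ by roughly $(\rho+2)\lw(\x)$ rather than $(\rho+1)\lw(\x)$. To fix it, I would \emph{not} drop the $W$-buyers' $p_j x_{ij}$ terms separately. Instead I would write $v_i(o^\star_i) \le \rho \sum_j p_j x_{ij}$ for $i \in W$, and let the price-term bound $\sum_{i\in V\cup W}\sum_j p_jx_{ij}$ absorb only payments. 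Collecting terms, the total is then at most
\[ \lw(\x) + \sum_{i \in V} B_i + (\rho-1)\sum_{i\in W}\sum_j p_j x_{ij} + \sum_{i\in W}\sum_j p_jx_{ij} \le (\rho+1)\,\lw(\x). \]
Here the payments of $W$ are counted inside $\lw(\x)$ once and at most $\rho$ times more in total. I expect this careful bookkeeping — arranging the $W$ contributions so that the coefficient is exactly $\rho+1$ — to be the step needing the most care. Everything else is a direct transcription of Lemma~\ref{t:POA_LW}.
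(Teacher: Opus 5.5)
Your proof is correct. It uses the same three ingredients as the paper:
\begin{itemize}
\item the utility-maximization inequality for buyers in case 4(a) of Theorem~\ref{t:approx};
\item the bound $B_i\le\rho\sum_j p_jx_{ij}\le\rho\,v_i(x_i)$ for buyers in case 4(b);
\item the full-allocation bound on the price terms.
\end{itemize}
The difference is the partition. The paper splits buyers only by spending, $V=\{i:\sum_j p_jx_{ij}\le B_i/\rho\}$, and treats every buyer in $V$ as a utility maximizer. It charges the price terms to the payments of buyers outside $V$, and then needs $\sum_{i\notin V}B_i+\sum_{i\in V}v_i(x_i)\le\rho\,\mathrm{LW}(\mathbf{x})$. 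For buyers in $V$ this silently requires $v_i(x_i)\le\rho B_i$. That is true from the KKT conditions, since $\alpha_i=1$ forces $v_i(x_i)\le u_i=B_i$, but it does not follow from properties 1--4 as stated. You first remove the saturated buyers $v_i(x_i)\ge B_i$, exactly as in Lemma~\ref{t:POA_LW}. Your split into $V$, $W$ and $U:=[n]\setminus(V\cup W)$ therefore uses only the stated properties, so your route is the more self-contained one.

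Two corrections to your bookkeeping.

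\emph{The ``main obstacle'' is an artifact.} In the naive count you bounded $\sum_{i\in V}B_i+\sum_{i\in U}v_i(x_i)$ by $\mathrm{LW}(\mathbf{x})$, but it actually equals $\mathrm{LW}(\mathbf{x})-\sum_{i\in W}v_i(x_i)$. Using the equality gives $\mathrm{LW}(\mathbf{o}^\star)\le\mathrm{LW}(\mathbf{x})+\sum_{i\in V}B_i+\rho\sum_{i\in W}v_i(x_i)$ immediately. So switching from values to payments is unnecessary. Your ``fix'' works only because its $(\rho-1)$ coefficient performs exactly this absorption of one copy of the $W$-terms into $\mathrm{LW}(\mathbf{x})$.

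\emph{The last step needs $\rho\ge1$.} The inequality $\sum_{i\in V}B_i+\rho\sum_{i\in W}v_i(x_i)\le\rho\,\mathrm{LW}(\mathbf{x})$ uses $\rho\ge1$, and you should state this. It holds for $\rho=\max_{ij}v'_{ij}(0)/v'_{ij}(1)$ by concavity, and the paper's argument relies on it as well.
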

\noindent The proof of Lemma~\ref{t:POA_LW_concave} is presented in Appendix~\ref{app:concave}. We conclude the section with the proof of Theorem~\ref{t:rev_concave} associating the revenue of $(\mathbf{x},\mathbf{p})$ with RMVUP. Let $(\mathbf{x^\star},\mathbf{b^\star})$ denote the RMVUP solution. 

We first approximately lower bound the 
spending of each buyer by its Liquid Welfare,
\begin{equation}\label{eq:101}
  \sum_{j \in [m]} p_{j}\cdot x_{ij} \geq \frac{1}{\rho} \cdot \min \left(\sum_{j=1}^m v_{ij}(x_{ij}) , B_i \right)\quad \quad \text{each buyer }i \in [n].
\end{equation}
Equation~(\ref{eq:101}) follows directly in case $\sum_{j \in [m]} p_{j}\cdot x_{ij} \geq B_i / \rho$, we thus assume $\sum_{j \in [m]} p_{j}\cdot x_{ij} < B_i / \rho$. Then Theorem~\ref{t:approx} implies that
\vspace{-2mm}
\[\mathbf{x}_i \in \mathrm{argmax}_{x_i} \left\{\sum_{j=1}^m v_{ij}(x_{ij}) - p_j \cdot x_{ij}~:~\sum_{j \in [m]}p_{ij}\cdot x_{ij} \leq B_i\right\}.\]
Since $\sum_{j \in [m]} p_j \cdot x_{ij} < B_i$ we get that $v'_{ij}(x_{ij}) = p_j$ for all goods $j \in [m]$. Due to the concavity of the $v_{ij}(\cdot)$ we have that
$ v_{ij}(x_{ij}) \leq v'_{ij}(0) \cdot x_{ij} \leq \frac{1}{\rho}\cdot v'_{ij}(x_{ij}) \cdot x_{ij} = \frac{1}{\rho}\cdot p_j  x_{ij}$ and Eq.~(\ref{eq:101}) follows by summing over all $j \in [m]$.
Then the proof of Theorem~\ref{t:rev_concave} follows by the fact that
\[\sum_{i \in [n]}\sum_{j \in [m]} p_j \cdot x_{ij} \geq \frac{1}{\rho}\sum_{i = 1}^n \min \left(\sum_{j=1}^m v_{ij}(x_{ij}) , B_i \right) = \frac{1}{\rho} \cdot \mathrm{LW}(\mathbf{x}) \geq \frac{1}{\rho(\rho + 1)} \cdot \mathrm{LW}(\mathbf{x^\star}) \geq \mathrm{RMVUP}.\]
\section{Conclusion}
In this work, we analyze the revenue properties of the First-Price Pacing Equilibrium (FPPE) in markets with divisible goods and budget-constrained buyers. Our results show that FPPE guarantees at least $50\%$ of the optimal revenue, even when compared against benchmarks that allow variable unit prices, while itself relying only on fixed unit prices. This guarantee is especially meaningful in light of our inapproximability result: revenue maximization under fixed unit prices is APX-hard. As a consequence, FPPE is the currently best-known polynomial-time approximation algorithm for revenue maximization with fixed unit prices. We further study FPPE in dynamic environments with online buyer arrivals. We establish that recomputing an FPPE at each round based on the set of active buyers yields a $1/4$-competitive online algorithm.

Several open questions remain. On the offline side, there is a gap between the APX-hardness of revenue maximization under fixed unit prices and the $1/2$-approximation factor achieved by FPPE; determining whether this approximation ratio is optimal or whether stronger inapproximability results can be obtained is an important open problem. On the online side, there is a wide gap between the current $1/4$-competitive guarantee and the upper bound of approximately $0.85$. We conjecture that FPPE achieves a $1/2$-competitive ratio in the online setting. Resolving these gaps would further establish FPPE as a canonical mechanism for revenue-robust autobidding design.

\bibliographystyle{plainnat}
\bibliography{competitive}

\clearpage
\appendix

\section{Rounding Algorithm and Proof of Lemma~\ref{l:1}}\label{app:red}
In this section we present the rounding algorithm rounding a solution of RMFUP to another solution with fixed prices satisfying the conditions stated in Lemma~\ref{l:1}.

\begin{algorithm}[h!]
  \caption{Rounding the FPRM solution}
  \label{alg:round}

  \KwIn{$(\mathbf{b},\mathbf{p})$}

  \vspace{1mm}
  \textbf{\textcolor{blue}{Phase I: Increase prices}}\\
  $(\mathbf{\hat{b}},\mathbf{\hat{p}}) \leftarrow (\mathbf{b},\mathbf{p})$\;

  \ForAll{goods $s \in [m]$}{
    \smallskip
    \smallskip

    $
    \hat{p}_s :=
    \begin{cases}
      2/3, & p_s \leq 2/3,      \\
      1,   & p_s \in (2/3 , 1], \\
      1,   & p_s >1.
    \end{cases}
    $
  }

  \vspace{1mm}
  \textbf{\textcolor{blue}{Phase II: Handle goods with special buyers}}\\
  $(\mathbf{\hat{b}},\mathbf{\hat{p}}) \leftarrow (\mathbf{b},\mathbf{p})$\;

  \ForAll{goods $s \in [m]$}{
    \If{$\sum_{e \in E} b_{es} \leq 2/3$}{
      Assign price $\hat{p}_s = 2/3$ and sell exclusively to the special buyer
      $s$, setting $\hat{b}_{ss} = 2/3$\;
    }
  }

  \vspace{1mm}
  \textbf{\textcolor{blue}{Phase III: Round element buyers' allocations}}\\
  $(\mathbf{\hat{b}},\mathbf{\hat{p}}) \leftarrow (\mathbf{b},\mathbf{p})$\;

  \ForAll{element buyers $e \in E$}{
    \If{$\sum_{s \in [m]} b_{es} > 0$}{
      Let $s \in [m]$ be an arbitrary good such that $b_{es} > 0$\;
      Set $\hat{b}_{es} = 1/3$ and $\hat{b}_{es'} = 0$ for all $s' \neq s$\;
    }
  }

  \vspace{1mm}
  \textbf{\textcolor{blue}{Phase IV: Adjust goods with partial allocations}}\\
  $(\mathbf{\hat{b}},\mathbf{\hat{p}}) \leftarrow (\mathbf{b},\mathbf{p})$\;

  \ForAll{goods $s \in [m]$}{
    \If{$p_s = 1$ \textbf{and} $\sum_{e \in E} b_{es} \leq 2/3$}{
      Assign price $\hat{p}_s = 2/3$ and sell exclusively to the special buyer
      $s$, setting $\hat{b}_{ss} = 2/3$\;
    }
  }
  \KwOut{$(\mathbf{\hat{b}},\mathbf{\hat{p}})$}
\end{algorithm}
Algorithm~\ref{alg:round} proceeds in $4$ different phases. In Phase~I, the algorithm sets all prices that are less than $2/3$ either to $2/3$. The latter does not affect the individual rationality since all valuations are greater than $2/3$. The algorithm also sets all prices that are greater than $2/3$ to $1$. The latter also does not affect individual rationality since these goods can only be sold to the non-special buyers that have valuation either $0$ or $1$. Finally setting a price strictly greater than $1$ means that the price is not sold at all. So decreasing price does neither affect any constraint.

In Phase~$\mathrm{II}$ if the algorithm finds that the overall revenue produced by a given good $\sum_{i \in [n]} b_{is} \leq 2/3$ then the algorithm sells the whole good to the special buyer $s \in [n]$ at price $2/3$. The latter guarantees that at the end of Phase~$\mathrm{I}$ $i)$ any good $s \in [m]$ is either sold to the respective special buyer $s \in [n]$ or is sold to only to element buyers $ii)$ the overall revenue of the produced solution $(\mathbf{\hat{b}},\mathbf{\hat{p}})$ does not decrease.

In Phase~$\mathrm{III}$, Algorithm~\ref{alg:round} iterates over the element buyers and in case it finds an element buyer $e \in E$ that spends a positive fraction of its budget $(\sum_{e \in E}b_{es} > 0)$ then Algorithm~\ref{alg:round} selects an arbitrary element $e \in E$ such that $b_{es} > 0$. Due to Phase~$\mathrm{II}$ we are ensured that no good is simultaneously sold to both a special and the an element buyer. Thus, we are ensured that $p_s = 1$. As a result, by spending its whole budget $B_e = 1/3$ buyer $e \in E$ only gets $1/3$ of the good $s \in [m]$. And since its good $s \in [m]$ can be sold to exactly three element buyers, we are ensured that at the end of Phase~$\mathrm{II}$ the solution  $(\mathbf{\hat{b}},\mathbf{\hat{p}})$ is feasible (no overselling). Also the revenue from Phase~$\mathrm{I}$ to Phase~$\mathrm{II}$ does not decrease since all buyer spend as at least what they were spending in Phase~$\mathrm{II}$. As a result, at the end of Phase~$\mathrm{II}$ we are ensured that
\begin{itemize}
  \item Each element buyer $e \in E$ either spends its whole budget $B_e = 1/3$ to a single good $s \in [m]$ where $e \in s$ or does not spend any of its budget.

  \item Any good $s \in [m]$ either has price $p_s = 2/3$ and its completely sold to the special buyer $s \in [n]$ or $p_s = 1$ and the good is sold to $1$ or $2$ or $3$ buyers with each one of them contributing either its whole budget ($1/3$) or $0$.

\end{itemize}
Phase~$\mathrm{IV}$ guarantees that if a good $s \in [m]$ admits price $1$ then it is sold to exactly $3$ element buyers. The latter is very easy to achieve. In case after Phase~$\mathrm{III}$ there exists a good $s \in [m]$ with $p_s = 1$ but $\sum_{e \in E}b_{es} \leq 2/3$. Then Algorithm~\ref{alg:round} simply reduces the price of $s \in [m]$ to $2/3$ and completely sells it to the special buyer. The latter does not decrease the revenue at the same guarantees that if $p_s = 1$ then exactly three element buyers pay $1/3$ for good $s \in [m]$.

\section{Omitted Proof of Section~\ref{s:revenue_linear}}\label{app:2}

\RMFUPBound*

\begin{proof}
  Consider the case of a single good, $m=1$ and $n$ buyers. Each buyer $i \in [n]$ admits budget,
  \[B_i = \begin{cases} n & i = 1 \\ 1 & i > 1 \end{cases} \]
  and valuation \[v_{i1} = n \cdot  B_i.\]

  Notice that in RMVUP the seller can sell $1/n$ of the good to buyer $1$ and charge her $b_{11} = n$. In other words, the unit-price of buyer $1$ is $n^2$. Notice that the latter does not violate neither the individual rationality nor the budget constraints. The seller can then sell $1/n$ of the good to each buyer $i > 1$, $x_{i1} = 1/n$ and charge her $b_{i1} = 1$. As a result,
  \[\mathrm{RMVUP} = 2n - 1.\]
  Let us now consider the case of fixed unit price across buyers. Let $p$ denote the unit price of the good. In case $ p \leq n $ then $\mathrm{RMFUP} \leq n$. In case $p > n$ then due to the individual rationality constraints only buyer $1$ is willing to buy a positive quantity of the good at this price. In this case $\mathrm{RMFUP} \leq n$ since buyer $1$ can spend at most his budget $B_1 = n$. So in this construction, the ratio of RMFUP to RMVUP is \(\frac{n}{n+(n-1)}\) and
  \[\lim_{n \to \infty} \frac{n}{n+(n-1)} = \frac{1}{2}.\]
\end{proof}

\section{Omitted proof of Section~\ref{s:online}}\label{a:online}
The proof of Theorem~\ref{t:any-competitive} the competitive ratio of any algorithm is no better than \( \frac{1}{4} \left( 2 + \sqrt{2} \right) \approx 0.85 \).
\begin{proof}\label{p:any-competitive}
  \begin{figure}[h]
    \begin{tikzpicture}[
      buyer/.style={circle, draw, minimum size=1.5cm, fill=blue!10, thick},
      item/.style={rectangle, draw, minimum size=1.5cm, fill=red!10, thick},
      valuation/.style={font=\normalsize,fill=white, inner sep=4pt},
      ]
      \def\BudgetX{-1.5}
      \def\BuyerX{0.5}
      \def\ItemX{6}
      \def\TimeX{7.5}
      \def\ValuationX{3}
      \def\LineMinX{-2.5}
      \def\LineMaxX{8.5}
      \def\NodeFontSize{\normalsize}

      \def\buyerOneBudget{1}
      \def\buyerTwoBudget{$1+\sqrt{2}$}

      \node at (\BudgetX, 3.5) {\textbf{Budget $B_i$}};
      \node at (\BuyerX, 3.5) {\textbf{Buyer $i$}};
      \node at (\ItemX, 3.5) {\textbf{Good $j$}};
      \node at (\ValuationX, 3.5) {\textbf{Valuation $v_{ij}$}};
      \node at (\TimeX, 3.5) {\textbf{Time $t$}};

      \node[buyer, font=\NodeFontSize] (A1) at (\BuyerX, 2.0) {1};
      \node[anchor=center, font=\NodeFontSize] at (\BudgetX, 2.0) {\buyerOneBudget};

      \node[buyer, font=\NodeFontSize] (A2) at (\BuyerX, 0.0) {2};
      \node[anchor=center, font=\NodeFontSize] at (\BudgetX, 0.0) {\buyerTwoBudget};

      \draw[dashed, gray, thick] (\LineMinX, -1.5) -- (\LineMaxX, -1.5);
      \node at (\BudgetX, -1.5) [anchor=north west, font=\NodeFontSize] {Potential Arrival};

      \node[buyer, font=\NodeFontSize] (A3) at (\BuyerX, -3.0) {3};
      \node[anchor=center, font=\NodeFontSize] at (\BudgetX, -3.0) {\buyerTwoBudget};

      \node[item, font=\NodeFontSize] (I1) at (\ItemX, 2.0) {1};
      \node[anchor=center, font=\NodeFontSize] at (\TimeX, 2.0) {1};

      \node[item, font=\NodeFontSize] (I2) at (\ItemX, 0.0) {1};
      \node[anchor=center, font=\NodeFontSize] at (\TimeX, 0.0) {2};

      \draw[thick] (A1) -- node[valuation, above] {\buyerOneBudget} (I1);

      \draw[thick] (A2) -- node[valuation, above, sloped] {\buyerTwoBudget} (I1);

      \draw[thick] (A2) -- node[valuation, below] {\buyerTwoBudget} (I2);

      \draw[thick] (A3) -- node[valuation, above, sloped] {\buyerTwoBudget} (I2);

    \end{tikzpicture}
	\caption{\label{fig:difficult-instance}Adversarial instance which is difficult to solve well in an online manner.}
  \end{figure}
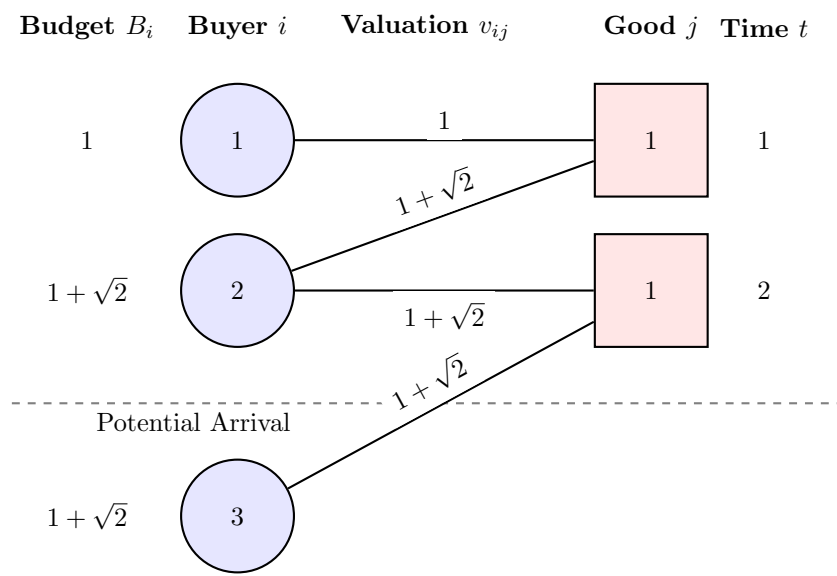

  Consider the problem instance illustrated in Figure~\ref{fig:difficult-instance}, in which buyers \( 1 \) and \( 2 \) have arrived at time-step \( t=1 \) and buyer \( 3 \) might arrive in time-step \( t=2 \). Due to the nature of online problems, any algorithm has to make an unalterable decision at time-step \( t=1 \), however, the best course of action at time \( t=1 \) depends on whether buyer \( 3 \) arrives.

  In this adversarial instance buyer \( 3 \) arrives only if the algorithm made a decision that would not be optimal if buyer \( 3 \) arrived. We therefore ensure that the decision made by the algorithm is not optimal, and by doing so we get a lower-bound for the competitive ratio of any online algorithm of \( \frac{1}{4} \left( 2 + \sqrt{2} \right) \).

  If the algorithm decides to allocate less than half the good at time-step 1 to buyer 2 i.e. \( x^1_{21} < 1 \), we allow buyer 3 to arrive, otherwise buyer 3 does not arrive.

  \begin{claim}
    If \( x^1_{21} \geq \frac{1}{2} \) then the revenue of any algorithm is at most \( \frac{3}{2} + \sqrt{2}\), but the RMVUP is \(2+\sqrt{2}\) and so the ratio is \( \frac{\frac{3}{2} + \sqrt{2}}{ 2+\sqrt{2} } = \frac{1}{4} \left( 2 + \sqrt{2} \right) \).
  \end{claim}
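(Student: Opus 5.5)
The plan is a direct computation on the instance of Figure~\ref{fig:difficult-instance}. On this branch buyer~$3$ never arrives, so only buyers~$1$ and~$2$ are present. Buyer~$1$ is active only at $t=1$, with $B_1=1$ and $v_{11}=1$. Buyer~$2$ is active at $t=1,2$, with $B_2=1+\sqrt{2}$ and value $1+\sqrt{2}$ for the good in both rounds. I will prove an upper bound on the revenue of any online algorithm, compute RMVUP exactly, and then take the ratio.

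\emph{Upper bound on the algorithm.} I split the total revenue into buyer~$1$'s payment and buyer~$2$'s payment.
\begin{itemize}
\item Buyer~$1$ pays only in round $1$. Supply is at most one unit, so $x^1_{11}\le 1-x^1_{21}\le \tfrac12$. By individual rationality, $p^1_1 x^1_{11}\le v_{11}x^1_{11}\le \tfrac12$.
\item By budget feasibility, buyer~$2$'s total payment over both rounds is at most $B_2=1+\sqrt{2}$.
\end{itemize}
Adding the two gives revenue at most $\tfrac32+\sqrt{2}$. This bound holds for any prices, fixed or variable, so it applies to every algorithm.

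\emph{Computing RMVUP.} For the upper bound, buyer~$1$ pays at most $\min(B_1, v_{11})=1$ and buyer~$2$ pays at most $B_2=1+\sqrt{2}$, so RMVUP $\le 2+\sqrt{2}$. For the matching lower bound, consider the following feasible solution of the offline LP:
\begin{itemize}
\item allocate the whole round-$1$ good to buyer~$1$ at payment $1$;
\item allocate the whole round-$2$ good to buyer~$2$ at payment $1+\sqrt{2}$.
\end{itemize}
Each payment equals that buyer's value for the allocation and does not exceed the budget. Hence RMVUP $=2+\sqrt{2}$.

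\emph{The ratio.} Rationalizing with $(2-\sqrt2)$ gives $(\tfrac32+\sqrt2)(2-\sqrt2)=1+\tfrac{\sqrt2}{2}$ and $(2+\sqrt2)(2-\sqrt2)=2$. So
\[
\frac{\tfrac32+\sqrt2}{2+\sqrt2}=\frac{1+\sqrt2/2}{2}=\frac{2+\sqrt2}{4}.
\]

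The only delicate step is the algorithm bound. One must observe that individual rationality caps buyer~$1$'s payment by value times quantity, not by its budget. Together with $x^1_{21}\ge\tfrac12$, this limits buyer~$1$'s payment to $\tfrac12$. The rest is bookkeeping.
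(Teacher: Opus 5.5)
Your proof is correct and takes the same route as the paper. You cap buyer~1's payment at $\tfrac12$ using individual rationality and the leftover supply $1-x^1_{21}\le\tfrac12$, cap buyer~2's payment at its budget $1+\sqrt2$, and exhibit the offline allocation with revenue $2+\sqrt2$; you also justify $\mathrm{RMVUP}\le 2+\sqrt2$ through the per-buyer caps, which the paper only asserts.
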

  \begin{proof}
    The best revenue when more than half the good at the first time-step is allocated to buyer \( 2 \) i.e. \( x_{21}^1 \geq 1 \) is to sell it at the highest price buyer \( 2 \) is willing to pay i.e. \( p_{21}^1 = \left( 1 + \sqrt{2} \right) \). The rest of the good can then be sold to buyer \( 1 \) at price \( 1 \) so revenue \( x_{11} \cdot p^1_{11} = \frac{1}{2} \cdot 1 = \frac{1}{2} \). The rest of buyer \( 2 \)'s budget can then be exhausted in time-step \( 2 \). The total revenue is thus \( \frac{1}{2}+ \left( 1 + \sqrt{2} \right) \).

    The RMVUP is to sell the entire good at time-step 1 to buyer 1 for the price of 1 and the entire good at time-step 2 to buyer 2 for a price of \( 1+\sqrt{2} \). The total revenue is thus \( 2+\sqrt{2} \).

    The ratio in this case is therefore \( \frac{\frac{3}{2} + \sqrt{2}}{2 + \sqrt{2}}\).
  \end{proof}

  However if the algorithm made the opposite choice the ratio is still the same.

  \begin{claim}
    If \( x^1_{21} < \frac{1}{2} \) then the revenue of any algorithm is at most \( \frac{1}{2} + \frac{3}{2} \sqrt{2}\), but the RMVUP is \( 2 \cdot (1+\sqrt{2}) \) and so the ratio is \( \frac{ \frac{1}{2} + \frac{3}{2} \sqrt{2}}{ 2 \cdot (1+\sqrt{2}) } = \frac{1}{4} \left( 2 + \sqrt{2} \right) \).
  \end{claim}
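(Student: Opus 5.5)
The plan is to bound the algorithm's revenue day by day using only individual rationality and the unit supply. Then I will compute RMVUP exactly for the instance of Figure~\ref{fig:difficult-instance} in which buyer~3 arrives at $t=2$. In this branch the instance is as follows. On day~1 the good can go to buyer~1 (value $1$, budget $1$, active only at $t=1$) or to buyer~2 (value $1+\sqrt{2}$, budget $1+\sqrt{2}$, active on $[1,2]$). On day~2 it can go to buyer~2 or to buyer~3 (value $1+\sqrt{2}$, budget $1+\sqrt{2}$).

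\textbf{Step 1: the benchmark.} Individual rationality gives $b_{ij}^t\le v_{ij}x_{ij}^t$, so the revenue extracted from a day's good is at most $\max_i v_{i1}\cdot\sum_i x^t_{i1}\le 1+\sqrt{2}$ on each day. Hence $\mathrm{RMVUP}\le 2(1+\sqrt{2})$. This bound is attained by selling the whole good to buyer~2 at $t=1$ for $1+\sqrt{2}$, which exhausts buyer~2's budget, and the whole good to buyer~3 at $t=2$ for $1+\sqrt{2}$. Both sales respect the budgets and individual rationality, so $\mathrm{RMVUP}=2(1+\sqrt{2})$.

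\textbf{Step 2: the algorithm.} Write $x:=x^1_{21}<\tfrac12$. On day~1, buyer~1 receives at most $1-x$ of the good and pays at most $1\cdot(1-x)$, while buyer~2 pays at most $(1+\sqrt2)x$. So $R_1\le (1-x)+(1+\sqrt2)x=1+\sqrt2\,x$. On day~2, by the same value argument, $R_2\le 1+\sqrt2$. Summing and using $x<\tfrac12$, the total revenue is
\[
R_1+R_2 < 1+\tfrac{\sqrt2}{2}+1+\sqrt2 = 2+\tfrac32\sqrt2 .
\]
I will note that this is the quantity that produces the stated ratio: $(2+\tfrac32\sqrt2)\cdot 4=(2+\sqrt2)(2+2\sqrt2)$. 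The displayed ``$\tfrac12+\tfrac32\sqrt2$'' in the claim should therefore read $2+\tfrac32\sqrt2$, and I will use this corrected value.

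\textbf{Step 3: conclude.} Dividing gives $\frac{2+\frac32\sqrt2}{2(1+\sqrt2)}=\frac14(2+\sqrt2)$. Since the inequality is strict for $x<\tfrac12$, the ratio is at most this value, which matches the first claim's bound in the branch $x^1_{21}\ge\tfrac12$.

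The only delicate point is the day-1 accounting. I must not let buyer~2's remaining budget be charged twice. This is harmless because the day-2 bound uses the good's value rather than budgets. No real obstacle is expected beyond checking the algebra and the constant mismatch.
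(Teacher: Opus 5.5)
Your proposal is correct and follows the same route as the paper: you bound day~1 revenue by $\frac12+\frac12(1+\sqrt2)$ and day~2 revenue by $1+\sqrt2$ using individual rationality and unit supply, and you get $\mathrm{RMVUP}=2(1+\sqrt2)$ by selling day~1 to buyer~2 and day~2 to buyer~3. Your linear bound $R_1\le 1+\sqrt2\,x$ makes rigorous what the paper only asserts as ``the best revenue''; since it uses only individual rationality, it also covers buyer-specific prices, which the paper's construction implicitly uses. Your correction of the numerator to $2+\frac32\sqrt2$ is also right: the $\frac12+\frac32\sqrt2$ in the claim is a typo, and the paper's own proof computes $\frac12+\frac32(1+\sqrt2)=2+\frac32\sqrt2$.
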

  \begin{proof}
    The best revenue that can be reached when less than half the good is sold to buyer 2 at time-step \( 1 \) is \( 2 \) i.e. \( x^1_{21} < \frac{1}{2} \) is like before to sell just less than half the good to buyer \( 2 \) at price \( p^1_{21} = 1+\sqrt{2} \), then sell the rest to buyer \( 1 \) at price \( p_{11}^1 = 1 \) for revenue \( x_{11}^1 p_{11}^1 = \frac{1}{2} \). The revenue at time-step \( 1 \) is thus \( \frac{1}{2} \left( 1 + 1 + \sqrt{2} \right) \).

    With the arrival of buyer \( 3 \) we can sell the good at time-step 2 for a price of \( 1+\sqrt{2} \) to both buyer 2 and buyer 3 at a revenue of \(\left(x_{21}^2 + x_{31}^2\right) \left( 1+\sqrt{2} \right) =  1+\sqrt{2} \), but no higher due to {\em individual rationality}.

    The total revenue is therefore \( \frac{1}{2} + \frac{1}{2} \left( 1 + \sqrt{2} \right) +  \left( 1 + \sqrt{2} \right)\), however the RMVUP is to sell the entire good at time-slot 1 to buyer 2 at price \( 1+\sqrt{2} \)  and the entire good at time-slot 2 to buyer 3 also at price \( 1+\sqrt{2} \) giving an optimal revenue of \( 2\cdot \left( 1+\sqrt{2} \right) \).

    The competitive ratio is thus \( \frac{\frac{1}{2} + \frac{3}{2} \left( 1+\sqrt{2} \right)}{2\cdot \left( 1+\sqrt{2} \right)
    } = \frac{1}{4} \left( 2 + \sqrt{2} \right) \).
  \end{proof}
  So no matter what choice the algorithm makes at time-step \( t=1 \), the ratio is \( \frac{1}{4} \left( 2 + \sqrt{2} \right) \approx 0.85\).
\end{proof}
This proof is very similar to the proof of weakly increasing revenue based on increasing the budget of an buyer~\cite{CKPSSMW19}, we simply focus on the pacing multipliers instead of the revenue.
\begin{lemma}\label{l:pacing}
  In an FPPE, increasing the budget of buyers weakly increases their pacing multipliers.
\end{lemma}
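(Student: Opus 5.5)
We prove Lemma~\ref{l:pacing} by comparing two FPPEs directly through their pacing multipliers, mirroring the revenue-monotonicity argument of \cite{CKPSSMW19}. It suffices to handle the case where budgets increase weakly coordinatewise, $B'_i \ge B_i$ for all $i$. Let $(\mathbf{x},\mathbf{p},\alpha)$ be the FPPE for budgets $B$ and $(\mathbf{x}',\mathbf{p}',\alpha')$ the FPPE for $B'$. Suppose for contradiction that the set
\[
D := \{ i : \alpha'_i < \alpha_i \}
\]
is nonempty. For every $i\in D$ we have $\alpha'_i<\alpha_i\le 1$. By Property~(6) (no unnecessary pacing), each such buyer exhausts its budget in the new equilibrium: $\sum_j p'_j x'_{ij} = B'_i \ge B_i$.

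Next we compare what $D$ can win under the two equilibria. Consider any good $j$ with $x'_{ij}>0$ for some $i\in D$. Then
\[
p'_j = \alpha'_i v_{ij} < \alpha_i v_{ij} \le p_j .
\]
So every good bought by $D$ in the new FPPE has strictly lower price than in the old one, and in particular $p_j>0$.

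We then argue that in the old FPPE, such a good $j$ was fully allocated to buyers in $D$. Take any buyer $k\notin D$ with $x_{kj}>0$. Then $p_j=\alpha_k v_{kj}\le \alpha'_k v_{kj}\le p'_j$, contradicting $p'_j<p_j$. Hence only buyers of $D$ hold good $j$ in the old FPPE, and by Property~(5) they hold all of it: $\sum_{i\in D}x_{ij}=1$.

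Now sum the spending of $D$ in the new FPPE. Let $J:=\{j: \exists i\in D,\ x'_{ij}>0\}$. Then
\[
\sum_{i\in D} B_i \le \sum_{i\in D}\sum_{j\in J} p'_j x'_{ij} \le \sum_{j\in J} p'_j < \sum_{j\in J} p_j = \sum_{j\in J}\sum_{i\in D} p_j x_{ij} \le \sum_{i\in D} B_i ,
\]
which is a contradiction. The last inequality is the old budget constraint of Property~(2), using $\sum_{i\in D}x_{ij}=1$ on $J$. The strict inequality needs $J\neq\emptyset$.

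The main obstacle is exactly the degenerate case $J=\emptyset$: buyers in $D$ win nothing in the new FPPE yet are paced. In that case their spending is $0=B'_i\ge B_i$, so $B_i=0$. We will handle this either by noting that zero-budget buyers can be assumed to have $\alpha_i=0$ in both equilibria, or by using the uniqueness of FPPE to fix a canonical choice of multipliers. A second subtlety is that the argument uses Properties~(3) and~(4) with multipliers taken from the unique FPPE; we will rely on the uniqueness statement of \cite{CKPSSMW19} quoted in Section~\ref{s:prelim} to ensure $\alpha,\alpha'$ are well defined.
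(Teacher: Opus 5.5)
Your argument is correct once two small points are patched (below). It follows a genuinely different route from the paper's.

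\textbf{Comparison with the paper.} The paper's proof is a two-line sketch. It says the old FPPE still satisfies every condition under the larger budgets except possibly no unnecessary pacing, ``so to find the FPPE we might need to weakly increase the pacing multipliers.'' Taken literally, this does not rule out some multipliers going down during the readjustment. What makes it valid is the unstated structural fact from \cite{CKPSSMW19} that the FPPE multipliers are the componentwise-maximal budget-feasible pacing multipliers. The paper hints at this when it says the proof mirrors their revenue-monotonicity argument. Under that fact, the old $\alpha$ remains budget-feasible under $B'\ge B$ and is therefore dominated by $\alpha'$.

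You replace that external characterization with a self-contained counting argument on the set $D$ of buyers whose multipliers dropped:
\begin{itemize}
\item buyers in $D$ exhaust their new budgets;
\item every good they pay for in the new equilibrium became strictly cheaper;
\item so in the old equilibrium those goods were held entirely by $D$;
\item the spending chain then contradicts $B'\ge B$.
\end{itemize}
A side benefit is that you compare an \emph{arbitrary} FPPE for $B$ with an \emph{arbitrary} FPPE for $B'$, so you never need uniqueness. Taking $B'=B$ even reproves uniqueness of the multipliers of positive-budget buyers.

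\textbf{First repair.} The step $p'_j=\alpha'_i v_{ij}<\alpha_i v_{ij}$ requires $v_{ij}>0$. A buyer in $D$ may hold a zero-value good at price $0$, and then both the strict inequality and the conclusion $p_j>0$ can fail. Define instead $J:=\{j:\ p'_j x'_{ij}>0 \text{ for some } i\in D\}$.
\begin{itemize}
\item Your chain goes through unchanged, since only zero-spending terms are dropped.
\item Every $j\in J$ now has $v_{ij}>0$ for the relevant $i$.
\item $J=\emptyset$ means exactly that every $i\in D$ spends nothing, hence $B'_i=B_i=0$.
\end{itemize}
So, with no extra assumption, you obtain $\alpha'_i\ge\alpha_i$ for every buyer with $B'_i>0$.

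\textbf{Second repair.} For the remaining zero-budget case, do not appeal to uniqueness. The multipliers of zero-budget buyers are genuinely not determined by Properties (1)--(6). Take one good with $B_1=0$, $B_2=10$, $v_{11}=v_{21}=1$. The outcome $\alpha_2=1$, price $1$, $x_{21}=1$, $x_{11}=0$ is an FPPE for every $\alpha_1\in[0,1]$. Your first option is the right way to close this case: fix a convention such as $\alpha_i=0$ for zero-budget buyers, or simply assume all budgets are positive. The convention is consistent, because lowering a zero-budget buyer's multiplier to $0$ preserves all six properties. Such a buyer can never be the unique top bidder on a good with positive price.
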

\begin{proof}
  Let \((\alpha,p,x)\) be an FPPE for the original budgets.
  Notice the only constraint which \((\alpha,p,x)\) might not satisfy when budgets get increased is {\em no unnecessary pacing}.
  So to find the FPPE we might need to weakly increase the pacing multipliers and thus the prices.
\end{proof}
\section{Omitted Proof of Section~\ref{s:concave}}\label{app:concave}

\thmApprox*
\begin{proof}
  We use the KKT conditions of the EG program and its Fenchel dual.
  Let $(x,u,\delta)$ and $(\alpha,p)$ be optimal primal and dual solutions.
  By the KKT conditions, we get the following set of equations:
  \begin{align}
    \frac{B_i}{u_i}                                      & = \alpha_i, \qquad 0<\alpha_i\le 1, \label{eq:kkt1} \\
    \alpha_i v'_{ij}(x_{ij})                             & \le p_j,
                                                      \quad \text{with equality if } x_{ij}>0, \label{eq:kkt2}                                                   \\
    \alpha_i\Big(u_i-\sum_j v_{ij}(x_{ij})-\delta_i\Big) & =0, \label{eq:kkt3}                                 \\
    p_j\Big(1-\sum_i x_{ij}\Big)                         & =0, \label{eq:kkt4}                                 \\
    \delta_i(1-\alpha_i)                                 & =0. \label{eq:kkt5}
  \end{align}

  \medskip
  \noindent
  \begin{enumerate}
    \item Let $x_{ij} > 0$ then by Equation~\eqref{eq:kkt2} we are ensured that
          \[\alpha_i \cdot v'_{ij}(x_{ij}) = p_j\]
          Since $\alpha_i \leq 1$ we are ensured that $v'_{ij}(x_{ij}) \geq p_j$ and by the concavity of $v_{ij}(\cdot)$ and the fact that $v_{ij}(0) = 0$, we get that
          \[v_{ij}(x_{ij}) \geq x_{ij} \cdot v'_{ij}(x_{ij}) \geq x_{ij} \cdot p_j\]
          Using \eqref{eq:kkt2} and $\alpha_i\le 1$ we finally get
          $v_{ij}(x_{ij})
          \geq
          p_j \cdot x_{ij}.$
          \smallskip
          \smallskip

    \item In case $\alpha_i=1$ then Equation~\eqref{eq:kkt1} implies that $u_i=B_i$.  Then by (1) we have that
          \[
          \sum_{j \in [m]} p_j x_{ij}
          \le
          \sum_{j \in [m]} v_{ij}(x_{ij})
          \le
          u_i
          =
          B_i.
          \]
          If $\alpha_i<1$ then Equation~\eqref{eq:kkt5} implies that
          $\delta_i = 0$. Then by Equation~\eqref{eq:kkt3} we have that
          \[u_i = \sum_{j \in [m]} v_{ij}(x_{ij}).\] Then by \eqref{eq:kkt1} we have that $ \sum_{j} v_{ij}(x_{ij}) > B_i$. By Equation~\eqref{eq:kkt2} we get that,
          \[
          \sum_{j \in [m]}p_j\cdot x_{ij} = \alpha_i\cdot \sum_{j \in [m]}  v'_{ij}(x_{ij}) \cdot x_{ij} \leq
          \alpha_i \cdot \sum_{j \in [m]} v_{ij}( x_{ij} ) = \alpha_i \cdot u_i = B_i.
          \]

          \smallskip
          \smallskip
          \smallskip
          \smallskip

    \item By Equation~\eqref{eq:kkt4} we have that if $p_j>0$ then $1-\sum_{i\in [n]} x_{ij}=0$.

          \smallskip
          \smallskip
          \smallskip
          \smallskip

    \item If $\alpha_i=1$ then $x_{ij}:= \mathrm{argmax}_{x \geq 0}(v_{ij}(x) - p_j x)$. If $\alpha_i<1$
          then $\delta_i=0$ and by Equation~\eqref{eq:kkt1} we have that $u_i=B_i/\alpha_i$. By concavity of $v_{ij}(\cdot)$ we have that
          \[
          v'_{ij}(1) \cdot x_{ij}
          \le
          v_{ij}(x_{ij})
          \le
          v'_{ij}(0)\cdot x_{ij}.
          \]
          By Equation~\eqref{eq:kkt2} we get that for $x_{ij} > 0$,
          \[
          p_j = \alpha_i\cdot  v'_{ij}(x_{ij}) \geq
          \alpha_i \cdot v'_{ij}(1) \geq \frac{1}{\rho} \alpha_i \cdot v'_{ij}(0).
          \]
          Hence,
          \[
          \sum_{j \in [m]} p_j x_{ij}
          \ge
          \frac{1}{\rho} \cdot \alpha_i \sum_{j \in [m]} v_{ij}(x_{ij})
          =
          \frac{1}{\rho} \cdot \alpha_i u_i
          =
          \frac{B_i}{\rho}.
          \]
          We now cosnider the case where the function $x_{ij}\cdot v'_{ij}(x_{ij})$ is non-decreasing. In this case we have that
          \begin{eqnarray*}
            v_{ij}(x_{ij}) &=& \int_{0}^{x_{ij}}v'_{ij}(x)~\partial x\\
                           &=& \int_{0}^z v'_{ij}(x)~\partial x + \int_{z}^{x_{ij}}v'_{ij}(x)~\partial x\\
                           &\leq& z \cdot v'_{ij}(0) + \int_{z}^{x_{ij}}v'_{ij}(x)~\partial x ~~~\quad \text{by concavity}\\
                           &\leq& z \cdot v'_{ij}(0) + x_{ij}\cdot v'_{ij}(x_{ij}) \int_{z}^{x_{ij}}\frac{1}{x}~\partial x ~~~\quad \text{ }x_{ij}v'(x_{ij}) \text{ is not decreasing}\\
                           &=& z \cdot v'_{ij}(0) + x_{ij}\cdot v'_{ij}(x_{ij})\cdot \log\left(\frac{x_{ij}}{z} \right) \text{ for any }z \in [0,x_{ij}]
          \end{eqnarray*}
          As a result, we have that
          \[ v_{ij}(x_{ij}) \leq \min_{z \in [0,x_{ij}]}\left\{z \cdot v'_{ij}(0) + x_{ij}\cdot v'_{ij}(x_{ij})\cdot \log\left(\frac{x_{ij}}{z} \right) \right \}\]
          Notice that the quantity $z \cdot v'_{ij}(0) + x_{ij}\cdot v'_{ij}(x_{ij})\cdot \log\left(\frac{x_{ij}}{z} \right)$ is minimized for $z := \frac{v'_{ij}(0)}{v'_{ij}(0)} \cdot x_{ij}$. As a result, we get that
          \[ v_{ij}(x_{ij}) \leq   x_{ij}\cdot v'_{ij}(x_{ij})\cdot\left( 1+  \log\left(\frac{v'_{ij}(0)}{v'_{ij}(1)} \right) \right) \]

          \noindent As before the last equation implies that
          \[\sum_{j \in [m]} x_{ij} \cdot p_j =\sum_{j \in [m]}  x_{ij}\cdot v'_{ij}(x_{ij}) \geq \sum_{j \in [m]} \frac{v_{ij}(x_{ij})}{\rho} \geq \frac{1}{\rho} \alpha_i \cdot \sum_{j \in [m]} v_{ij}(x_{ij}) = \frac{1}{\rho} \cdot \alpha_i \cdot u_i = \frac{B_i}{\rho}\]
  \end{enumerate}
\end{proof}
\LemPOALWConcave*
\begin{proof}
  To simplify notation let $v_i(x_i) := \sum_{j\in [n]} v_{ij} (x_{ij})$. Without loss of generality, we assume that $v_i(o^\star_i)\leq B_i$ for every buyer $i\in [n]$ since otherwise we can simply decrease an $x_{ij} > 0$.\\
  \smallskip
  \smallskip

  \noindent Let $V\subseteq [n]$ denote the subset of buyers
  that spend at most $B_i / \rho$. In particular,
  \[ V:= \{i \in [n]~:~ \sum_{j \in [m]} p_j \cdot x_{ij} \leq B_i /\rho\}.\]
  Any buyer $i \notin V$ spends at least $B_i /\rho$ of his budget, meaning that by individual rationality, for each buyer $i \notin V$,
  \[\sum_{j \in [m]} v_{ij}(x_{ij}) \geq \frac{B_i}{\rho}.\]
  Then by definition of the Liquid Social Welfare,
  \begin{align}\label{eq:LW-x}
    \lw(\x) & = \sum_{i\in [n]}{\min\left\{B_i,v_i(x_i)\right\}} \geq \frac{1}{\rho} \cdot \sum_{i\in V}{B_i}+\sum_{i\in [n]\setminus V}{v_i(x_i)}
  \end{align}
  and
  \begin{align}\label{eq:LW-xstar}
    \lw(\mathrm{o}^\star) & =\sum_{i\in [n]} \min\{B_i,{v_i(o^\star_i)}\} \leq \sum_{i\in V}{B_i}+\sum_{i\in [n]\setminus V}{v_i(o^\star_i)}
  \end{align}
  \noindent Given an buyer $i\in V$. We distinguish between two cases. If $v_i(o^\star_i)< \sum_{j\in [m]}{o^\star_{ij}\cdot p_j}$ then
  \begin{align}\label{eq:utility-positive}
    v_i(x_i)-\sum_{j\in [m]}{x_{ij}\cdot p_j} & \geq 0 >v_i(o^\star_i)-\sum_{j\in [m]}{o^\star_{ij}\cdot p_j}.
  \end{align}
  If $v_i(o^\star_i)\geq \sum_{j\in [m]}{o^\star_{ij}\cdot p_j}$ then our assumption about the allocation $\mathrm{o}^\star$ yields
  \begin{equation}\label{eq:100}
    \sum_{j\in [m]}{o^\star_{ij}\cdot p_j} \leq v_i(o^*_i) \leq B_i
  \end{equation}
  \noindent The latter means that the allocation $\mathbf{o}^\star$ satisfies the budget constraint for buyer $i \in [n]$. Since buyer $i \in V$ spends at most $B_i/\rho$, then Property~(4b) of Theorem~\ref{t:approx} implies that

  \begin{align}\label{eq:utility-maximizing}
    v_i(x_i)-\sum_{j\in [m]}{x_{ij}\cdot p_j} & \geq v_i(o^\star_i)-\sum_{j\in [m]}{o^\star_{ij}\cdot p_j}.
  \end{align}
  As a result, for any buyer $i \in V$ we get that
  \begin{align}\label{eq:competitive-equilibrium-condition}
    v_i(o^\star_i) & \leq v_i(x_i)+\sum_{j\in [m]}{p_j\cdot (o^\star_{ij}-x_{ij})},
  \end{align}
  Using Equation~(\ref{eq:competitive-equilibrium-condition}), Equation~(\ref{eq:LW-xstar}) yields
  \begin{align}\nonumber
    \lw(\mathrm{o}^\star) & \leq \sum_{i\notin V}{B_i}+\sum_{i\in V}{v_i(x_i)}+\sum_{i\in  V}{\sum_{j\in [m]}{p_j\cdot (o^\star_{ij}-x_{ij})}} \\\label{eq:bound-liquid-plus-prices}
             & \leq \rho \cdot \lw(\x)+\sum_{i\in  V}{\sum_{j\in [m]}{p_j\cdot (o^*_{ij}-x_{ij})}}.
  \end{align}
  The last equality is due to Equation~(\ref{eq:LW-x}). Finally, we show that
  \begin{align}\label{eq:bound-for-prices}
    \sum_{i\in [n]\setminus V}{\sum_{j\in [m]}{p_j\cdot (o^\star_{ij}-x_{ij})}} & \leq \lw(\x).
  \end{align}
  The proof of the theorem will then follow by Equation~(\ref{eq:bound-liquid-plus-prices}). We have
  \begin{align*}
    \sum_{i\in  V}{\sum_{j\in [m]}{p_j\cdot (o^\star_{ij}-x_{ij})}} & = \sum_{j\in [m]}{p_j \sum_{i\in  V}{(o^\star_{ij}-x_{ij})}}         \\
                                                  & \leq \sum_{j\in [m]}{p_j\cdot \left(1- \sum_{i\in V}{x_{ij}}\right)} \\
                                                  & = \sum_{j\in [m]}{p_j\sum_{i\notin V}{x_{ij}}}
                                                    = \sum_{i\notin V}{\sum_{j\in [m]}{p_j\cdot x_{ij}}}                                                                                   \\
                                                  & \leq \sum_{i\notin V}{B_i}
                                                    \leq \lw(\x),
  \end{align*}
  as desired by Equation~(\ref{eq:bound-for-prices}). The first and third equalities are obvious. The first inequality follows by the feasibility of allocation $\mathbf{o}^\star$ which implies that $\sum_{i\in V}{o^\star_{ij}}\leq 1$. The second inequality is due to the budget constraint for buyer $i\in V$ (i.e., $\sum_{j\in [m]}{p_j\cdot x_{ij}}\leq B_i$) and the third one is due to Equation~(\ref{eq:LW-x}). The second equality is due to the fact that $p_j\cdot \left(1-\sum_{i\notin V}{x_{ij}}\right)=p_j\sum_{i\in V}{x_{ij}}$ for every good $j\in [m]$. To see why this holds, notice that this is trivially the case if $p_j=0$. Otherwise, since $\x$ is a competitive equilibrium, the condition $p_j>0$ for good $j\in [m]$ implies that $\sum_{i\in [n]}{x_{ij}}=1$ and, hence, $1-\sum_{i\in V}{x_{ij}} = \sum_{i\notin V}{x_{ij}}$.
\end{proof}

\end{document}